\documentclass[a4paper,twocolumn,11pt,accepted=2024-11-15]{quantumarticle}
\pdfoutput=1
\usepackage[utf8]{inputenc}
\usepackage[english]{babel}
\usepackage[T1]{fontenc}
\usepackage{amsmath}
\usepackage{amssymb}
\usepackage{amsthm}
\usepackage{hyperref}
\usepackage{balance}
\usepackage{tikz}
\usepackage{lipsum}
\usepackage[numbers,sort&compress]{natbib}

\newcommand{\norm}[1]{\mathinner{||{#1}||_1}}
\newcommand{\normsup}[1]{\mathinner{||{#1}||_{\infty}} }
\newcommand{\normeu}[1]{\mathinner{||{#1}||}}

\newtheorem{theorem}{Theorem}[section]
\newtheorem{lemma}{Lemma}[section]
\newtheorem{definition}{Definition}[section]

\begin{document}

\title{Time dependent Markovian master equation beyond the adiabatic limit}

\author{Giovanni Di Meglio}
\affiliation{Institut für Theoretische Physik and IQST,
Albert-Einstein-Allee 11, Universität Ulm, D-89081 Ulm, Germany}
\email{giovanni.di-meglio@uni-ulm.de}
\orcid{0000-0003-3066-9293}
\author{Martin B. Plenio}
\orcid{0000-0003-4238-8843}
\affiliation{Institut für Theoretische Physik and IQST,
Albert-Einstein-Allee 11, Universität Ulm, D-89081 Ulm, Germany}
\author{Susana F. Huelga}
\affiliation{Institut für Theoretische Physik and IQST,
Albert-Einstein-Allee 11, Universität Ulm, D-89081 Ulm, Germany}
\orcid{0000-0003-1277-8154}

\maketitle

\begin{abstract}
We derive a Markovian master equation that models the evolution of systems subject to driving and control fields. Our approach combines time rescaling and weak-coupling limits for the system-environment interaction with a secular approximation.
The derivation makes use of the adiabatic time-evolution operator in a manner that allows for the efficient description of strong driving, while recovering the well-known adiabatic master equation in the appropriate limit.
To illustrate the effectiveness of our approach, firstly we apply it  to the paradigmatic case of a two-level (qubit) system subject to a form of periodic driving that remains unsolvable using a Floquet representation and lastly we extend this scenario to the situation of two interacting qubits, the first driven while the second one directly in contact with the environment.
We demonstrate the reliability and broad scope of our approach by benchmarking the solutions of the derived reduced time evolution against numerically exact simulations using tensor networks. Our results provide rigorous conditions that must be satisfied by phenomenological master equations for driven systems that do not rely on first-principles derivations. 
\end{abstract}

\section{Introduction}

Any realistic treatment of a quantum system cannot ignore the presence of additional interacting systems that are outside of our control, commonly denoted as the system's environment. When the additional degrees of freedom of the environment are taken into account, the unitary evolution of a closed quantum system must be modified to account for the system-environment interactions, which may cause e.g. decoherence and dissipation.

To obtain a description of the dynamics of the open system, we can derive and solve master equations by tracing out the environmental degrees of freedom from an initial system-environment Hamiltonian. The Gorini-Kossakowski-Lindblad-Sudarshan (GKLS) equation \cite{Lindblad, GKS} is a widely-used and celebrated example of master equation. Its standard microscopic derivation relies on several assumptions, including the Born-Markov and secular approximations, the latter correspondent to the requirement of large spacing for the system Bohr frequencies \cite{ Petruccione,Rivas_1,Rivas_2}. However, this derivation was essentially tailored to time-independent Hamiltonians and it becomes considerably more challenging for time-dependent cases. 
The primary obstacle was the decomposition of the interaction picture Hamiltonian in a manner that transforms the master equation into the Lindblad form, which guarantees complete positivity of the dynamics at all times. As a result, less research was initially conducted in this area and only during the last decade prominent results started to appear in the literature.

One of the earliest attempts to address this problem was the pioneering work by Davies and Spohn in \cite{Davies_3}. In this work, the authors proposed a first extension of the time-independent framework assuming sufficiently weak driving, in order to formulate a linear response theory in the presence of a generic Markovian environment. Shortly after, Alicki \cite{Alicki_0} employed a Markovian master equation with an external Hamiltonian perturbation as paradigmatic model of a quantum heat engine, assuming that the driving is sufficiently slow to be considered constant on a time-scale much larger than the bath relaxation time but shorter than the characteristic system dynamics. Under these conditions, the conventional time-independent kernel can be replaced with an analogous expression written in the instantaneous eigenbasis of the system Hamiltonian. Subsequent analyses include results derived in connection with fault-tolerance in adiabatic quantum computation and quantum error correction, as in \cite{Childs, Alicki_1}. In the following years, apart from some phenomenological approaches, only two microscopic derivations have been studied systematically. The first one by Albash et al. \cite{Zanardi} requires the adiabatic condition on the closed system dynamics, therefore named adiabatic master equation after that, and has found a wide range of applications and further developments so far, especially for studying the impact of noise on adiabatic dynamics (as in \cite{Nalbach}); the second concerns the periodic driving treated by means of Floquet theory \cite{Chu_1,Hanggi_1, Breuer, Chu_2, Alicki_3,Floquet,Hartmann}, that allows to extend the ordinary microscopic derivation without any additional complication, with the drawback of being in practice applicable only to very few models whose Floquet representation is known.

Recently, starting from the work by Yamaguchi et al. in \cite{Yamaguchi}, different solutions have been proposed in order to tackle the problem for strongly driven systems, i.e. beyond the adiabatic regime. We note in particular Dann et al. \cite{Dann}, who made use of a Lie algebraic structure  generated by the time-evolution operator in order to obtain a suitable Markovian master equation for arbitrary driving, which includes as limiting cases both the adiabatic master equation and the Floquet theory based approach, but also
Wang et al. \cite{Wang}, where a different master equation is derived heuristically by using the Nakajima-Zwanzig approach in the adiabatic reference frame and applied to the dissipative Landau-Zener model. 

Finally, we also mention an alternative line of works, aimed at 
deriving Markovian master equations circumventing the secular approximation (\cite{Wacker,Lidar,Davidovic, Nathan_1, Nathan_2, Konig}). Most of these approaches are based on manipulations of the Bloch-Redfield master equation in order to suitably overcome the violation of positivity, while keeping the so called secular contributions.

Despite the extensive literature on the topic, we contend that additional  methods may provide a wider understanding of sufficient conditions that allow to achieve a Markovian evolution in the presence of control or driving fields and eventually fill the gap where the previous master equations could be extremely hard to handle.

In this work, we derive a Markovian master equation whose validity holds even if the system is strongly driven. In particular, such master equation represents a good approximation, in a sense that will be precisely clarified, of the exact reduced dynamics, under the conditions of 
weak system-environment coupling, fast decay of the environment correlation functions and sufficiently large spacing for the system Bohr frequencies, as explained more in details in the following.

Our derivation starts from the 
Nakajima-Zwanzig equation and draws inspiration from the approach outlined by Davies in the seminal works
\cite{Davies_1,Davies_2} (see also \cite{Rivas_1, Spohn}). The crucial idea is to apply a time-rescaling with respect to the dimensionless system-environment coupling and then to study the uniform convergence of the system density matrix in the weak-coupling limit. For 
time-dependent Hamiltonians, we show that this rescaling can be consistently applied provided that the system parameters are renormalised to reabsorb the coupling dependence in the Hamiltonian. 
By implementing our rescaling procedure and utilising the weak-coupling limit for the environmental interaction, we can apply the method of stationary phase to introduce a secular approximation. This approach leads to a Markovian master equation in Lindblad form, featuring time-dependent jump operators and positive dissipation rates.
Moreover, this procedure also delivers an estimation of the range of validity of the master equation in terms of the coupling strength of the system-environment interaction and the system parameters. 

We substantiate these claims with the application of our master equation to the analysis of a periodically driven two-level system.
It is well known that, for periodic Hamiltonians, the Floquet theory based approach leads to a Lindblad master equation with a derivation that matches that for a time-independent scenario, as long as the Floquet representation of the system time-evolution operator is known (\cite{Floquet}).
However, in many cases this representation is hard to obtain, which 
limits significantly the practical implementation of this approach. 

In contrast, we demonstrate that our framework is specifically tailored for the effective study of strong periodic drivings. This approach provides a valuable alternative method to analyse the behaviour of populations and coherences in the instantaneous eigenbasis of the system Hamiltonian. 
In the strong driving regime, we show that the driven dissipative dynamics described by our master equation is responsible for enhanced generation of coherence as compared with the purely unitary dynamics, the latter in agreement with results provided in \cite{Dann}.
Ultimately, we check that in the adiabatic limit our master equation correctly reproduces the equation in \cite{Zanardi}.

As last step, we extend the analysis to the situation of two interacting qubits, with one of them subject to a periodic driving and the other in direct contact with the environment. We analyse the combined effect of driving and dissipation in the strong driving regime, by looking at local observables of the two qubits. Furthermore, the limiting case of weakly interacting qubits is treated in detail. In particular, it is shown in this regime that the master equation fails to correctly reproduce the dynamics of the system, as result of the breakdown of the full-secular approximation. 

Our results are supported by numerically exact simulations obtained by propagating the joint dynamics of system and environment using tensor networks. 

This paper is organised as follows. We begin by presenting the general derivation of the quantum master equation (section 2). We then move to discuss the application to the paradigmatic case of a periodically driven qubit coupled to a bosonic environment and following the case of two interacting qubits, proceeding in addition to benchmark the results of our derived master equations against numerically exact simulations (section 3 and 4). We finally present the conclusions and give further prospects on the present topic (section 5).

A quick summary of the main results obtained are reported in the paragraph at the end of section 2, where the reader can find the necessary information to calculate and make use of the master equation derived in this paper.

Additional technical details on the considered methodology can be found in the appendices (A-C), including a succinct discussion on tensor network simulations (appendix D).

\section{Derivation of the reduced system's dynamics. General theory}

For completeness, let us start by giving some definitions concerning dynamical maps  that will be used throughout this paper, mainly in order to fix the notation. The reader familiar with the topic can directly jump to the subsection 'Characterisation of the problem' in the next page.
Let $\mathcal{H}$ be a separable Hilbert space and $L(\mathcal{H})=\{ X:\mathcal{H}\rightarrow \mathcal{H}| X \text{\space linear} \}$ the space of linear operators acting on the Hilbert space, which we can equip with the trace norm $\norm{X}=tr[\sqrt{X^{\dagger}X}]$, endowing the structure of Banach space $\mathcal{B}(\mathcal{H})=(L(\mathcal{H}),\norm{.})$ of trace-class operators. 
\begin{definition}
A quantum state (or density matrix) is a positive semidefinite ($\rho\ge 0$) and self-adjoint operator $\rho\in\mathcal{B}(\mathcal{H})$ with $\norm\rho=1$. The set of quantum states will be indicated by $\mathcal{B}^+(\mathcal{H})\subset \mathcal{B}(\mathcal{H})$.
\end{definition}
The evolution of a quantum system is determined by linear transformations belonging to $\mathcal{B}^*(\mathcal{H})=\{ T: \mathcal{B}(\mathcal{H})\rightarrow \mathcal{B}(\mathcal{H})| T \text{\space linear} \}$, which is a Banach space with respect to the norm $\normsup T=\sup\limits_{X\in \mathcal{B}(\mathcal{H}), X\neq 0}\frac{\norm{T(X)}}{\norm X}$, with the property $\normsup{T_1 T_2}\le \normsup{T_1}\normsup{T_2}$.
We state here some definitions of convergence for operators, that will be useful later.
\begin{definition}
Given a  Banach space $\mathcal{B}$ and a sequence of operators $\{T_n\}\in \mathcal{B}^*$, if some $T\in\mathcal{B}^*$ exists such that
\begin{equation}
\lim\limits_{n\rightarrow\infty}\normsup{T_n-T}=0,
\end{equation}
then we say that the sequence converges uniformly to $T$.
\end{definition}
\begin{definition}
Given a  Banach space $\mathcal{B}$ and a sequence of operators $\{T_n\}\in \mathcal{B}^*$, if some $T\in\mathcal{B}^*$ exists such that $\forall X\in\mathcal{B}$ we have
\begin{equation}
\lim\limits_{n\rightarrow\infty}\norm{T_n X-T X}=0,
\end{equation}
then we say that the sequence converges strongly to $T$.
\end{definition}
Note that strong convergence in $\mathcal{B}^*$ requires convergence in $\mathcal{B}$ with respect to the norm $\norm{.}$ for the sequence given by $\{ T_n X\} \in \mathcal{B}$. Moreover, uniform convergence
in $\mathcal{B}^*$ always implies strong convergence in $\mathcal{B}^*$.
\begin{definition}
Let $L$ be a vector space,  two norms $||.||_1$ and $||.||_2$ on $L$ are equivalent if there are constants $C_1, C_2\ge 0$ such that $C_1||X||_1\le ||X||_2\le C_2||X||_1$, $\forall X\in L$.
\end{definition}
In finite dimensional spaces all norms are equivalent, therefore proving the convergence with respect to a given norm automatically guarantees the convergence in all other norms.

The transformations that map states into states need to satisfy some properties. 
\begin{definition}
$\Lambda\in \mathcal{B}^*(\mathcal{H})$ is a positive map if $\Lambda(X)\ge 0$ for any $X\ge 0$.
\end{definition}
\begin{definition}
$\Lambda\in \mathcal{B}^*(\mathcal{H})$ is k-positive if $\Lambda\otimes id_k : \mathcal{B}(\mathcal{H})\otimes \mathcal{M}_k(\mathbb{C}) \rightarrow \mathcal{B}(\mathcal{H})\otimes \mathcal{M}_k(\mathbb{C})  $ is a positive map, where $id_k$ is the identity in the space of $k\times k$ complex matrices $\mathcal{M}_k(\mathbb{C})$.
\end{definition}
\begin{definition}
$\Lambda\in \mathcal{B}^*(\mathcal{H})$ is completely positive (CP) if it is k-positive $\forall k\in \mathbb{N}$.
\end{definition}
Physically, the CP condition can be argued to be fundamental as the requirement of positivity alone is not sufficient to guarantee that a quantum state is mapped into a positive semidefinite operator, when we include the possibility of  correlations with additional degrees of freedom.
\begin{definition}
\label{def_map}
$\Lambda\in \mathcal{B}^*(\mathcal{H})$ is a quantum evolution map (or dynamical map) if it is completely positive (CP), trace preserving (TP) and Hermiticity preserving, i.e. $\Lambda(\mathcal{B}^+(\mathcal{H}))=\mathcal{B}^+(\mathcal{H})$. 
\end{definition}
We are going to consider now families of quantum evolutions which depend on some real parameter. Moreover, henceforth we assume a finite dimensional Banach space $\mathcal{B(H)}$. 
\begin{definition}
An evolution family is a differentiable family of two-parameter maps given by 
\begin{equation}
\Big\{ \Lambda_{t,s}=\mathcal{T}\exp\Big\{ \int\limits_s^t du \mathcal{L}(u)\Big\} \Big\}_{t,s\ge 0}, 
\end{equation}
where $\mathcal{T}$ is a time-ordering, $\mathcal{L}(t)\in \mathcal{B^*(H)}$, which satisfy for all $t\ge r\ge s$
\begin{equation}
\begin{aligned}
\Lambda_{t,s}&=\Lambda_{t,r}\Lambda_{r,s},   \\
\Lambda_{s,s}&=\mathbb{I} .\\
\end{aligned}
\end{equation}
The first property is frequently called divisibility.
\end{definition}
The expression for the evolution family has to be understood as a Dyson series, which is always convergent for finite dimensional spaces, provided the generator $\mathcal{L}(t)$ to be bounded.

Among the classes of possible evolutions for a quantum state described by evolution families we are interested in the following:
\begin{definition}
An evolution family $\{ \Lambda_{t,s}=\mathcal{T}\exp\Big\{ \int\limits_s^t du \mathcal{L}(u)\Big\} \}_{t,s\ge 0}$ which fulfills definition \ref{def_map} $\forall t,s$ with $t\ge s\ge 0$ is called Markovian or equivalently CP-divisible. 
\end{definition}
The familiar case where the dynamical map $\Lambda_{t,s}$ is assumed to obey a semigroup composition law for arbitrary intermediate $r$ is a special instance where the resulting dynamics becomes time homogeneous.

A full characterisation of the generator for Markovian quantum processes is provided by the following important result (see \cite{Rivas_1} for a derivation that generalises \cite{Lindblad,GKS}):
\begin{theorem}
\label{th_1}
The evolution family $\{ \Lambda_{t,s}=\mathcal{T}\exp\Big\{ \int\limits_s^t du \mathcal{L}(u)\Big\} \}_{t,s\ge 0}$ is Markovian if and only if 
\begin{equation}
\begin{aligned}
\label{gkls_gen}
\mathcal{L}(t)\rho=&-\imath[H(t),\rho]+\sum\limits_{k\in I}\gamma_k(t)\Big( V_k(t)\rho V_k^{\dagger}(t) \\
&-\frac{1}{2}\{ V_k^{\dagger}(t)V_k(t),\rho\}\Big) ,
\end{aligned}
\end{equation}
where $H(t)\in L(\mathcal{H})$ is self-adjoint, $V_k(t)\in L(\mathcal{H})$ are called jump operators, $\gamma_k(t)\ge 0$ $\forall k, t$ are the so-called dissipation rates and $I$ is a set of indices. $\mathcal{L}(t)$ in \eqref{gkls_gen} is also called GKLS generator.
\end{theorem}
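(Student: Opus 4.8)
The plan is to prove both implications by reducing, at each fixed instant $t$, to the time-independent GKLS characterization cited from \cite{Lindblad, GKS} via the Choi--Jamio\l{}kowski isomorphism. Recall that on the finite-dimensional space $\mathcal{H}$ a linear map $\Phi\in\mathcal{B}^*(\mathcal{H})$ is CP if and only if its Choi matrix $C[\Phi]=(\Phi\otimes \mathrm{id})(|\Omega\rangle\langle\Omega|)$ is positive semidefinite, where $|\Omega\rangle=\sum_i |i\rangle\otimes|i\rangle$; moreover $\Phi$ is Hermiticity preserving iff $C[\Phi]=C[\Phi]^\dagger$, and trace preserving iff $\mathrm{Tr}_1 C[\Phi]=\mathbb{I}$. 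Since $\mathcal{L}(t)$ is assumed bounded and the generator is recovered as $\mathcal{L}(t)=\lim_{\epsilon\to0^+}\epsilon^{-1}(\Lambda_{t+\epsilon,t}-\mathbb{I})$, the whole argument is local in $t$, so it suffices to work at a fixed time.

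For the \emph{if} direction, suppose $\mathcal{L}(t)$ has the stated form for every $t$. Freezing the generator at a point $t_0$, the time-independent theorem guarantees that the semigroup $e^{\tau\mathcal{L}(t_0)}$ is CP and TP for all $\tau\geq0$, because $\mathcal{L}(t_0)$ is precisely a GKLS generator. I would then recover $\Lambda_{t,s}$ as a Lie--Trotter/Dyson limit of time-ordered products $\prod_j e^{\mathcal{L}(u_j)\Delta u_j}$ of such frozen semigroup steps. Each factor is CP, a composition of CP maps is CP, and the set of CP maps is closed, so the limit $\Lambda_{t,s}$ is CP for all $t\geq s$; together with the divisibility $\Lambda_{t,s}=\Lambda_{t,r}\Lambda_{r,s}$ this is exactly CP-divisibility, i.e. Markovianity.

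For the \emph{only if} direction, assume the family is Markovian. Fix $t$; for every $\epsilon>0$ the map $\Lambda_{t+\epsilon,t}$ is CP, hence its Choi matrix $C[\Lambda_{t+\epsilon,t}]$ is positive semidefinite. Expanding to first order, $\Lambda_{t+\epsilon,t}=\mathbb{I}+\epsilon\,\mathcal{L}(t)+o(\epsilon)$, gives $C[\Lambda_{t+\epsilon,t}]=|\Omega\rangle\langle\Omega|+\epsilon\,C[\mathcal{L}(t)]+o(\epsilon)$. The zeroth-order term is the rank-one projector onto $|\Omega\rangle$; positivity of the full matrix for all small $\epsilon$ therefore forces $\langle\psi|C[\mathcal{L}(t)]|\psi\rangle\geq0$ for every $|\psi\rangle\perp|\Omega\rangle$, the \emph{conditional positivity} of $C[\mathcal{L}(t)]$. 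Self-adjointness of $C[\mathcal{L}(t)]$ follows from Hermiticity preservation and the partial-trace identity from trace preservation.

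Finally I would convert this conditional positivity into the canonical GKLS form exactly as in the time-independent proof: diagonalising the restriction of $C[\mathcal{L}(t)]$ to the complement of $|\Omega\rangle$ yields nonnegative eigenvalues, which become the rates $\gamma_k(t)\geq0$, and the associated eigenvectors, reshaped through the isomorphism, become the jump operators $V_k(t)$. The trace-preservation constraint fixes the anticommutator terms $-\tfrac12\{V_k^\dagger V_k,\rho\}$, and the remaining traceless self-adjoint part is absorbed into the Hamiltonian commutator $-\imath[H(t),\rho]$ with $H(t)$ self-adjoint; this reproduces \eqref{gkls_gen}. The main obstacle I anticipate is the \emph{only if} step: making the first-order Choi expansion rigorous and extracting conditional positivity uniformly from the one-parameter family of constraints, while simultaneously tracking enough regularity in $t$ (continuity and boundedness of $\mathcal{L}(t)$) to ensure that the extracted $H(t)$, $V_k(t)$, $\gamma_k(t)$ are well-defined and that the time-ordered exponential defining the family is meaningful.
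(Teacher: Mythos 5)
Your proof is correct and follows essentially the same route as the derivation the paper relies on: the paper does not prove Theorem \ref{th_1} itself but cites \cite{Rivas_1}, where the argument is exactly yours --- the ``if'' direction via frozen-generator (Trotter-type) products of CP semigroup steps, and the ``only if'' direction via conditional positivity of the Choi matrix of $\mathcal{L}(t)$ extracted from the first-order expansion of $\Lambda_{t+\epsilon,t}$, followed by the standard reconstruction of rates, jump operators, and Hamiltonian. The details you flag as remaining (rigorous $o(\epsilon)$ control and the reshaping of eigenvectors into traceless jump operators, with trace preservation fixing the anticommutator part) are routine in the finite-dimensional setting the paper assumes, so there is no genuine gap.
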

Note that for a time homogeneous dynamics, jump operators and dissipative rates become time independent and we recover the functional form of the dissipator familiarly referred to as "Lindblad form". This ends our compilation of preliminary notions. 
Now we are ready to introduce our problem.

\textbf{Characterisation of the problem}.
Non-unitary dynamics as the one described by Theorem \ref{th_1} arises as the result of the interaction between our system of interest and additional degrees of freedom which are not under our control. 
Let $\mathcal{B}(\mathcal{H}_{S})$ be the finite dimensional Banach space of a first quantum system, simply called system, and $\mathcal{B}(\mathcal{H}_B)$ of a second one called environment, which needs not to be finite dimensional. We suppose that the evolution of any quantum state $\rho\in\mathcal{B}(\mathcal{H}_S\otimes\mathcal{H}_B)$ is described by the unitary dynamics
\begin{equation}
\rho\rightarrow \rho(t)=\mathcal{T}\exp\Big\{\int\limits_0^t ds Z(s)\Big\}\rho ,
\end{equation}
with generator $Z(t)\rho=-\imath[H(t),\rho]$  and Hamiltonian
\begin{equation}
\label{general_h}
H(t)=H_S(t)+H_B+gH_I,
\end{equation}
where $g$ is a dimensionless coupling constant that defines the strength of
the system-environment interaction, $H_S(t)\in L(\mathcal{H}_S)$ analytic in $t$ and non-degenerate, $H_B\in L(\mathcal{H}_B), H_I\in L(\mathcal{H}_S\otimes\mathcal{H}_B)$. A general form for the interaction Hamiltonian that we assume is $H_I=\sum_{\alpha} A_{\alpha}\otimes B_{\alpha}$ with $A_{\alpha},B_{\alpha}$ self-adjoint operators and $\alpha$ runs over a finite set.
Clearly, as $\mathcal{B}(\mathcal{H}_S)$ is finite dimensional no problem on the operators domain arises.
More general cases where also $H_B$ or $H_I$ are time-dependent operators will not be treated in this work. We are interested in the reduced density matrix $\rho_S(t)=tr_B[\rho(t)]$, which can be obtained by using the projection operator approach introduced by Nakajima and Zwanzig \cite{Nakajima, Zwanzig} (see also \cite{Rivas_1} for more details).

We introduce two orthogonal projectors on $\mathcal{B}(\mathcal{H}_S\otimes\mathcal{H}_B)$, namely $\mathcal{P}\rho=tr_B[\rho]\otimes \rho_{th}$
and $\mathcal{Q}=\mathbb{I}-\mathcal{P}$, where $\rho_{th}=1/Z_B e^{-\beta H_B}$ is a thermal state for the environment; the approach works for any reference state $\rho_B$ for the environment, however one useful assumption 
that we make is $[\rho_B, H_B]=0$, hence for concreteness we will work assuming it is a Gibbs state. 
The equation, written in the interaction picture with respect to the free evolution $H_S(t)+H_B$ (henceforth indicated by the tilde), reads
\begin{equation}
\frac{d}{dt}\mathcal{P}\tilde{\rho}(t)=g^2\int\limits_0^t ds \mathcal{P}\mathcal{V}(t)\mathcal{G}(t,s)\mathcal{V}(s)\mathcal{P}\tilde{\rho}(s)
\end{equation}
where 
\begin{equation}
\begin{aligned}
\mathcal{V}(t)\rho &=-\imath [\tilde{H}_I(t),\rho], \\
\mathcal{G}(t,s) &=\mathcal{T}\exp\Big\{ g\int\limits_s^t dx\mathcal{Q}\mathcal{V}(x)\mathcal{Q}\Big\}, \\
\end{aligned}
\end{equation}
and $\tilde{H}_I(t)=U_0^{\dagger}(t,0)H_I U_0(t,0)$,
$U_0(t,0)=\big(\mathcal{T}\exp\{-\imath\int\limits_0^{t}H_S(s)ds\}\big) e^{-\imath H_B t}$. We remind the two assumptions underlying this equation:
\begin{enumerate}
\item $\mathcal{P}\mathcal{V}(t)\mathcal{P}=0$, which implies that $tr_B[\tilde{H}_I(t)\rho_B]=0$, a condition that can be always fulfilled
by modifying the Hamiltonian $H_S$.
\item $\rho(0)=\rho_S(0)\otimes \rho_{th}$, so there exists a (initial)
time such that the system and the environment are uncorrelated.
\end{enumerate}
For our derivation we make use of the integrated version
\begin{equation}
\begin{aligned}
\label{N_Z_int}
\mathcal{P}\tilde{\rho}(t)=&\mathcal{P}\tilde{\rho}(0)\\
&+g^2\int\limits_0^t ds\int\limits_0^s du\mathcal{P}\mathcal{V}(s)\mathcal{G}(s,u)\mathcal{V}(u)\mathcal{P}\tilde{\rho}(u).
\end{aligned}
\end{equation}

The aim of the present paper is to derive sufficient conditions such that the reduced density matrix $\tilde{\rho}_S(t)=tr_B[\tilde{\rho}(t)]$, described by Eq.\eqref{N_Z_int} under the assumptions listed above, evolves according to a Markovian quantum process characterised by a GKSL propagator.

As a matter of clarification, it is useful to briefly revise, in the case of time-independent Hamiltonians, the two main possible paths one can pursue:
\begin{enumerate}
\item A first approach, which is the standard microscopic derivation pursued in quantum optics (see for example \cite{Petruccione}), starts directly from the Von Neumann equation of system and environment and relies on a sequence of approximations/ansaetze, exemplified by the Born-Markov approximation. 
Underpinning this approach is the assumption of weak coupling to a sufficiently broadband reservoir so that the time evolution admits a 
separation of time scales, $\tau_S\gg\tau_B$, where $\tau_S$ is a characteristic time scale for $\tilde{\rho}_S(t)$ and $\tau_B$ is a decay time of some environment correlation functions. 
While many derivations are also based on a secular approximation (rotating-wave approximation) in order to get a GKLS generator, several recent works overcame this limitation (\cite{Wacker,Lidar,Davidovic, Nathan_1, Nathan_2, Konig}).
This derivation is physically intuitive and finds applications in a large range of phenomena involving atomic and molecular systems coupled to the radiation field.

\item The second approach is the derivation outlined by Davies in \cite{Davies_1,Davies_2}. The idea is to start from the Nakajima-Zwanzig Eq.\eqref{N_Z_int} and prove the uniform convergence of the density matrix in the weak-coupling limit $g\rightarrow 0$ and under a suitable time-rescaling:
\begin{equation}
\lim\limits_{g\rightarrow 0; t=\tau/g^2}\norm{\tilde{\rho}_S(t)-e^{\tau\mathcal{L}}}=0,
\end{equation}
where $\mathcal{L}$ is a time-independent GKLS generator.
In contrast to the previous approach, Davies' derivation does not need to explicitly introduce characteristic time scales.
Moreover, this derivation requires some regularity condition on the environment correlation functions, which is in a sense correspondent, albeit not equivalent, to the phenomenological Markov approximation mentioned above. Although an additional secular approximation is typically needed, later works suggested different procedures in order to avoid it (as for example in \cite{Taj}).
\end{enumerate}
Here we follow a protocol in the spirit of Davies, by considering a weak-coupling limit $g\rightarrow 0$ in the Eq.\eqref{N_Z_int}, after introducing a rescaled time $\tau=g^2 t$, and studying the uniform convergence of the reduced density matrix.
However, differently from the time-independent scenario, this rescaling affects also $H_S(t)$. As a consequence an additional hypothesis has to be made.

\begin{definition}
Let $X(a_1,a_2,...,a_N)$ be a family of elements in a given vector space $\mathcal{V}$, dependent on a finite number of real parameters $\{a_n\}_{n=1,...,N} $. Let us consider a real parameter $g$, a set of real indices $(x_1,x_2,...,x_N)$ and the scaling transformation $R_g: (a_1,...,a_N)\rightarrow (a_1 g^{x_1},...,a_N g^{x_N})$. We say that $X(a_1,a_2,...,a_N)$ is re-scalable (or invariant under rescaling) if $X(R_g(a_1,a_2,...,a_N))=X(a_1,a_2,...,a_N)$. 
\end{definition}
From the definition, we trivially notice that if $X(a_1,...,a_N)$ is re-scalable then $\lim\limits_{g\rightarrow 0}X(R_g(a_1,a_2,...,a_N))=X(a_1,a_2,...,a_N)$. Moreover, re-scalability holds also for the inverse transformation $X(R_g^{-1}(a_1,a_2,...,a_N))=X(a_1,a_2,...,a_N)$ 

In our work we will always assume invariance under rescaling of the system Hamiltonian $H_S(t,\lambda)$, where here we generally indicate with $\lambda$ an arbitrary set of parameters. As clarifying example, let us consider $H_S(t,v,h)=vt\sigma_z+ h\sigma_x$ for a two-level system, where $\sigma_z,\sigma_x$ are Pauli matrices; this Hamiltonian is re-scalable under any transformation of the type $R_g(t,v,h)=(t g^z, v g^{-z}, h)$, $\forall z$.
The purpose of the re-scalability is to reabsorb the $g$-dependence by renormalising the remaining set of parameters, every time we introduce a rescaled time $\tau=t g^z$. 

The example above indicates that this condition can always be achieved with a sufficient number of parameters and more importantly it preserves the form of the operator, as well as its spectral decomposition.
For the sake of notation, we shall indicate henceforth the set of parameters simply with $\lambda$ ($\lambda_R$ for the rescaled ones), therefore $H_S(t,\lambda)$ will denote the system Hamiltonian as dependent on time and a set of parameters, such that the set $\{t,\lambda\}$ satisfies the requirement of re-scalability.

Now we list some lemmas that will be useful for our derivation. They concern the estimation of oscillatory integrals  and the convergence of operators in the weak coupling limit. In particular, Theorem \ref{th_phase} and Lemma \ref{lemma_1} are already known results, while Lemma \ref{lemma_2} is novel.

\begin{theorem}{\textbf{(Method of stationary phase)}}.
\label{th_phase}
Given the function 
\begin{equation}
I(M)=\int\limits_a^b dx f(x)e^{\imath M\phi(x)}
\end{equation}
with $f,\phi$ real, analytic functions on $[a,b]$, the asymptotic limit $M\rightarrow\infty$ exhibits the following behaviour :
\begin{enumerate}
\item If $\phi$ has no stationary points in the interval $[a,b]$, namely $\partial_x\phi\neq 0$, $\forall x\in[a,b]$ 
\begin{equation}
\begin{aligned}
\label{cond_1}
I(M)\approx & \frac{f(b)}{\imath M \partial_x\phi(b)}e^{\imath M \phi(b)}  \\
&-\frac{f(a)}{\imath M \partial_x\phi(a)}e^{\imath M \phi(a)} .
\end{aligned}
\end{equation}
\item If $\partial_x\phi=0$ for $x_0\in (a,b)$:
\begin{equation}
\begin{aligned}
\label{cond_2}
I(M)\approx & f(x_0)\sqrt{\frac{2\pi}{ M |\partial_x^2\phi(x_0)|}} \\
& \times  e^{\imath M \phi(x_0)\pm\imath \partial_x^2\phi(x_0) \pi/4} .
\end{aligned}
\end{equation}
\item If $\partial_x\phi=0$ only for $x_0=a$ (analogous if we consider the upper limit)
\begin{equation}
\begin{aligned}
I(M)\approx &  \frac{f(b)}{\imath M \partial_x\phi(b)}e^{\imath M \phi(b)}+\frac{f(x_0)}{2}\\ & \times \sqrt{\frac{2\pi}{ M |\partial_x^2\phi(x_0)|}}
e^{\imath M \phi(x_0)\pm\imath \partial_x^2\phi(x_0) \pi/4} .
\end{aligned}
\end{equation}
\end{enumerate}
For a finite set of stationary points (including the boundaries) we can always divide the integration interval into a finite union of subintervals such as to apply one of these cases. 
\end{theorem}
If $f: I\subseteq \mathbb{R}\rightarrow \mathbb{C}$ analytic, then we just need to separate real and imaginary part in order to apply the theorem.
For a basic introduction and proof of this result, see for example \cite{Asymp}, whereas more details on error bounds can be found in \cite{Olver,Temme}.

\begin{lemma}
\label{lemma_1}
Let $\mathcal{B}$ be a $N$ dimensional Banach space with norm $\norm{.}$ and $\mathcal{B}^*$ with norm $\normsup{.}$.  Let $\{ T_g(\tau,\sigma) \}$ be a two-parameter family of bounded elements in $\mathcal{B}^*$ defined on a real compact interval $\tau\in [0,\tau^*]$ and $0\le g\le 1$. Let us consider the solution of the integral equation $y_g(\tau)=x+ \int\limits_0^{\tau}d\sigma T_g(\tau,\sigma)y_g(\sigma)= x+(\mathcal{T}_g y_g)(\tau)$  and assume that $\mathcal{T}_g$ converges strongly to $\mathcal{T}: \mathcal{B}\rightarrow \mathcal{B}$, which is bounded. 
Then $\forall x \in\mathcal{B}$,
\begin{equation}
\lim\limits_{g\rightarrow 0}\norm{y_g(\tau)-y(\tau)}=0
\end{equation}
uniformly in $[0,\tau^*]$, where $y(\tau)=x+(\mathcal{T}y)(\tau)$. 
\end{lemma}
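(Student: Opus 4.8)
The plan is to treat $y_g(\tau)=x+(\mathcal{T}_g y_g)(\tau)$ as a fixed-point equation in the Banach space $C([0,\tau^*],\mathcal{B})$ of continuous $\mathcal{B}$-valued functions, and to show that the resolvent operator $(\mathbb{I}-\mathcal{T}_g)^{-1}$ exists, is uniformly bounded in $g$, and depends continuously on $g$ in the relevant topology. I would first observe that, because $\mathcal{T}_g(\tau)$ is a Volterra-type integral operator with kernel bounded uniformly on the compact interval (say $\normsup{T_g(\sigma)}\le C$ for all $g,\sigma$, which follows from the assumed boundedness), its iterates satisfy the standard factorial decay estimate
\begin{equation}
\normsup{\mathcal{T}_g^n(\tau)}\le \frac{(C\tau)^n}{n!}\le\frac{(C\tau^*)^n}{n!}.
\end{equation}
Hence the Neumann/Dyson series $\sum_{n\ge 0}\mathcal{T}_g^n$ converges absolutely and uniformly in $g$, giving $y_g=\sum_{n\ge 0}\mathcal{T}_g^n x$, and likewise $y=\sum_{n\ge 0}\mathcal{T}^n x$. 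This simultaneously guarantees existence and uniqueness of both fixed points and furnishes a tail that can be made arbitrarily small independently of $g$.

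Next I would reduce the convergence of the full series to the convergence of each finite iterate. Writing
\begin{equation}
\norm{y_g(\tau)-y(\tau)}\le\sum_{n=0}^{N}\norm{(\mathcal{T}_g^n x)(\tau)-(\mathcal{T}^n x)(\tau)}+\sum_{n>N}\frac{(C\tau^*)^n}{n!}\bigl(\norm{y_g(0)}+\dots\bigr),
\end{equation}
the second sum is controlled uniformly by the factorial tail, so it suffices to send $g\to 0$ term-by-term in the finite sum. The key step is then a telescoping identity: using $\mathcal{T}_g^n-\mathcal{T}^n=\sum_{j=0}^{n-1}\mathcal{T}_g^{\,n-1-j}(\mathcal{T}_g-\mathcal{T})\mathcal{T}^{j}$, each difference is built from one factor of $(\mathcal{T}_g-\mathcal{T})$ flanked by uniformly bounded operators. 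Strong convergence $\mathcal{T}_g\to\mathcal{T}$ lets me control $(\mathcal{T}_g-\mathcal{T})z_g(\tau)$ for the specific (convergent) argument $z_g=\mathcal{T}^j x$, provided I can pass the strong limit through the integral $\int_0^\tau d\sigma\,(T_g(\sigma)-T(\sigma))z(\sigma)$.

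The main obstacle is precisely this interchange of limit and integral, and the upgrade from pointwise-in-$\tau$ strong convergence to uniform convergence on $[0,\tau^*]$. Strong convergence of $\mathcal{T}_g$ is an integrated statement, so I would exploit it directly: since $\mathcal{T}_g(\tau)z=\int_0^\tau d\sigma\,T_g(\sigma)z(\sigma)$ and the limit operator is assumed bounded on the whole interval, the quantity $\norm{(\mathcal{T}_g-\mathcal{T})z(\tau)}$ is nondecreasing-controlled by its value at $\tau^*$ up to the uniform kernel bound, which yields uniformity for free once convergence at each fixed $\tau$ is known. Equicontinuity in $\tau$ (again immediate from the uniform bound $C$, since $\norm{\mathcal{T}_g z(\tau_1)-\mathcal{T}_g z(\tau_2)}\le C\norm{z}_\infty|\tau_1-\tau_2|$) together with finite dimensionality of $\mathcal{B}$—guaranteeing compactness of bounded sets and equivalence of all norms per Definition 2.4—then converts pointwise convergence into uniform convergence via an Arzelà–Ascoli argument. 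Finiteness of the dimension $N$ is what makes the strong and uniform topologies interchangeable here and is the hypothesis I would lean on to keep the tail and the telescoping estimates clean.
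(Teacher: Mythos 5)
Your proposal follows essentially the same route as the paper's proof: both iterate the fixed-point equation into a Neumann/Dyson series for $y_g$ and $y$, bound $\norm{y_g-y}$ by the sum of differences of iterates, and pass to the limit term by term using the strong convergence $\mathcal{T}_g\rightarrow\mathcal{T}$. Your version is in fact more careful than the paper's one-line conclusion—the factorial tail estimate, the telescoping identity $\mathcal{T}_g^n-\mathcal{T}^n=\sum_{j}\mathcal{T}_g^{\,n-1-j}(\mathcal{T}_g-\mathcal{T})\mathcal{T}^{j}$, and the equicontinuity argument for uniformity in $\tau$ are precisely the steps the paper leaves implicit.
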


\begin{proof}
Let us take $y_g(\tau)$ family of solutions of the integral equation $y_g(\tau)=x+(\mathcal{T}_g y_g)(\tau)$ for any $x\in\mathcal{B}$, we notice that this equation can be iterated by applying $\mathcal{T}_g$ to both members, obtaining
\begin{equation}
y_g(\tau)=x+(\mathcal{T}_g x)(\tau)+ (\mathcal{T}_g^2 x)(\tau)+...
\end{equation}
as well as for 
\begin{equation}
y(\tau)=x+ (\mathcal{T}x)(\tau)+ (\mathcal{T}^2 x)(\tau)+... \, .
\end{equation}
The maps $\mathcal{T}_g,\mathcal{T}$ are Volterra integral operators and therefore they are bounded in $[0,\tau^*]$ by hypothesis.
Hence we have
\begin{equation}
\begin{aligned}
\norm{y_g-y}&=\Big|\Big|\sum\limits_{n=1}^{\infty} \mathcal{T}_g^n x - \sum\limits_{n=1}^{\infty} \mathcal{T}^n x \Big|\Big|_1, \\
&\le \sum\limits_{n=1}^{\infty}\norm{\mathcal{T}_g^n x-\mathcal{T}^n x},
\end{aligned}
\end{equation}
but, for $g\rightarrow 0$, $\mathcal{T}_g x\rightarrow \mathcal{T}x$ because of the strong convergence in $[0,\tau^*]$, therefore $\lim\limits_{g\rightarrow 0}\norm{y_g-y}=0$.
\end{proof}

The content of this lemma can be also found in \cite{Davies_1}.

\begin{lemma}
\label{lemma_2}
Let $\mathcal{H}$ be a finite dimensional Hilbert space and $\mathcal{B(H)}$ the Banach space of linear operators on $\mathcal{H}$ equipped with the sup norm $\normsup{X}$. Let $H(t,\lambda)=\sum\limits_{n=1}^N E_n(t,\lambda)|n(t,\lambda)\rangle\langle n(t,\lambda)|$  be a Hamiltonian, which is assumed to be non-degenerate, differentiable in $t$ and re-scalable under the transformation $t=\tau/g^2,\lambda=\lambda_R/g^{\zeta}$ for some parameter $\zeta$.  Let $U(t,\lambda)=\mathcal{T}\exp\{ -\imath\int\limits_0^t ds H(s,\lambda)\}$  be the time-evolution operator, where $\mathcal{T}$ indicates the time-ordering. Then
\begin{equation}
\lim\limits_{g\rightarrow 0}\normsup{U(\tau/g^2,\lambda_R/g^{\zeta})-U_R(\tau,\lambda_R,g)}=0 ,
\end{equation}
for $0\le t g^2\le \tau^*$, where 
\begin{equation}
\begin{aligned}
U_R(\tau,\lambda_R, g)=\sum\limits_{n=1}^N \exp\Big\{ -\imath g^{-2}\int\limits_0^{\tau}d\sigma E_n(\sigma,\lambda_R)\\
-\imath\int\limits_0^{\tau}d\sigma \phi_n^B(\sigma,\lambda_R) \Big\} |n(\tau,\lambda_R)\rangle\langle n(0)|,
\end{aligned}
\end{equation} 
and $\phi_n^B(\tau,\lambda_R)=-\imath\langle n(\tau,\lambda_R)|\frac{d}{d\tau}|n(\tau,\lambda_R)\rangle$.
\end{lemma}


\begin{proof}
Let us consider the scaling transformation $t=\tau/g^2, \lambda=\lambda_R / g^{\zeta}$ which leaves invariant in form the Hamiltonian, i.e. $H(\tau/g^2,\lambda_R/g^{\zeta})=H(\tau,\lambda_R)$. We want to bound 
\begin{equation}
\normeu{U(\tau/g^2,\lambda_R/g^{\zeta})|\psi\rangle-U_R(\tau,\lambda_R,g)|\psi\rangle }, 
\end{equation}
where $|\psi\rangle=\sum\limits_n c_n(0)|n(0)\rangle \in\mathcal{H}$ is a generic vector. 
We can expand 
\begin{equation}
\begin{aligned}
U\Big(\frac{\tau}{g^2},\frac{\lambda_R}{g^{\zeta}}\Big)|\psi\rangle &=\sum\limits_{n=1}^N c_n\Big(\frac{\tau}{g^2},\frac{\lambda_R}{g^{\zeta}}\Big)|n(\tau,\lambda_R)\rangle, \\
U_R(\tau,\lambda_R,g)|\psi\rangle &=\sum\limits_{n=1}^N u_n(\tau,\lambda_R,g)c_n(0)|n(\tau,\lambda_R)\rangle,
\end{aligned}
\end{equation}
in terms of the instantaneous eigenbasis $|n(t,\lambda)\rangle$,
with $u_n(\tau,\lambda_R,g)=\exp\Big\{ -\imath g^{-2}\int\limits_0^{\tau}d\sigma E_n(\sigma,\lambda_R)-\imath\int\limits_0^{\tau}d\sigma \phi_n^B(\sigma,\lambda_R) \Big\}$ and in the first equation we used the re-scalability of the Hamiltonian eigenvectors.
The coefficients $c_n(t,\lambda)$ in $U(t,\lambda)|\psi\rangle=\sum\limits_n c_n(t,\lambda)|n(t,\lambda)\rangle$ are determined via
\begin{equation}
\begin{aligned}
\frac{d}{dt}|\psi(t)\rangle &=-\imath H(t,\lambda)|\psi(t)\rangle, \\
|\psi(0)\rangle &=|\psi\rangle, \\
\end{aligned}
\end{equation}
from which we obtain the linear differential equation
\begin{equation}
\label{eq_coeff_c}
\frac{d}{dt}c_n + \sum\limits_m^N c_m \langle n(t,\lambda)|\frac{d}{dt}|m(t,\lambda)\rangle=-\imath  E_n(t,\lambda) c_n ,
\end{equation}
whose solution can be expressed in the general form $c_n(t,\lambda)=\sum\limits_m^N T_{nm}(t,\lambda)c_m(0)$, with $T(t,\lambda)$ the $N\times N$ fundamental matrix.
Making use of the Cauchy-Schwartz inequality  $|\sum\limits_m^N x_m y_m|\le \sqrt{\sum\limits_m^N x_m^2}\sqrt{\sum\limits_m^N y_m^2}$ with
\begin{equation}
\begin{aligned}
x_m&=\sum\limits_n^N |T_{nm}\Big(\frac{\tau}{g^2},\frac{\lambda_R}{g^{\zeta}}\Big)-\delta_{nm}u_n(\tau,\lambda_R,g)|>0, \\
y_m&=|c_m(0)|>0 ,
\end{aligned}
\end{equation}
and that $ \sqrt{\sum\limits_m^N |c_m(0)|^2}\equiv ||\psi||=1$, we obtain the following chain of inequalities

\begin{widetext}
\begin{equation}
\label{ineq_1}
\begin{aligned}
\normeu{U\Big(\frac{\tau}{g^2},\frac{\lambda_R}{g^{\zeta}}\Big)|\psi\rangle-U_R(\tau,\lambda_R,g)|\psi\rangle} & \le  \sum\limits_n^N|c_n\Big(\frac{\tau}{g^2},\frac{\lambda_R}{g^{\zeta}}\Big)-u_n(\tau,\lambda_R,g)c_n(0)|\normeu{|n(\tau,\lambda_R)\rangle} , \\
&\le \sum\limits_{n,m}^N|T_{nm}\Big(\frac{\tau}{g^2},\frac{\lambda_R}{g^{\zeta}}\Big)-\delta_{nm}u_n(\tau,\lambda_R,g)||c_m(0)|, \\
&\le \sqrt{\sum\limits_{m}^N \Big|\sum\limits_n^N|T_{nm}(\tau/g^2,\lambda)-\delta_{nm}u_n(\tau,\lambda_R,g)|\Big|^2 } \sqrt{\sum\limits_m^N |c_m(0)|^2} , \\
&\le \sqrt{\sum\limits_{m}^N \Big|\sum\limits_n^N|T_{nm}(\tau/g^2,\lambda)-\delta_{nm}u_n(\tau,\lambda_R,g)|\Big|^2 } ||\psi||.
\end{aligned}
\end{equation}
\end{widetext}
Hence, we just need to show that $|T_{nm}\Big(\frac{\tau}{g^2},\frac{\lambda_R}{g^{\zeta}}\Big)-\delta_{nm}u_n(\tau,\lambda_R,g)|\rightarrow 0$ for $g\rightarrow 0$ in order to conclude the proof.
To this end, we consider the change of variable 
\begin{equation}
c_n\Big(\frac{\tau}{g^2},\frac{\lambda_R}{g^{\zeta}}\Big)=a_{n}(\tau,\lambda_R,g)u_n(\tau,\lambda_R,g)
\end{equation} 
and from Eq.\eqref{eq_coeff_c} we arrive at the integral equation
\begin{equation}
\label{eq_Fg_0}
\underline{a}_g(\tau,\lambda_R)=\underline{a}(0)+\int\limits_0^{\tau}d\sigma F_g(\sigma,\lambda_R)\underline{a}_g(\sigma,\lambda_R) ,
\end{equation}
where 
\begin{equation}
\underline{a}_g(\tau,\lambda_R)=(a_1(\tau,\lambda_R,g),...,a_N(\tau,\lambda_R,g))^T,
\end{equation}
\begin{equation}
\label{eq_Fg}
(F_g)_{nm}=
\begin{cases}
0 & \text{if } n=m, \\
- \alpha_{nm}e^{\imath g^{-2}\Delta_{nm}+\imath \eta_{nm}} & \text{if } n\neq m, \\ 
\end{cases}
\end{equation}
and
\begin{equation}
\begin{aligned}
\label{eq_coeff_F}
\alpha_{nm}(\tau,\lambda_R)&=\langle n(\tau,\lambda_R)|\frac{d}{d\tau}|m(\tau,\lambda_R)\rangle , \\
\Delta_{nm}(\tau,\lambda_R)&=\int\limits_0^{\tau}d\sigma \Big(E_n(\sigma,\lambda_R) -E_m(\sigma,\lambda_R)\Big) ,\\ 
\eta_{nm}(\tau,\lambda_R)&=\int\limits_0^{\tau}d\sigma \Big( \phi_n^B(\sigma,\lambda_R)-\phi_m^B(\sigma,\lambda_R)\Big) .\\ 
\end{aligned}
\end{equation}
$F_g(\tau,\lambda_R)$ is, by hypothesis, analytic in any compact interval $0\le\tau\le\tau^*$, thus applying Lemma \ref{lemma_1} we can see that the map
\begin{equation}
\underline{x}(\tau)\in\mathbb{C}^N\rightarrow \int\limits_0^{\tau}d\sigma F_g(\sigma,\lambda_R)\underline{x}(\sigma)\in\mathbb{C}^N 
\end{equation} 
converges strongly to zero  for $g\rightarrow 0$, regardless of the behaviour of the phases $\partial_{\sigma}\Delta_{nm}(\sigma,\lambda_R)=E_n(\sigma,\lambda_R)-E_m(\sigma,\lambda_R)$ in $\sigma\in [0,\tau]$ .
Hence, by Lemma \ref{lemma_2} we obtain that $\underline{a}_g(\tau,\lambda_R)\rightarrow \underline{a}(0)$ uniformly in $0\le\tau\le\tau^*$, $\forall \underline{a}(0)$, and therefore   
$|T_{nm}\Big(\frac{\tau}{g^2},\frac{\lambda_R}{g^{\zeta}}\Big)-\delta_{nm}u_n(\tau,\lambda_R,g)|\rightarrow 0$.

\end{proof}

\paragraph{Error bound in Lemma \ref{lemma_2}.}

For small but finite $g$, the error committed when replacing $U_S(\tau/g^2,\lambda_R/g^{\zeta})\rightarrow U_R(\tau,\lambda_R,g)$ can be bounded utilising Theorem \ref{th_phase} in the Eq. \eqref{eq_Fg_0},  with $\int\limits_0^{\tau}F_g(\tau,\lambda_R)\underline{a}(\sigma)\rightarrow 0$ for $g\rightarrow 0$ .

We divide the set of phases $\mathcal{F}=\{ \Delta_{n m}(\tau,\lambda_R) \}_{n\neq m}$ in Eq.\eqref{eq_coeff_F} as $\mathcal{F}=\mathcal{F}_{nc}\cup \mathcal{F}_c$, where
\begin{equation}
\begin{aligned}
\mathcal{F}_{nc}=\{ \Delta_{n m}\Big|  \nexists \sigma  \text{\space s.t.}  \frac{d \Delta_{n m}}{d\tau}(\sigma,\lambda_R)=0 \}, \\
\mathcal{F}_{c}=\{ \Delta_{n m} \Big|  \exists \sigma \text{\space s.t.}   \frac{d\Delta_{n m}}{d\tau}(\sigma,\lambda_R)=0 \}. \\
\end{aligned}
\end{equation}
In order to upper bound  $||\int\limits_0^{\tau}d\sigma F_g(\sigma,\lambda_R)\underline{a}(\sigma)||$,
for $g\rightarrow 0$, where the norm here is the Euclidean norm on $\mathbb{C}^N$, we consider $||\int\limits_0^{\tau}d\sigma F_g(\sigma,\lambda_R)\underline{a}(\sigma)||\le \sum\limits_{\Delta_{n m}\in \mathcal{F}_{nc} } \epsilon_{n m}^{n c} + \sum\limits_{\Delta_{n m}\in \mathcal{F}_{c} } \epsilon_{n m}^{c}$ and we make use of the expressions in Eq.\eqref{cond_1} and Eq.\eqref{cond_2}, to obtain, respectively,
\begin{equation}
\begin{aligned}
\epsilon_{n m}^{n c}\approx & g^2  \Big| \frac{\alpha_{n m}(\sigma,\lambda_R)}{\partial_{\sigma}\Delta_{n m}(\sigma,\lambda_R)} \Big| |a_m(\sigma,\lambda_R)|\Big]_{\sigma=0}^{\sigma=\tau} ,  \\
\epsilon_{n m}^{c}\approx  & g  \sqrt{\frac{2\pi}{|\partial_{\tau}^2\Delta_{n m}(\tau_0,\lambda_R) | } }  |\alpha_{n m}(\tau_0,\lambda_R)| \\
& |a_m(\tau_0,\lambda_R)|  ,  \\
\end{aligned}
\end{equation}
with $\tau_0$ such that $\partial_{\tau}\Delta_{n m}(\tau_0,\lambda_R)=0$ (assuming for simplicity that there exists only one point such that $\partial_{\tau}\Delta_{n m}(\tau_0,\lambda_R)=0$, however the extension to the general case is straightforward).
Therefore, using the explicit expressions of $\alpha_{n m}, \Delta_{n m}$ in Eq.\eqref{eq_coeff_F}, we can delineate a sufficient condition given by $\epsilon_{n m}^{n c},\epsilon_{n m}^{c}\ll 1$, namely
\begin{equation}
\begin{aligned}
\label{driving_cond_lemma}
\frac{\max\limits_{n\neq m; t} |\langle n(t)|\frac{d}{dt}|m(t)\rangle |}{\min\limits_{n\neq m; t}|E_n(t)-E_m(t)|}\ll g^{-2}, \text{\space if $\Delta_{n m} \in\mathcal{F}_{nc}$}, \\
\frac{\max\limits_{n\neq m; t_0} 2\pi |\langle n(t)|\frac{d}{dt}|m(t)\rangle |^2}{\min\limits_{n\neq m; t_0}|\dot{E}_n(t)-\dot{E}_m(t)|}\ll g^{-2}, \text{\space if $\Delta_{n m} \in\mathcal{F}_{c},$} \\
\end{aligned}
\end{equation}
with $t_0$ such that $E_n(t_0)=E_m(t_0)$ and we omitted for simplicity the dependence on $\lambda$. Notice that expressing the condition in Eq.\eqref{driving_cond_lemma} in terms of $t$ or $\tau$ is equivalent, given that $\tau$ plays the role of a dummy variable.

This lemma will be crucial in the forthcoming derivation of the time-dependent Markovian master equation, as it allows to replace the system time-evolution operator $U_S(t)$, which is generally not known or extremely complicated, with another one that can be entirely obtained from the eigendecomposition of $H_S(t)$, in the weak-coupling limit and under time rescaling. 

We are now ready to prove the main result of this section. 
\begin{theorem}
\label{th_main}
Let $\mathcal{B}(\mathcal{H}_S)$ be a finite dimensional Banach space for a system S, suppose that the dynamics of the quantum state $\tilde{\rho}_S(t)$ is given by the Nakajima-Zwanzig Eq.\eqref{N_Z_int}, with a bounded kernel, generated by the time-dependent re-scalable Hamiltonian $H(t)=H_S(t,\lambda)+H_B+g H_I$ in Eq.\eqref{general_h}, with $H_I=\sum\limits_{\alpha}A_{\alpha}\otimes B_{\alpha}$. Assume that $R_{\alpha,\beta}(x)=tr_B[e^{\imath H_B x}B_{\alpha} e^{-\imath H_B x} B_{\beta}\rho_{th}]$ are integrable functions on $\mathbb{R}^+$, i.e.
\begin{equation}
\label{regularity_cond}
\int\limits_0^{\infty}dx|R_{\alpha,\beta}(x)|\equiv C_{\alpha,\beta}< \infty .
\end{equation}
Then 
\begin{equation}
\lim\limits_{g\rightarrow 0}\Big|\Big|\tilde{\rho}_S(t)-\mathcal{T}\exp\Big\{ g^2\int\limits_0^{t}ds \tilde{\mathcal{L}}(s)\Big\} \rho_S(0)\Big|\Big|_1=0,
\end{equation}
for any finite time interval $0\le tg^2\le \tau^*$, where $\mathcal{\tilde{L}}(t)$ is a time-dependent GKLS generator.
\end{theorem}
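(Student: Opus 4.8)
The strategy is to recast the time-rescaled Nakajima--Zwanzig equation into the fixed-point form to which Lemma~\ref{lemma_1} applies, to identify the strong limit of the associated integral operator with the action of a GKLS generator, and finally to invoke Lemma~\ref{lemma_1} to upgrade strong convergence of the generators into uniform convergence of the solutions on $[0,\tau^*]$.

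First I would introduce the rescaled time $\tau=g^2 t$ and the rescaled state $\hat{\rho}_S(\tau)=\tilde{\rho}_S(\tau/g^2)$. Starting from Eq.~\eqref{N_Z_int}, taking the partial trace $Tr_B$ and using $\mathcal{P}\mathcal{V}(t)\mathcal{P}=0$ together with the factorized initial condition, the equation becomes $\hat{\rho}_S(\tau)=\rho_S(0)+(\mathcal{T}_g\hat{\rho}_S)(\tau)$, where, after rescaling the outer variable $s=\sigma/g^2$ and changing the inner variable to the bath time difference $x=s-u$,
\begin{equation}
(\mathcal{T}_g\hat{\rho}_S)(\tau)=\int_0^{\tau}d\sigma\int_0^{\sigma/g^2}dx\,\mathcal{K}_g(\sigma,x)\,\hat{\rho}_S(\sigma-g^2 x),
\end{equation}
with $\mathcal{K}_g(\sigma,x)\,\cdot=Tr_B\big[\mathcal{V}(\sigma/g^2)\,\mathcal{G}(\sigma/g^2,\sigma/g^2-x)\,\mathcal{V}(\sigma/g^2-x)\,(\,\cdot\otimes\rho_{th})\big]$. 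Because $[\rho_{th},H_B]=0$, the environmental part of $\mathcal{K}_g$ collapses to the correlation functions $R_{\alpha,\beta}(x)$ of Eq.~\eqref{regularity_cond}, which depend only on $x$.

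Next I would establish that $\mathcal{T}_g$ converges strongly to a limit $\mathcal{T}$, which is the technical heart of the argument. Three simplifications have to be justified simultaneously as $g\to 0$: the homogeneous propagator satisfies $\mathcal{G}(\sigma/g^2,\sigma/g^2-x)\to\mathbb{I}$, since its generator carries a prefactor $g$; the true system propagator $U_S$ appearing in $\mathcal{V}$ may be replaced by the re-scalable $U_R$ by Corollary~\ref{corol_1} (hence ultimately by Lemma~\ref{lemma_2}); and $\hat{\rho}_S(\sigma-g^2 x)\to\hat{\rho}_S(\sigma)$ for fixed $x$. The integrability hypothesis \eqref{regularity_cond} provides a dominating function in the variable $x$, so that a dominated-convergence argument both licenses interchanging the limit with the integrals and allows the upper limit $\sigma/g^2$ to be sent to $+\infty$, while the boundedness of the kernel controls the remainder.

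The remaining and most delicate step is to extract the secular structure. After substituting $U_R$ and decomposing in the instantaneous eigenbasis, the $\sigma$-integrand acquires a rapidly oscillating factor $e^{\imath g^{-2}\Delta_{nm}(\sigma)}$ with large parameter $M=g^{-2}$, while expanding $\Delta_{pq}(\sigma-g^2x)\approx\Delta_{pq}(\sigma)-g^2 x\,(E_p(\sigma)-E_q(\sigma))$ leaves, in the second factor, a slow phase $e^{-\imath x(E_p(\sigma)-E_q(\sigma))}$ controlled by the instantaneous Bohr frequencies. Carrying out the $x$-integral then produces the one-sided transforms $\int_0^{\infty}dx\,e^{\imath\omega x}R_{\alpha,\beta}(x)$ evaluated at these frequencies, whose Hermitian parts furnish the dissipation rates and whose anti-Hermitian parts give a Lamb-type shift. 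Applying the method of stationary phase (Theorem~\ref{th_phase}) to the $\sigma$-integral shows that the non-secular terms, for which the accumulated phase $\Delta_{nm}(\sigma)+\Delta_{pq}(\sigma)$ is non-stationary, contribute only boundary pieces of order $g^2$ and vanish, leaving precisely the resonant (secular) combinations. I would then verify that the survivors organize into the GKLS form \eqref{gkls_gen}, the positivity of the rates following from the positive semidefiniteness of the full Fourier transform of the matrix $R_{\alpha,\beta}$ (a consequence of its being a positive-definite correlation function). Since every surviving ingredient is expressed through the rescaled eigendata of $H_S$, the resulting generator $\tilde{\mathcal{L}}(\tau;\lambda_R)$ is re-scalable, and Lemma~\ref{lemma_1} converts the strong convergence $\mathcal{T}_g\to\mathcal{T}$ into the uniform convergence of $\hat{\rho}_S(\tau)$ to the solution of $\tfrac{d}{d\tau}\hat{\rho}_S=\tilde{\mathcal{L}}(\tau)\hat{\rho}_S$, i.e. to $\mathcal{T}\exp\big(g^2\int_0^{t}ds\,\tilde{\mathcal{L}}(s;\lambda)\big)\rho_S(0)$ in the original time, which is the claim. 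The main obstacle throughout is controlling these oscillatory double integrals uniformly in $\tau\in[0,\tau^*]$ while the upper limit diverges and the phases oscillate ever more rapidly.
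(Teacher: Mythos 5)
Your proposal is correct and follows essentially the same route as the paper: the same fixed-point recasting of Eq.\eqref{N_Z_int}, replacement of $\mathcal{G}\rightarrow\mathbb{I}$ and $U_S\rightarrow U_R$ via Corollary \ref{corol_1}, dominated convergence under the integrability hypothesis to produce the one-sided transforms $\Gamma_{\alpha,\beta,k}$, the stationary-phase theorem to discard non-secular terms, Bochner's theorem for positivity of the rates, and Lemma \ref{lemma_1} to upgrade strong convergence of the integral operator to uniform convergence of the solutions. The only (cosmetic) difference is your change of variables, which rescales the outer time and so leaves the unknown evaluated at $\sigma-g^2x$ — requiring the extra continuity step you correctly flag — whereas the paper rescales the inner variable ($\sigma=ug^2$, $x=s-u$) so that the state appears exactly at $\sigma$ and Lemma \ref{lemma_1} applies verbatim, the phase shift being absorbed instead into $A_{\alpha,R}(\sigma+xg^2)$.
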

\begin{proof}
Let us consider the scaling transformation $t=\tau/g^2,\lambda=\lambda_R/g^{\zeta}$, which leaves invariant the Hamiltonian $H_S(t,\lambda)$, with $\zeta$ depending on the explicit form of $H_S$. Let us introduce the operator $\chi_g(\tau,\lambda_R)\equiv \tilde{\rho}_S(\tau/g^2, \lambda_R/g^{\zeta})$, where $\tilde{\rho}_S(t,\lambda)$ is the reduced density matrix obtained from Eq.\eqref{N_Z_int}. We perform the change of variable $x = s-u $, $\sigma= u g^2 $ in the Nakajima-Zwanzig Eq.\eqref{N_Z_int}, to obtain 
\begin{equation}
\chi_g(\tau,\lambda_R)=\rho_S(0)+\int\limits_0^{\tau}d\sigma \mathcal{K}_g(\tau-\sigma,\sigma;\lambda_R)\chi_g(\sigma,\tau_R),
\end{equation}
where
\begin{widetext}
\begin{equation}
\label{kg_0}
\mathcal{K}_g(\tau-\sigma,\sigma;\lambda_R)\rho(\sigma)=\int\limits_0^{(\tau-\sigma)/g^{2}} dx tr_B\Big[\mathcal{P}\mathcal{V}(\sigma/g^{2}+x)\mathcal{G}(\sigma/g^{2}+x,\sigma/g^{2})\mathcal{V}(\sigma/g^{2})\mathcal{P}(\rho(\sigma)\otimes \rho_{th}) \Big],
\end{equation}
\end{widetext}
with $\mathcal{V}(\sigma/g^2)(*)\equiv -\imath[\tilde{H}_I(\sigma/g^2,\lambda_R/g^{\zeta}),*]$. We want to address the behaviour of the kernel for $g\rightarrow 0$. This is easily done by means of subsequent steps, as follows.

1. $\mathcal{G}(\sigma/g^{2}+x,\sigma/g^{2})\rightarrow \mathbb{I}$ strongly as $g\rightarrow 0$ , moreover plugging $\tilde{H}_I(t)=\sum\limits_{\alpha}\tilde{A}_{\alpha}(t)\otimes \tilde{B}_{\alpha}(t)$ and using the property $[H_B,\rho_{th}]=0$ we have
\begin{equation}
\begin{aligned}
\label{kg}
\mathcal{K}_g(\tau-\sigma,\sigma)\rho(\sigma)\rightarrow \sum\limits_{\alpha,\beta}\int\limits_0^{(\tau-\sigma)/g^{2}} dx R_{\alpha,\beta}(x)  \\ \times \Big[\tilde{A}_{\beta}(\sigma/g^2)\rho(\sigma),\tilde{A}_{\alpha}(\sigma/g^2+x)\Big] +h.c.,
\end{aligned}
\end{equation}
where $R_{\alpha,\beta}(x)=tr_B[\tilde{B}_{\alpha}(x)B_{\beta} \rho_{th}]$ are environment correlation functions and
\begin{equation}
\tilde{A}_{\alpha}\Big(\frac{\tau}{g^2}\Big)\equiv U_S^{\dagger}\Big(\frac{\tau}{g^2},\frac{\lambda_R}{g^{\zeta}}\Big)A_{\alpha} U_S\Big(\frac{\tau}{g^2},\frac{\lambda_R}{g^{\zeta}}\Big).
\end{equation}

2. Let $H_S(t)=\sum\limits_n E_n(t)|n_t\rangle\langle n_t|$ be the eigendecomposition of the system Hamiltonian, utilising Lemma \ref{lemma_2} and the  explicit form of $U_R(\tau, \lambda_R,g)$, we obtain for $g\rightarrow 0$
\begin{equation}
\begin{aligned}
\label{rel}
\tilde{A}_{\beta}\Big(\frac{\sigma}{g^2}\Big)\rightarrow  & \sum\limits_p e^{-\imath g^{-2} \Delta_p(\sigma,\lambda_R)}\\
&\times e^{-\imath\eta_p(\sigma,\lambda_R)}A_{\beta,p}(\sigma,\lambda_R)  ,\\
\tilde{A}_{\alpha}\Big(\frac{\sigma}{g^2}+x\Big)\rightarrow & \sum\limits_p e^{\imath g^{-2} \Delta_p(\sigma,\lambda_R)+\imath x\partial_{\sigma}\Delta_p(\sigma,\lambda_R)}\\
&\times e^{\imath\eta_p(\sigma,\lambda_R)}A_{\alpha,p}^{\dagger}(\sigma,\lambda_R) ,
\end{aligned}
\end{equation}
where we have introduced the double index $(n,m)\equiv p$ for the sake of simplicity and 
\begin{equation}
\begin{aligned}
A_{\alpha,p}(\tau,\lambda_R)\equiv &|m_0\rangle\langle m_{\tau}(\lambda_R)|A_{\alpha}|n_{\tau}(\lambda_R)\rangle\langle n_0|, \\
\Delta_{p}(\tau,\lambda_R)\equiv &\int\limits_0^{\tau}d\sigma(E_n(\sigma,\lambda_R)-E_m(\sigma,\lambda_R)).
\end{aligned}
\end{equation}
Notice that the decomposition in Eq.\eqref{rel} can be easily achieved in such a way that no degeneracies occur in the set $\{\Delta_p(\tau,\lambda_R)\}_p$. In the following, we will omit the dependence on the rescaled parameters $\lambda_R$ to ease the notation. Using these results, from Eq.\eqref{kg}
we get

\begin{equation}
\label{equiv_supermap}
\begin{aligned}
\mathcal{K}_g &(\tau-\sigma,\sigma)\rho(\sigma)\rightarrow  \\
& \sum\limits_{\alpha,\beta}\sum\limits_{p,q}\Big(\int\limits_0^{(\tau-\sigma)/g^{2}} dx R_{\alpha,\beta}(x) e^{\imath x\partial_{\sigma}\Delta_p(\sigma)} \Big)
\\ & \times e^{\imath g^{-2}(\Delta_p(\sigma)-\Delta_q(\sigma))}   e^{\imath (\eta_p(\sigma)-\eta_q(\sigma))} \\
& \times  \Big[ A_{\beta,q}(\sigma)\rho(\sigma),A_{\alpha,p}^{\dagger}(\sigma)\Big] +h.c.   .\\
\end{aligned}
\end{equation}

3. Applying Theorem \ref{th_phase}, the map $(\phi_g\rho)(\tau)\equiv\int\limits_0^{\tau}d\sigma \mathcal{K}_g(\tau-\sigma,\sigma)\rho(\sigma)$ converges strongly to 
$(\phi\rho)(\tau)=\sum\limits_{\alpha,\beta}\sum\limits_{p}\int\limits_0^{\tau} d\sigma \mathcal{\tilde{L}}(\sigma,\lambda_R)\rho(\sigma)$,
where
\begin{equation}
\begin{aligned}
\mathcal{\tilde{L}}(\tau,\lambda_R)\rho=& \sum\limits_{\alpha,\beta}\sum\limits_p \Gamma_{\alpha,\beta,p}(\tau)\\ 
&\times\Big[A_{\beta,p}(\tau)\rho, A_{\alpha,p}^{\dagger}(\tau )\Big] +h.c.
\end{aligned}
\end{equation}
and
\begin{equation}
\label{one-sidedF}
\Gamma_{\alpha,\beta,p}(\sigma)=\int\limits_0^{\infty} dx R_{\alpha,\beta}(x) e^{\imath x\partial_{\sigma} \Delta_p(\sigma)}
\end{equation}
is the one-sided Fourier transform of the correlation functions. Notice that $\phi$ is bounded, given the condition $ |\Gamma_{\alpha,\beta,k}(\sigma)|\le \int\limits_0^{\infty}dx|R_{\alpha,\beta}(x)|< \infty$ by hypothesis.

In fact, for the contributions with $q\neq p$  in Eq.\eqref{equiv_supermap} we can individuate the phases (once again, we omit here the dependence on the rescaled parameters $\lambda_R$) $\varphi_{p q}(\tau)=\Delta_p(\tau)-\Delta_q(\tau)$ and  $M=g^{-2}$ defined in Theorem \ref{th_phase}. For $g\rightarrow 0$, we can also provide an estimate of the error committed, depending on the behaviour of the phases $\varphi_{p q}$.
In particular, let
\begin{equation}
\mathcal{F}=\{\varphi_{p q}\}=\mathcal{F}_{nc} \cup \mathcal{F}_c
\end{equation}
be the set of phases for $p\neq q$ that enter the master equation, which we further divide into the union of two sets as
\begin{equation}
\begin{aligned}
\mathcal{F}_{nc}=\{ \varphi_{p q}\Big|  \nexists \sigma  \text{\space s.t.}  \frac{d\varphi_{p q}}{d\tau}(\sigma)=0 \}, \\
\mathcal{F}_{c}=\{ \varphi_{p q}\Big|  \exists \sigma \text{\space s.t.}   \frac{d\varphi_{p q}}{d\tau}(\sigma)=0 \}. \\
\end{aligned}
\end{equation}
From Eq.\eqref{equiv_supermap} we can derive the upper bound
\begin{equation}
\begin{aligned}
&\norm{(\phi_g - \phi)\rho}\le  2 \sum\limits_{\alpha,\beta}\sum\limits_{q\neq p}\sum\limits_j  \norm{F_j} \\
& \times \Big| \int\limits_0^{\tau}d\sigma \Gamma_{\alpha,\beta,p}(\sigma) e^{\imath g^{-2}\varphi_{p q}(\sigma)} \\
& \times e^{\imath (\eta_p(\sigma)-\eta_q(\sigma))} f_{\alpha,\beta,k,p}^j(\sigma)| \Big|,
\end{aligned}
\end{equation}
where we have expressed the operator $[A_{\beta,q}(\sigma)\rho(\sigma),A_{\alpha,p}^{\dagger}(\sigma)]=\sum\limits_j f_{\alpha,\beta,q,p}^j(\sigma) F_j$ in terms of the canonical basis $\{F_j \}_{j}$ of $\mathcal{B}(\mathcal{H}_S)$ and the factor $2$ comes from the hermitian conjugate contribution.
Now, we distinguish among the terms whose phase belongs to either $\mathcal{F}_{nc}$ or $\mathcal{F}_c$, so that $\norm{(\phi_g - \phi)\rho}\le \sum\limits_{\varphi_{pq}\in\mathcal{F}_{nc}}E_{p q}^{nc}+\sum\limits_{\varphi_{pq}\in\mathcal{F}_{c}}E_{p q}^{c}$.
In particular for the first set we can apply the asymptotic expression in  Eq.\eqref{cond_1} to obtain
\begin{equation}
\begin{aligned}
E_{p q}^{nc}\approx \sum\limits_{\alpha,\beta} \sum\limits_j g^2 \norm{F_j}
\Big[\Big|\frac{\Gamma_{\alpha,\beta,p}(\sigma)}{\partial_{\sigma}\varphi_{p q}(\sigma)} \Big|\\
\times |f_{\alpha,\beta,p,q}^j(\sigma)| \Big]_{\sigma=0}^{\sigma=\tau},
\end{aligned}
\end{equation}
while for the second one the expression we can safely use is Eq.\eqref{cond_2}
\begin{equation}
\begin{aligned}
E_{p q}^c\approx \sum\limits_{\alpha,\beta}\sum\limits_j||F_j||_1 \sqrt{\frac{2\pi g^2}{|\partial_{\tau}^2 \varphi_{p,q}(\tau_0)|}} \\
\times \Big|\Gamma_{\alpha,\beta;p}(\tau_0)f^j_{\alpha,\beta,p,q}(\tau_0) \Big| ,
\end{aligned}
\end{equation}
where we have assumed without any loss of generality that only one critical point $\tau_0$, such that $\partial_{\tau}\varphi_{pq}(\tau_0)=0$, is present inside the time interval of interest.

The error due to this approximation vanishes in the limit $g\rightarrow 0$ provided that
\begin{equation}
\label{bound_secapprox}
\begin{aligned}
g^2 \Big|\frac{\Gamma_{\alpha,\beta,p}(\tau)}{\partial_{\tau}\varphi_{p q}(\tau)} \Big| &\ll 1 \text{\space\space if $\varphi_{p q}\in\mathcal{F}_{nc} $ }, \\
g\Big|\frac{\Gamma_{\alpha,\beta;p}(\tau_0)}{\sqrt{|\partial_{\tau}^2 \varphi_{p q}(\tau_0)| }} \Big|& \ll 1  \text{\space\space if $\varphi_{p q}\in\mathcal{F}_{c} $ },\\
\end{aligned}
\end{equation}
$\forall \alpha,\beta, q\neq p, \forall \tau_0$ s.t. $\partial_{\tau}\varphi_{p q}(\tau_0)=0$. In particular, using that $|\Gamma_{\alpha,\beta,p}(\tau)|\le C_{\alpha,\beta}$ and indicating with $ \Omega_p(\tau)=\partial_{\tau}\Delta_p(\tau)$ the instantaneous Bohr frequencies which enter the master equation, we arrive at the sufficient condition
\begin{equation}
\begin{aligned}
\label{general_secular_th}
\frac{\min\limits_{p\neq q; t} |\Omega_p(t)-\Omega_q(t)|}{\max\limits_{\alpha,\beta} C_{\alpha,\beta}} \gg g^2, \text{\space\space if $\varphi_{p q}\in\mathcal{F}_{nc} $ } \\
\frac{\min\limits_{p\neq q; t^{*}} |\dot{\Omega}_p(t^{*})-\dot{\Omega}_q(t^{*})|}{\max\limits_{\alpha,\beta}C_{\alpha,\beta}^2}  \gg g^2, \text{\space\space if $\varphi_{p q}\in\mathcal{F}_{c} $ }
\end{aligned}
\end{equation}
where $t^*$ indicates the set of critical points such that $\Omega_p(t^{*})=\Omega_q(t^{*})$ and the dot represents the time derivative. Notice that expressing the condition in Eq.\eqref{general_secular} in terms of $t$ or $\tau$ is equivalent, given that $\tau$ plays the role of a dummy variable.

As a consequence, putting all together we can apply Lemma \ref{lemma_1} in a given interval $[0,\tau^*]$, in order to prove the uniform convergence 
\begin{equation}
\lim\limits_{g\rightarrow 0}\Big|\Big|\chi_g(\tau,\lambda_R)-\chi(\tau,\lambda_R)\Big|\Big|_1=0,
\end{equation}
where $\chi(\tau,\lambda_R)$ is solution of the integral equation $\chi(\tau,\lambda_R)=\rho_S(0)+\int\limits_0^{\tau}d\sigma\mathcal{\tilde{L}}(\sigma,\lambda_R)\chi(\sigma,\lambda_R)$, or equivalently
\begin{equation}
\chi(\tau,\lambda_R)=\mathcal{T}\exp\Big\{ \int\limits_0^{\tau}d\sigma\mathcal{\tilde{L}}(\sigma,\lambda_R)\Big\} \rho_S(0).
\end{equation}
One can easily check that by construction $\mathcal{\tilde{L}}(\tau,\lambda_R)=\mathcal{\tilde{L}}(t g^2,\lambda g^{\zeta})=\mathcal{\tilde{L}}(t,\lambda)$.
Therefore, because of $\chi_g(\tau,\lambda_R)=\tilde{\rho}_S(\tau/g^2, \lambda_R/g^{\zeta})$, substituting back the transformation $t=\tau/g^2, \lambda=\lambda_R/g^{\zeta}$ we obtain the uniform convergence
\begin{equation}
\begin{aligned}
\tilde{\rho}_S(t,\lambda)&\rightarrow \mathcal{T}\exp\Big\{ \int\limits_0^{t g^2}d\sigma\mathcal{\tilde{L}}(\sigma,\lambda g^{\zeta})\Big \} \rho_S(0) \\
&=\mathcal{T}\exp\Big\{ g^2\int\limits_0^{t}ds\mathcal{\tilde{L}}(s g^2,\lambda g^{\zeta})\Big\} \rho_S(0) \\
&=\mathcal{T}\exp\Big\{ g^2\int\limits_0^{t}ds\mathcal{\tilde{L}}(s,\lambda)\Big\} \rho_S(0).
\end{aligned}
\end{equation}
The last thing we need to prove is that  $\mathcal{\tilde{L}}(t)$ is indeed a GKLS generator. To this end, we split $\Gamma_{\alpha,\beta,p}(t)= \frac{1}{2}\gamma_{\alpha,\beta,p}(t)+\imath S_{\alpha,\beta,p}(t)$, where $\gamma_{\alpha,\beta,p}(t)$ and $S_{\alpha,\beta,p}(t)$ form Hermitian matrices in the indices $\alpha,\beta$, ending up with 
\begin{equation}
\begin{aligned}
\tilde{\mathcal{L}}(t)\rho&=-\imath[\tilde{H}_{LS}(t),\rho]+
\sum\limits_{\alpha,\beta,p}\gamma_{\alpha,\beta,p}(t)\Big(A_{\beta,p}(t) \\
&\times \rho A_{\alpha,p}^{\dagger}(t)-\frac{1}{2}\{  A_{\alpha,p}^{\dagger}(t)A_{\beta,p}(t), \rho\} \Big),
\end{aligned}
\end{equation}
with $\tilde{H}_{LS}=\sum\limits_{\alpha,\beta; p}S_{\alpha,\beta,p}(t) A_{\alpha,p}^{\dagger}(t)A_{\beta,p}(t)$ self-adjoint. Diagonalisation of the matrices $\gamma_{p}(t)$, with entries $\gamma_{\alpha,\beta,p}(t)$,  brings the generator $\mathcal{\tilde{L}}(t)$ in the form of Eq.\eqref{gkls_gen}. 
The positivity of the dissipation rates follows from the fact that $\gamma_{p}(t)$ is actually positive semidefinite. Given a generic  $\underline{v}\in\mathbb{C}^N$ and using the explicit form of $\gamma_{\alpha,\beta,p}(t)=\int\limits_{-\infty}^{\infty} dx e^{\imath \partial_{t} \Delta_p(t) x} R_{\alpha,\beta}(x) $, we have
\begin{equation}
(\underline{v},\gamma_p(t)\underline{v})=\int\limits_{-\infty}^{\infty} dx  e^{\imath \partial_{t} \Delta_p(t) x} R(x),
\end{equation}
with $R(x)=\sum\limits_{\alpha,\beta}v_{\alpha}^{*} R_{\alpha,\beta}(x) v_{\beta}$. For any fixed $t$, the scalar product $(\underline{v},\gamma_p(t)\underline{v})$ is nothing but the Fourier transform of $R(x)$, thus by using Bochner's theorem (see for example \cite{Rivas_1}) one can easily see that the Fourier transform of $R(x)$ is a positive function. As a consequence, $\gamma_p(t)$ is a positive semidefinite matrix $\forall t,p$.
\end{proof}

Now that the convergence of the reduced density matrix in the interaction picture is demonstrated, we can obtain the evolution in the Schrödinger picture using the relation $\rho_S(t)=\mathcal{U}_{t,0}\tilde{\rho}_S(t)=U_S(t,0) \tilde{\rho}_S(t)U_S^{\dagger}(t,0)$, with $U_S(t,0)=\mathcal{T}\exp\{ -\imath\int\limits_0^t ds H_S(s)\}$.

\paragraph{Summary of the main results.}
We end this section with a self-contained summary of the novel results obtained. Let $H(t)=H_S(t)+H_B+g H_I$ be the Hamiltonian of system and environment as in Eq.\eqref{general_h}, where $H_S(t)=\sum\limits_n E_n(t)|n_t\rangle\langle n_t|$, $H_I=\sum\limits_{\alpha} A_{\alpha}\otimes B_{\alpha}$, while let $U_S(t)=\mathcal{T}\exp\{\int\limits_0^t ds H_S(s)\}$ be the system time-evolution operator.
The initial state of the environment will be called $\rho_{th}$ and must satisfy $[H_B,\rho_{th}]=0$, such as a thermal state.

In the weak-coupling limit, precisely determined by the conditions discussed in the following points, the exact dynamics of the reduced density matrix $\rho_S(t)$ can be described, in the Schrödinger picture, by the Markovian master equation 
\begin{equation}
\begin{aligned}
\label{time-dep-Markov}
\frac{d}{dt}\rho_S=&-\imath[H_S(t)+g^2 H_{LS}(t),\rho_S] \\
&+ g^2\sum\limits_{\alpha,\beta}\sum\limits_p \gamma_{\alpha,\beta, p}(t)\Big( L_{\beta,p}(t)\rho_S L_{\alpha,p}^{\dagger}(t)\\
&-\frac{1}{2}\{ L_{\alpha,p}^{\dagger}(t)L_{\beta,p}(t),\rho_S \}    \Big),
\end{aligned}
\end{equation}
where the jump operators and the so-called Lamb shift Hamiltonian are
\begin{equation}
\begin{aligned}
L_{\alpha, p}(t)&\stackrel{p=(n,m)}{=}U_S(t)|m_0\rangle\langle m_t|A_{\alpha}|n_t\rangle\langle n_0|U_S^{\dagger}(t), \\
H_{LS}(t)&=\sum\limits_{\alpha,\beta}\sum\limits_p S_{\alpha,\beta,p}(t) L_{\alpha,p}^{\dagger}(t)L_{\beta,p}(t),
\end{aligned}
\end{equation}
whereas
\begin{equation}
\begin{aligned}
\label{general_one_sided}
\int\limits_0^{\infty} dx R_{\alpha,\beta}(x) e^{\imath x(E_n(t)-E_m(t))}\stackrel{p=(n,m)}{=}\\
=\frac{1}{2}\gamma_{\alpha,\beta,p}(t)+\imath S_{\alpha,\beta,p}(t),
\end{aligned}
\end{equation}
and $R_{\alpha,\beta}(t)=tr_B[e^{\imath H_B t}B_{\alpha} e^{-\imath H_B t} B_{\beta}\rho_{th}]$ are the environment correlation functions.

Finally, we highlight some comments and a discussion on the range of validity of the Eq.\eqref{time-dep-Markov}.

\begin{enumerate}

\item
The jump operators in the Schrödinger picture are neither eigenoperators of the time-evolution operator $U_S(t)$ nor of $H_S(t)$, differently from the standard derivation in the time independent scenario.
Consequently, the Lamb shift Hamiltonian does not generally
commute with $H_S(t)$. As a result, there is a unitary correction to the time evolution of the closed system that can induce coherent transitions between the instantaneous energy levels of $H_S(t)$.

\item As already pointed out, Theorem \ref{th_main} provides sufficient conditions such that the solution 
of the Nakajima-Zwanzig Eq.\eqref{N_Z_int} can be approximated with a Markovian evolution.
For finite but small $g$, this approximation is clearly affected by some errors.
A first aspect to point out is given by the fact that our proof, done employing the rescaled time $\tau=t g^2$, guarantees the uniform convergence inside a given interval $0\le \tau\le \tau^*$. For finite $g$, this implies an upper bound to the domain of convergence given by $\tau^* g^{-2}$. However, this problem can be easily overcome if we assume that $\tau^*$ can be taken arbitrary large (see also \cite{Merkli}, regarding validity of Markovian master equations for all time scales).

\item \textit{Markov approximation}. The condition of regularity on the environment correlation functions stated in Eq.\eqref{regularity_cond}, namely $\int\limits_0^{\infty}dt|R_{\alpha,\beta}(t)|\equiv C_{\alpha,\beta}< \infty$, adopts the role of the 
so-called Markov approximation in the limit $g\rightarrow 0$. 
Indeed, it is clear that only for a sufficiently fast decay  in the long-time limit, such as $|R_{\alpha,\beta}(t)|=|tr_B[e^{\imath H_B t}B_{\alpha} e^{-\imath H_B t} B_{\beta}\rho_{th}]|\sim 1/t^{a} $ with $a>1$, the convergence can be guaranteed.

Furthermore, we notice that if $H_B$ has a discrete spectrum then $R_{\alpha,\beta}(t)$ is a periodic function, therefore the only possibility to satisfy this condition is to require that the environment is characterised by a continuous spectrum (see \cite{Rivas_1}).

\item \textit{Secular approximation}. 

This  sufficient  condition is derived from the bound given in Eq.\eqref{bound_secapprox} and Eq.\eqref{general_secular_th}. Let $\{\Omega_p(t)\}_p$ be the set of instantaneous Bohr frequencies of $H_S(t)$, with $\Omega_p(t)= \Omega_q(t), \forall t$, if and only if $p=q$. Moreover, let us indicate with $\mathcal{D}=\{ f:\mathbb{R}\rightarrow \mathbb{R}| \exists\{t_n\}_n \text{\space s.t\space} f(t_n)=0 \}$ the set of real functions which are zero in at most a countable number of points. The condition reads  
\begin{equation}
\begin{aligned}
\label{general_secular}
\frac{\min\limits_{p\neq q; t} |\Omega_p(t)-\Omega_q(t)|}{\max\limits_{\alpha,\beta} C_{\alpha,\beta}} & \gg  g^2, \\
& \text{\space\space if  $\Omega_{p}- \Omega_{q}\notin\mathcal{D} $,} \\
\frac{\min\limits_{p\neq q; t^*} |\dot{\Omega}_p(t^*)-\dot{\Omega}_q(t^*)|}{\max\limits_{\alpha,\beta}C_{\alpha,\beta}^2} & \gg g^2, \\
& \text{\space\space if $\Omega_{p}- \Omega_{q}\in\mathcal{D} $, }
\end{aligned}
\end{equation}
where in the second inequality the set of points $t^*$ are such that $\Omega_p(t^*)-\Omega_q(t^*)=0$, whereas the dot indicates the time derivative.

From a physical view point, this requirement implies that the frequencies related to the transitions between energy levels are sufficiently well spaced, during the entire time interval of interest. In the case of constant Hamiltonian, only the first condition in Eq.\eqref{general_secular} holds and reproduces the standard rotating-wave approximation employed in time-independent Markovian master equations (\cite{Petruccione,Rivas_1}), whereas the second condition is completely novel compared to other approaches and becomes particularly relevant in case at certain times degeneracies appear in the set of Bohr frequencies.
As a consequence, naive applications of rotating-wave-like approximations, which are based on the simple condition of large differences $|\Omega_p(t)-\Omega_q(t)|$, are not able to provide this kind of condition, which stems uniquely from the application of the method of stationary phase.

\item \textit{Condition on the driving}. 

A final consideration concerns Lemma \ref{lemma_2} and the condition derived in Eq.\eqref{driving_cond_lemma}, which corresponds to a constraint on the driving that naturally emerges in our derivation, as result of the combination of time-rescaling and the limit $g\rightarrow 0$. Let  $\mathcal{D}=\{ f:\mathbb{R}\rightarrow \mathbb{R}| \exists\{t_n\}_n \text{\space s.t\space} f(t_n)=0 \}$ be the set defined previously, the condition reads
\begin{equation}
\begin{aligned}
\label{driving_cond}
\frac{\max\limits_{n\neq m; t} |\langle n_t|\frac{d}{dt}|m_t\rangle |}{\min\limits_{n\neq m; t}|E_n(t)-E_m(t)|} & \ll g^{-2}, \\
& \text{\space if $E_n-E_m \notin\mathcal{D}$}, \\
\frac{\max\limits_{n\neq m; t^{*}} 2\pi |\langle n_t|\frac{d}{dt}|m_t\rangle(t^{*}) |^2}{\min\limits_{n\neq m; t^{*}}|\dot{E}_n(t^{*})-\dot{E}_m(t^{*})|} & \ll g^{-2},  \\
& \text{\space if $E_n-E_m  \in\mathcal{D},$} \\
\end{aligned}
\end{equation}
with $t^{*}$ such that $E_n(t^{*})-E_m(t^{*})=0$.

Interestingly, $U_R$ defined in Lemma \ref{lemma_2} has the same structure of the so called adiabatic time-evolution operator, which takes the form
\begin{equation}
\label{ad_op_gen}
U_S^{ad}(t)=\sum\limits_{n} e^{-\imath\int\limits_0^t ds (E_n(s)+\phi_n^B(s) )} |n_t\rangle\langle n_0|,
\end{equation}
with $\phi_n^B(t)=\langle n_t|\frac{d}{dt}|n_t\rangle$. In the standard formulation of the adiabatic theorem
(\cite{Sakurai}), the operator in Eq.\eqref{ad_op_gen} approximates the time-evolution operator if the so called adiabatic condition, namely
\begin{equation}
\label{adiabatic_cond}
\frac{\max\limits_{n\neq m; t\in[0,t^*]} |\langle n_t|\frac{d}{dt}|m_t\rangle |}{\min\limits_{n\neq m; t\in[0,t^*]}|E_n(t)-E_m(t)|}\ll 1  ,
\end{equation}
is satisfied, within the interval of interest $[0,t^*]$. Given that $g\ll 1$, the inequality \eqref{driving_cond} clearly extends way beyond the constraint imposed by adiabaticity, allowing us to explore regimes of strong driving. Notice that the condition in Eq.\eqref{adiabatic_cond} also expresses the  basic assumption underpinning the derivation of the adiabatic master equation \cite{Zanardi}. In light of this consideration, despite the use of the adiabatic time-evolution operator as a consequence of Lemma \ref{lemma_2}, we underline that Eq.\eqref{time-dep-Markov} is a good candidate to describe driven systems outside the adiabatic regime.
Naturally, our master equation reproduces by construction the adiabatic master equation, an aspect that will be analysed more in detail in the next section.

\end{enumerate}

\section{First example: periodically driven qubit system}

To illustrate the performance of our derived master equation, we consider the specific case of periodic driving on a quantum two-level system (\emph{qubit}) \cite{Grossmann, Grifoni}.
Originally, this was also the first example of a rigorous derivation of time-dependent Markovian master equations that could be used to analyse strongly driven systems using Floquet theory
\cite{Chu_1,Hanggi_1,Breuer,Chu_2,Alicki_3, Floquet,Hartmann}.
Indeed, when it is possible to analytically find the so called Floquet representation of the time-evolution operator $U_S(t)$, a Markovian master equation can be derived by using the very same procedure as in the time-independent scenario. However, for many interesting and non-trivial cases finding this representation is a very hard task and solutions are known only for a few special cases. 
We also point out that recent examples of Markovian master equations beyond the adiabatic regime rely on the Floquet theory when describing periodic drivings (\cite{Dann}).
Our approach took a different path altogether, offering an alternative and feasible method for addressing strong periodic drivings.
In order to test our approach, we now study and benchmark against tensor network simulations a model whose Floquet representation is not known analytically, namely a time-dependent dynamics of the form,
\begin{equation}
\begin{aligned}
H_S(t)&=\left(
\begin{array}{cc}
\omega_0 & \Omega\sin(\omega t) \\
\Omega\sin(\omega t) & -\omega_0 \\
\end{array}
\right), \\
&\equiv \omega_0\sigma_z + \Omega\sin(\omega t)\sigma_x ,
\end{aligned}
\end{equation}
where $\sigma_x,\sigma_z$ are the usual Pauli matrices and $\omega_0$, $\omega$, $\Omega$ are real, positive parameters denoting the system's natural frequency, the frequency of the driving field and the so-called Rabi frequency, a measure of the driving strength, respectively. 

The driven system is coupled bilinearly to a bosonic environment consisting of independent harmonic oscillators and described by the total Hamiltonian (in natural units $\hbar=1$):
\begin{equation}
H=H_S(t)+\sum\limits_j w_j b^{\dagger}_j b_j +\sum\limits_j \sigma_x \otimes g_j(b_j+b^{\dagger}_j),
\end{equation}
where the bosonic creation/annihilation operators satisfy the usual canonical commutation relations $[b_i,b^{\dagger}_j]=\delta_{ij}$, $[b_i,b_j]=0$  and $w_j>0$ without loss of generality. 
For our analysis, system and environment are initially prepared in the thermal states $\rho_S(0)=e^{-H_S(0)/T_S}/Z_S$ and $\rho_{th}=e^{-H_B/T_B}/Z_B$, with $T_S,T_B$ denoting, respectively, the system and environmental temperature.
The coefficients $g_j$ are related to the environment spectral density $J(w)=\sum_j g_j^2\delta(w-w_j)$ via the identity $\sum_j g_j^2=\int_0^{\infty}dw J(w)$. In the continuum limit, we will consider specifically an Ohmic spectral density with exponential cutoff as $J(w)=a we^{-w/w_c}$ with $a,w_c>0$.

The Markovian master equation for the driven two-level system, derived from Eq.\eqref{time-dep-Markov}, reads
\begin{equation}
\begin{aligned}
\label{generic_TDME}
\frac{d}{dt}\rho_S=&-\imath[H_S(t)+H_{LS}(t),\rho_S] \\
&+\sum\limits_{p=0,\pm}\gamma_p(t)\mathcal{D}[L_p(t)]\rho_S,
\end{aligned}
\end{equation}
where the dissipator is of Lindblad form, namely $\mathcal{D}[L_p]\rho=L_p \rho L_p^{\dagger}-\frac{1}{2}\{L_p^{\dagger}L_p , \rho \}$, the time-dependent jump operators expressed in the Schrödinger picture are given by
\begin{equation}
\label{jump_periodic}
\begin{aligned}
L_0(t)&=\sin\varphi\Big( \frac{1}{2}(|\alpha|^2-|\beta|^2)\sigma_z-2\alpha\beta\sigma_+ + h.c. \Big), \\
L_+(t)&=\cos\varphi\Big( \alpha\beta^*\sigma_z + \alpha^2\sigma_+ -\beta^{* 2}\sigma_-  \Big), \\
L_-(t)&=\cos\varphi\Big( \alpha^*\beta\sigma_z + \alpha^{* 2}\sigma_- -\beta^{ 2}\sigma_+  \Big), \\
\end{aligned}
\end{equation}
with $\tan\varphi(t)=\Omega\sin(\omega t)/\omega_0$, while the complex functions $\alpha(t),\beta(t)$ constitute the entries of the time-evolution operator
\begin{equation}
U_S(t)=\left(
\begin{array}{cc}
\alpha(t)& \beta(t) \\
-\beta^*(t) & \alpha^*(t) \\
\end{array}
\right) 
\end{equation}
and are determined by the solution of the system of coupled differential equations
\begin{equation}
\label{system_param}
\begin{aligned}
\imath\frac{d}{dt}\alpha&=\omega_0\alpha-\Omega\sin(\omega t)\beta^* , \\
\imath\frac{d}{dt}\beta^*&=-\omega_0\beta^*-\Omega\sin(\omega t)\alpha, \\
\end{aligned}
\end{equation}
with initial conditions $\alpha(0)=1,\beta(0)=0$.
The corresponding dissipation rates are given by
\begin{equation}
\label{eq_rates}
\begin{aligned}
\gamma_0(t)&=4\pi a T_B , \\
\gamma_+(t)&=2\pi J(2 E(t))\bar{n}_{BE}(2E(t);T_B) ,\\
\gamma_-(t)&=2\pi J(2E(t))(1+\bar{n}_{BE}(2E(t);T_B)) , \\
\end{aligned}
\end{equation}
where $E(t)=\sqrt{\omega_0^2+\Omega^2\sin^2(\omega t)}$, $\bar{n}_{BE}(w,T_B)=1/(e^{w/T_B}-1)$ is the Bose-Einstein distribution and the Lamb shift Hamiltonian
\begin{equation}
\begin{aligned}
H_{LS}(t)=&-\frac{1}{2}S(t)\cos^2\varphi\Big( (|\alpha|^2-|\beta|^2)\sigma_z \\
&-2\alpha\beta\sigma_+ -2\alpha^*\beta^*\sigma_-  \Big),
\end{aligned}
\end{equation}
with
\begin{equation}
\begin{aligned}
\label{S_t}
S(t)=& \int\limits_0^{\infty}dw J(w)(1+2\bar{n}_{BE}(w))\\
&\times \Big[ \mathcal{P}\Big( \frac{1}{w+2E(t)}\Big) - \mathcal{P}\Big( \frac{1}{w-2E(t)}\Big) \Big].
\end{aligned}
\end{equation}
Note that $H_{LS}(t)$ does not generally commute with the system Hamiltonian. Specific details of the derivation can be found in Appendix A. We notice that, despite the general structure in Eq.\eqref{jump_periodic}, the  explicit form of $\alpha(t),\beta(t)$ is not known in general and have to be computed numerically. Hence, one should in general rely on a semi-analytical approach, where the jump operators can be only parametrised in terms of $\alpha,\beta$. 

The validity of the master equation is determined by the following conditions.
\begin{enumerate}
\item \textit{Weak-coupling limit}. The strength of the interaction between the system and the environment is determined by the set of coefficients $g_j$, or in the continuum limit by the spectral density $J(w)$. Notice that, in units of $\hbar=1$, the ohmic spectral density $J(w)=awe^{-w/w_c}$ has the dimension of an energy, therefore $a$ is an adimensional parameter that we can conveniently identify with $g^2$, according to the identity $\sum_j g_j^2=\int_0^{\infty}dw J(w)$. In the parameter regime considered here, a typical requirement of weak coupling corresponds to $g \sim 10^{-2}$.

\item \textit{Markov approximation}.  The environment correlation function $R(t)=tr_B[e^{\imath H_B t}Be^{-\imath H_B t}  B\rho_{th}]$ needs to satisfy the condition of integrability, i.e. $\int\limits_0^{\infty}dt |R(t)|\equiv C<\infty$ (Eq.\eqref{regularity_cond}). Given that $g\equiv \sqrt{a}$, the operator $B$ in the interaction Hamiltonian in Eq.\eqref{general_h} can be identified as $B=\frac{1}{\sqrt{a}}\sum\limits_j g_j (b_j+b_j^{\dagger})$.

The proof can be found in the Appendix C.

\item  \textit{Secular approximation}. This is determined by the requirement in Eq.\eqref{general_secular}, which reads here 
\begin{equation}
\min_{p\neq q;t}|\Omega_p(t)-\Omega_q(t)|=2\omega_0 \gg a C,
\end{equation}
with $\{0,\pm 2E(t) \} $ the set of Bohr frequencies. 

\item \textit{Condition on the driving}. A constraint on the driving is given in Eq.\eqref{driving_cond} and reads 
\begin{equation}
\frac{\max\limits_{n\neq m; t\in[0,t^*]} |\langle n_t|\frac{d}{dt}|m_t\rangle |}{\min\limits_{n\neq m; t\in[0,t^*]}|E_n(t)-E_m(t)|}= \frac{\Omega\omega}{4\omega_0^2}\ll a^{-1}  .
\end{equation}
Notice that for $a\sim 10^{-3}$ this implies a upper bound for the adiabatic parameter correspondent to $10^{3}$, which means that we can safely explore ultrastrong driving regimes without breaking this requirement.

\end{enumerate}

\begin{figure*}[ht!]
\centering
\includegraphics[width=0.9\textwidth]{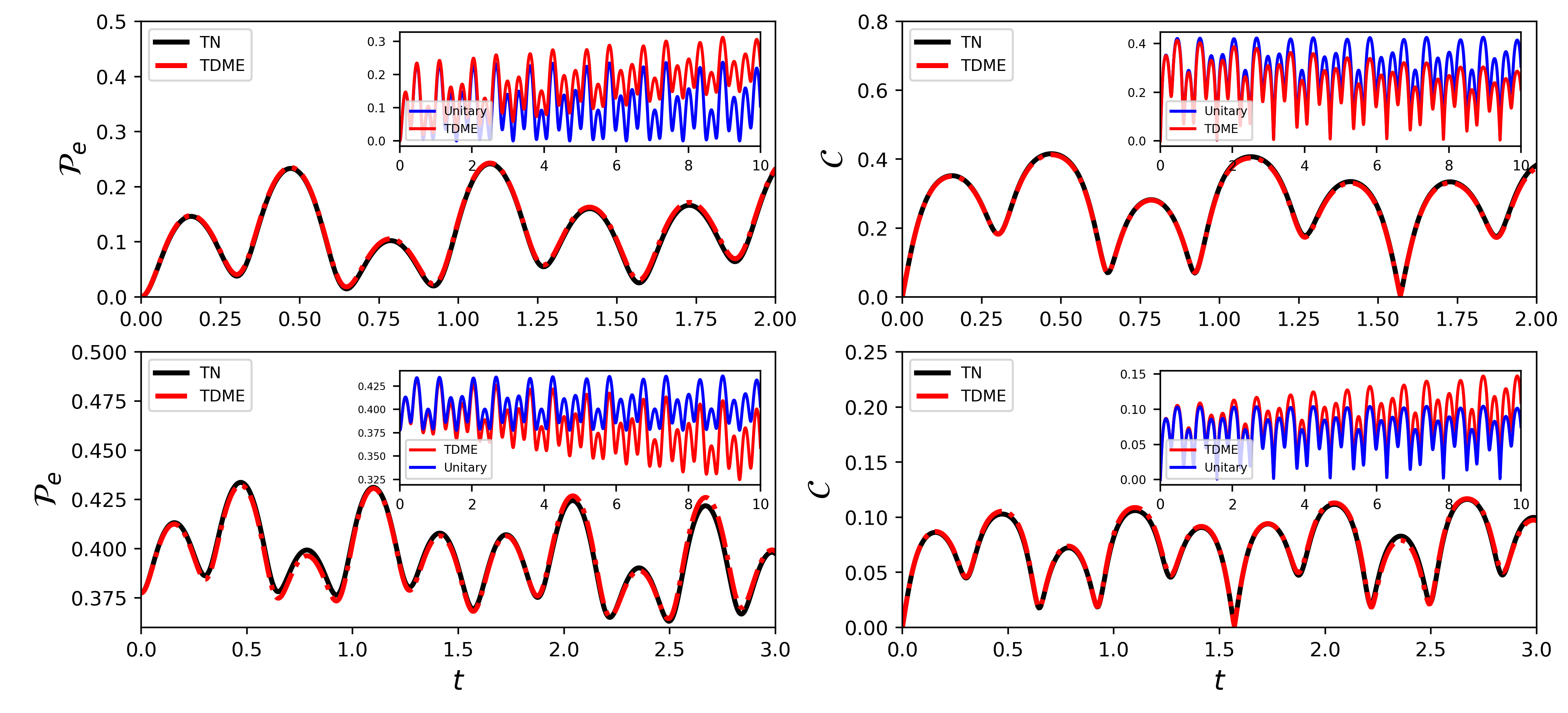}
\captionof{figure}{Population of the excited state $\mathcal{P}_e(t)$ (left column) and coherence $\mathcal{C}(t)$ (right column) in the strong driving regime $\lambda=\lambda_{\Omega}\lambda_{\omega}=10$ ($\lambda_{\Omega}=1$, $\lambda_{\omega}=10$). We plot different solutions, corresponding to tensor network simulations (TN), our time-dependent master equation (TDME) and the purely unitary evolution, i.e. the evolution of the system if $g=0$. The parameters $a=5\times 10^{-3}$ and $w_c=2$, which appear in the bath spectral density $J(w)=a we^{-w/w_c}$, are fixed. The system parameters are $\omega_0=1, \Omega=1, \omega=10$. Both system and environment are initially prepared in a thermal state at temperatures, respectively, $T_S=10^{-1}, T_B=4$ for the upper plots, $T_S=4, T_B=10^{-1}$ for the lower plots.
The oscillating behaviour is due to the large frequency difference $\omega\gg \omega_0$, which allows to easily achieve strong driving regimes even for $\lambda_{\Omega}\sim 1$. Close to resonance $\lambda_{\omega}\sim 1$, significant departures from adiabaticity ($\lambda\gg 1$) are accessible only for $\lambda_{\Omega}\gg 1$. However, this typically implies a large increase of the gap with respect to the dissipation rates, at the disadvantage of a dominant contribution of the unitary part of the dynamics.
The numerically exact tensor network simulations are given by a second order Time-Evolving Block Decimation (TEBD) for the system and environment density matrix, with $N=100$ harmonic oscillators. As explained more in detail in the Appendix C, the tensor network simulations can be exploited up to a certain $t_{max}$, mostly dependent on the finite number of oscillators and the temperature $T_B$; hence, the upper insets display the evolution in a wider time window. The bond dimension used in these simulations is $\chi=40$, whereas the maximum dimension of the oscillator Hilbert space in the environment is $d_{max}\sim 20$ (high temperature case).
When the environment temperature is larger than the system one, $T_B>T_S$, we observe wider excitations in the population $\mathcal{P}_e(t)$, respect to the purely unitary dynamics, whereas the coherence generated is lower. On the other hand, in the opposite case as $T_B<T_S$, the dynamics predicted by the TDME shows a higher amount of coherence generated, in agreement with the tensor network simulations,
while the excitations in the population are reduced, as expected for a system subject to cooling.} 
\label{fig: periodic_1}
\end{figure*}

As anticipated, the dynamics predicted by the master equation will be benchmarked against numerically exact tensor network simulations (\cite{TN_1,TN_2,TN_3}, while we refer those readers unfamiliar with this methodology to Appendix D for a brief introduction to the topic) and the so called adiabatic master equation \cite{Zanardi}
\begin{equation}
\begin{aligned}
\label{ME_adiab}
\frac{d}{dt}\rho_S=&-\imath[H_S(t)+H_{LS}^{ad}(t),\rho_S]\\
&+\sum\limits_{p=0,\pm}\gamma_p(t)\mathcal{D}[L_p^{ad}(t)]\rho_S,
\end{aligned}
\end{equation}
where the dissipation rates are exactly the same in Eq.\eqref{eq_rates}, while the jump operators and the Lamb shift Hamiltonian read 
\begin{equation}
\label{jump_ad}
\begin{aligned}
L_0^{ad}(t)=&\sin\varphi\Big(\sigma_z \cos\varphi+\sigma_x\sin\varphi\Big), \\
L_+^{ad}(t)=&\cos\varphi\Big( -\frac{1}{2}\sigma_z\sin\varphi\\&+\sigma_+\cos^2\frac{\varphi}{2}-\sigma_-\sin^2\frac{\varphi}{2} \Big), \\
L_-^{ad}(t)=&\cos\varphi\Big( -\frac{1}{2}\sigma_z\sin\varphi\\&+\sigma_-\cos^2\frac{\varphi}{2} -\sigma_+\sin^2\frac{\varphi}{2} \Big), \\
H_{LS}^{ad}(t)=& -\frac{1}{2}S(t)\cos^2(\varphi) H_S(t). \\
\end{aligned}
\end{equation}
The adiabatic condition in Eq.\eqref{adiabatic_cond} would now be expressed in the form:
\begin{equation}
\frac{\max\limits_{n\neq m; t\in[0,t^*]} |\langle n_t|\frac{d}{dt}|m_t\rangle |}{\min\limits_{n\neq m; t\in[0,t^*]}|E_n(t)-E_m(t)|}=  \frac{\Omega\omega}{4\omega_0^2}\ll 1  .
\end{equation}

For our analysis, we introduce the useful adimensional ratios $\lambda_{\Omega}\equiv \Omega/\omega_0$, $\lambda_{\omega}\equiv \omega/\omega_0$, such that $\lambda=\lambda_{\Omega}\lambda_{\omega}$ is the adiabatic parameter. 

In the following, we characterise some specific properties of the evolution under the master Eq.\eqref{generic_TDME}, that we will refer to TDME for short, benchmark it against tensor network simulations, the adiabatic master equation in Eq.\eqref{ME_adiab} (ADME) and the purely unitary dynamics as given by the Von Neumann equation. The differential equations which appear throughout this work are solved by means of a fourth-order Runge-Kutta method.

\paragraph{Strong driving regime: $\lambda\gg 1$.}

We begin by exploring the strong driving and dispersive regime, as given by $\lambda_{\Omega},\lambda_{\omega}\gg 1$.
In order to do so, we compute the population of the excited state $\mathcal{P}_{e}(t)=\langle e_t|\rho_S(t)|e_t\rangle$ and the coherence $\mathcal{C}(t)=|\langle e_t|\rho_S(t)|g_t\rangle|$, in the instantaneous eigenbasis $\{|n_t\rangle \}_{n=e,g}$ of $H_S(t)$, where the subscript e(g) indicates the excited (ground) state. 

Some results are shown in Fig.\eqref{fig: periodic_1}.  For $T_B>T_S$, the system is characterised by larger excitations in the population $\mathcal{P}_e$, with respect to the closed case (unitary dynamics).
On the other hand, for $T_S>T_B$ the TDME exhibits more generation of coherence than the purely unitary dynamics, as already observed in \cite{Dann}. 
This feature is uniquely due to the dissipator and does not stem from the Lamb shift contribution, which is actually negligible in our specific case (see discussion below).

\begin{figure*}[ht!]
\centering
\includegraphics[width=0.9\textwidth]{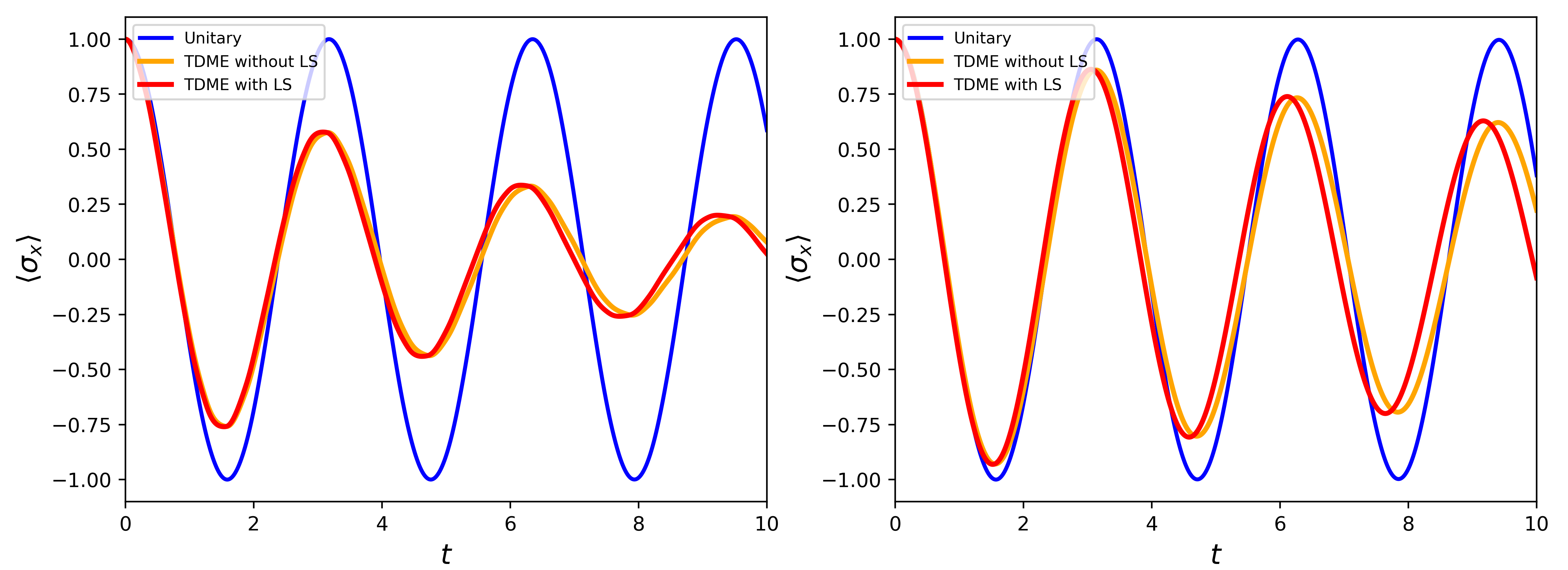}
\captionof{figure}{Transverse magnetization $\langle\sigma_x(t)\rangle$. Analysis of the Lamb shift contribution. The system is initially prepared in a coherent superposition $|\psi\rangle=\frac{1}{\sqrt{2}}(|g_0\rangle + |e_0\rangle)$, such that $\rho_{12}(0)\neq 0$, as explained in the main text. The spectral density parameters and temperature are fixed and given by $a=5\times 10^{-3}$, $w_c=2$, $T_B=4$, whereas in the plot on the left we have $\omega_0=1,\Omega=1,\omega=10$ ($\lambda=10$) and on the right is $\omega_0=1,\Omega=10^{-1},\omega=10^{-1}$ ($\lambda=10^{-2}$).
The plots show the expectation values obtained via TDME in Eq.\eqref{generic_TDME}, with and without Lamb shift term. The main effect of the Lamb shift is to produce a shift in the curves, as already pointed out in \cite{Rivas_2}; however, differently from this previous work, in the presence of driving this effect is not uniform throughout the whole evolution and it can change over time, as we notice in the right plot. }
\label{fig: Periodic_2}
\end{figure*}

\paragraph{Effect of Lamb shift contribution.}

In the case of time-independent Markovian master equations, the Lamb shift contribution commutes with the system Hamiltonian and typically produces a small shift in the energy levels; therefore, its contribution is commonly neglected, although some works provided cases where it gives sizeble effects (as \cite{Rivas_2} for the time-independent case, or \cite{Whitney} in the presence of driving). 
 
In order to analyse the effect of the Lamb shift contribution it is convenient to revert to 
the interaction picture. In this case, the master Eq.\eqref{generic_TDME} assumes the simple form
\begin{equation}
\begin{aligned}
\frac{d}{dt}\tilde{\rho}_S=&-\imath[\tilde{H}_{LS}(t),\tilde{\rho}_S] \\
&+\sum\limits_{p=0,\pm}\gamma_p(t)\mathcal{D}[A_p(t)]\tilde{\rho}_S,
\end{aligned}
\end{equation}
with
\begin{equation}
\begin{aligned}
A_0&=\sigma_z\sin\varphi(t), \\
A_+&=\sigma_+\cos\varphi(t), \\
A_-&=\sigma_-\cos\varphi(t), \\
\tilde{H}_{LS}&=-\frac{1}{2}\sigma_z S(t)\cos^2\varphi(t) .
\end{aligned}
\end{equation}
The density matrix can be parametrised in terms of the expectation values of $\sigma_z,\sigma_{\pm}$ as 
\begin{equation}
\tilde{\rho}_S(t)=\frac{1}{2}\Big(1+\sum\limits_{k=z,\pm}\tilde{c}_k(t)\sigma_k \Big),
\end{equation}
where $\tilde{c}_k(t)=tr[\sigma_k\tilde{\rho}_S(t)]$, $k=z,\pm$, are analogous to the components of the Bloch vector of the system. Using this representation, 
the equations of motion for $\tilde{c}_k(t)$ can be easily integrated, leading to the general solution
\begin{equation}
\begin{aligned}
\tilde{c}_z(t)=&c_z(0)e^{-\zeta(t)} +\int\limits_0^t ds \Big(\gamma_+(s)
-\gamma_-(s)\Big)\\ &\times \cos^2\varphi(s)e^{-(\zeta(t)-\zeta(s))}, \\
\tilde{c}_+(t)=&c_+(0)\exp\Big\{ -\imath\int\limits_0^t ds S(s)\cos^2\varphi(s)\Big\}\\
&\times \exp\Big\{ -\frac{1}{2}\zeta(t)-2\gamma_0\int\limits_0^t ds \sin^2\varphi(s)\Big\} , \\
\end{aligned}
\end{equation}
with $\zeta(t)=\int\limits_0^t ds (\gamma_+(s)+\gamma_-(s))\cos^2\varphi(s)$. 

We notice that the contribution given by the Lamb shift enters only in the off-diagonal terms, therefore if the initial state has zero coherence in the eigenbasis of $H_S(0)$, i.e. $\rho_{12}(0)\equiv \frac{1}{2}c_+(0)= 0$, the Lamb shift does not play any role, regardless of the choice of the temperature $T_B$ and the driving. This is the case of thermal states studied so far. As a consequence, with the purpose of highlighting the impact of the Lamb shift on the dynamics, we consider an initial preparation for the two-level system in a coherent superposition
given by $|\psi(0)\rangle=\frac{1}{\sqrt{2}}(|g_0\rangle + |e_0\rangle)$ and compare the dynamics generated by the master Eq.\eqref{generic_TDME} with and without Lamb shift, as shown in Fig.\eqref{fig: Periodic_2}.
For both cases of strong and adiabatic driving (see below for a detailed discussion), given respectively by $\lambda=10$, $\lambda=10^{-2}$, we observe that the Lamb shift produces a shift in the transverse magnetization $\langle\sigma_x(t)\rangle$, which is not uniform throughout the time evolution and has a different magnitude depending on the strength of the driving.

\paragraph{Adiabatic regime: $\lambda\ll 1$.}

When the adiabatic condition \eqref{adiabatic_cond} is satisfied, we can formally replace $U_S(t)\rightarrow U_S^{ad}(t)$ in Eq.\eqref{time-dep-Markov} and the TDME reproduces the ADME by construction. 

In order to state this precisely, let us study the explicit form of the time-evolution operator when $\lambda\ll 1$. We consider initially the case $\lambda_{\Omega}\ll 1$, which we simply refer to as weak driving, while $\lambda_{\omega}$ is arbitrary for the moment. From Eq.\eqref{system_param} we perform the substitution $x=\omega_0 t\equiv t/t_s$ obtaining
\begin{equation}
\begin{aligned}
\imath\frac{d}{dx}\alpha&=\alpha-\lambda_{\Omega} \sin\Big(\lambda_{\omega}x\Big)\beta^*, \\
\imath\frac{d}{dx}\beta^*&=-\beta^*-\lambda_{\Omega}\sin\Big(\lambda_{\omega}x\Big)\alpha. \\
\end{aligned}
\end{equation}
We can find approximate solutions expanding  in powers of $\epsilon$ both $\alpha(x)=\sum\limits_{n=0}^{\infty} \epsilon^n \alpha_n(x)$ and $\beta(x)=\sum\limits_{n=0}^{\infty} \epsilon^n \beta_n(x)$, where $\epsilon$ is a small adimensional parameter, such as $\lambda_{\Omega}$ in this case.
Up to second order corrections we find
\begin{equation}
\label{param_periodic_weak}
\begin{aligned}
\alpha(x)=&e^{-\imath x} +O(\lambda_{\Omega}^2), \\
\beta(x) =& \frac{\imath}{2}\lambda_{\Omega}e^{-\imath x}\Big(\frac{e^{-\imath(\lambda_{\omega}-2)x }-1}{\lambda_{\omega}-2}  \\
&+ \frac{e^{\imath(\lambda_{\omega}+2)x }-1}{\lambda_{\omega}+2} \Big) +O(\lambda_{\Omega}^2).\\
\end{aligned}
\end{equation}
Furthermore, if we consider the additional requirement $\lambda_{\omega}\ll 1$ in Eq.\eqref{param_periodic_weak} a first order expansion gives
\begin{equation}
\label{param_period_weak}
\begin{aligned}
\alpha(x)=&e^{-\imath x} +O(\lambda_{\Omega}^2), \\
\beta(x) =&\frac{\imath}{4}\lambda_{\Omega}\lambda_{\omega}(e^{-\imath x}-e^{\imath x}\cos(\lambda_{\omega} x)) \\
&-\frac{1}{2}\lambda_{\Omega} \sin(\lambda_{\omega} x)e^{\imath x}+O(\lambda_{\Omega}^2; \lambda_{\omega}^2).\\
\end{aligned}
\end{equation}

The conditions $\lambda_{\Omega},\lambda_{\omega}\ll 1$ allow us to achieve adiabaticity.
Notice that, for instance, one can also consider to drastically reduce $\lambda_{\Omega}$ keeping $\lambda_{\omega}\sim 1$ and still obtain $\lambda\ll 1$, although this typically requires considerably small values for the parameters in the Hamiltonian.

\begin{figure*}[ht!]
\centering
\includegraphics[width=0.9\textwidth]{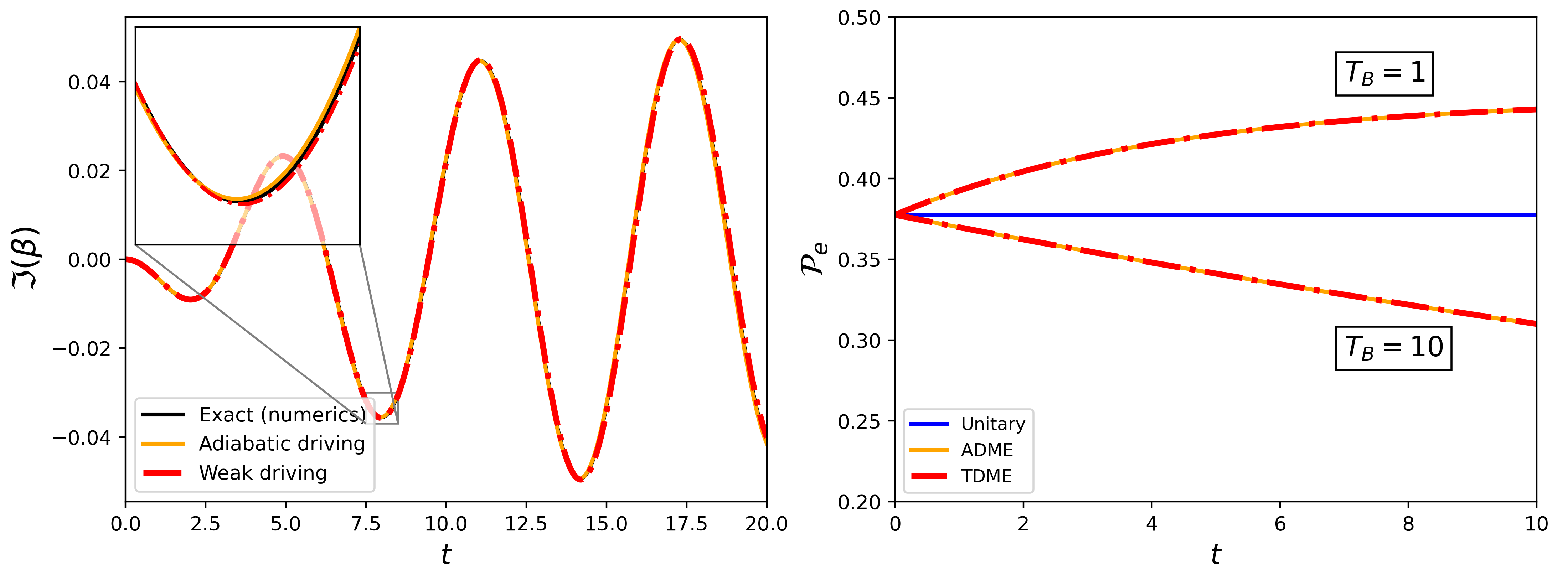}
\captionof{figure}{Left panel: imaginary part of $\beta(t)$. The spectral density parameters are $a=5\times 10^{-3}$, $w_c=2$. The system parameters are $\omega_0=1,\Omega=10^{-1},\omega=10^{-1}$.
The three curves, correspondent to the exact numerical solution of Eq.\eqref{system_param}, the approximated analytic solution in Eq.\eqref{param_period_weak} (red dashed line) and \eqref{param_period_ad} (orange line), are overlapped for $\lambda_{\Omega}=\lambda_{\omega}=10^{-1}$ and therefore the closed dynamics is well approximated by the adiabatic time-evolution operator. 
Right panel: Population of the excited state $\mathcal{P}_e$. The initial state is a  Gibbs state at temperature $T_S=4$ with respect to $H_S(0)$. For $\lambda=10^{-2}$, the TDME reduces to the ADME, as shown for both the regimes $T_B>T_S$, $T_B<T_S$.} 
\label{fig: periodic_3}
\end{figure*}

The adiabatic time-evolution operator in Eq.\eqref{ad_op_gen} reads here
\begin{equation}
\label{time_ev_ad}
\left(
\begin{array}{cc}
\alpha^{ad} & \beta^{ad} \\
-\beta^{ad *} & \alpha^{ad *} \\
\end{array}
\right) ,
\end{equation}
where
\begin{equation}
\begin{aligned}
\alpha^{ad}&=\exp\Big\{-\imath\int\limits_0^t ds \sqrt{\omega_0^2+\Omega^2\sin^2(\omega s)} \Big\} \cos\frac{\varphi}{2} ,\\
\beta^{ad}&=-\exp\Big\{ \imath\int\limits_0^t ds \sqrt{\omega_0^2+\Omega^2\sin^2(\omega s)} \Big\} \sin\frac{\varphi}{2} .\\
\end{aligned}
\end{equation}

We expand these expressions in powers of $\lambda_{\Omega},\lambda_{\omega}\ll 1$, arriving at
\begin{equation}
\label{param_period_ad}
\begin{aligned}
\alpha^{ad}&\stackrel{\rm x=\omega_0 t }{=}e^{-\imath x}+O(\lambda_{\Omega}^2) ,\\
\beta^{ad}&\stackrel{\rm x=\omega_0 t }{=}-\frac{1}{2}\lambda_{\Omega}\sin(\lambda_{\omega}x) e^{\imath x}+ O(\lambda_{\Omega}^2; \lambda_{\omega}^2).\\
\end{aligned}
\end{equation}
From a comparison with Eq.\eqref{param_period_weak}, we see that $\alpha(x)-\alpha^{ad}(x)=O(\lambda_{\Omega}^2)$ and $\beta(x)-\beta^{ad}(x)=O(\lambda_{\Omega}\lambda_{\omega})$,  hence providing the condition $O(\lambda_{\Omega})=O(\lambda_{\omega})$ to be satisfied the closed dynamics approaches adiabaticity. 
A specific example is illustrated in Fig.\eqref{fig: periodic_3}, where already for $\lambda_{\Omega}=\lambda_{\omega}=10^{-1}$ we find a good overlapping between the solutions given by the TDME and ADME.

\section{Second example: interacting qubits}

As a second example, we study the dynamics of two interacting qubits in the presence of both dissipation and driving. 
The analysis of an open many-body system, whose components are locally coupled to some environment, raises the question of whether a master equation description in terms of local jump operators is possible or not.
This so-called 'local vs global' description has attracted great interest, as witnessed by the vast literature on the topic (as for example \cite{Rivas_2,Alicki_0,Levy,Trushechkin_1,Gonzales,Hofer,Mitchison,Dechiara,Cattaneo,Benatti,
Farina_2,Potts,Scali,Konopik,Tupkary}), although the case with driven dissipative systems has received significantly less attention.

In the model we analyse, we assume that only one qubit directly interacts with the environment, whereas the other one is driven, as described by the total Hamiltonian
\begin{equation}
H=H_S(t) + \sum\limits_j w_j b_j^{\dagger} b_j + \sigma_x^{(2)}\otimes\sum\limits_j g_j(b_j+b_j^{\dagger}),
\end{equation}
with system Hamiltonian
\begin{equation}
H_S(t)= \frac{\omega_1(t)}{2}\sigma_z^{(1)}+\frac{\omega_2}{2}\sigma_z^{(2)}+\lambda(\sigma_+^{(1)}\sigma_-^{(2)}+h.c).
\end{equation}
We assume that the driving consists of a periodic modulation of the first qubit frequency,
\begin{equation}
\omega_1(t)=\omega_2+\delta\sin(\eta t),
\end{equation}
with $\omega_2,\delta,\eta>0$ and we always consider $\omega_2\ge \delta$, in order to keep $\omega_1(t)\ge 0, \forall t$. Moreover, it is convenient to introduce the parameters $\omega_{\pm}(t)=\frac{1}{2}(\omega_1(t)\pm\omega_2)$ and $\tan(2\theta(t))=\lambda/\omega_-(t)$. Differently from the previous case, it will be sufficient for our purposes to consider the environment at zero temperature $T_B=0$.

As shown in Appendix B, the master equation in the Schrödinger picture can be expressed in the form
\begin{equation}
\begin{aligned}
\label{me_twoqubits}
\frac{d}{dt}\rho_S=&-\imath[H_S(t)+H_{LS}(t),\rho_S] \\
&+\sum\limits_{p=a,b}\gamma(\Omega_{p}(t))\mathcal{D}[L_p(t)]\rho_S,
\end{aligned}
\end{equation}
where 
\begin{equation}
\begin{aligned}
L_a(t)&=\sin\theta\Big(|\psi_0(t)\rangle \langle\psi_2(t)|  -|\psi_1(t)\rangle \langle\psi_3(t)|  \Big), \\
L_{b}(t)&=\cos\theta\Big(|\psi_0(t)\rangle \langle\psi_1(t)|  +|\psi_2(t)\rangle \langle\psi_3(t)|  \Big), \\
\end{aligned}
\end{equation}
are the jump operators, with $|\psi_n\rangle$ solution of the Schrödinger equation $\imath\frac{d}{dt}|\psi_n\rangle=H_S(t)|\psi_n\rangle$ with initial condition the eigenvector $|E_n(0)\rangle$ of $H_S(0)$.
The instantaneous Bohr frequencies are given by
\begin{equation}
\begin{aligned}
\Omega_{a}\equiv\omega_+(t) + \sqrt{\lambda^2+ \omega_-^2(t)},  \\
\Omega_{b}\equiv\omega_+(t) - \sqrt{\lambda^2+ \omega_-^2(t)}, \\
\end{aligned}
\end{equation}
\begin{equation}
\begin{aligned}
H_{LS}(t)=\sum\limits_{p=a,b}\Big( & S(\Omega_p(t))L_{p}^{\dagger}(t)L_{p}(t)\\
+ & S(-\Omega_p(t))L_{p}(t)L_{p}^{\dagger}(t)\Big),
\end{aligned}
\end{equation}
is the Lamb shift contribution, whereas
\begin{equation}
\label{twoqubit_corr}
\begin{aligned}
\gamma(\Omega)&=\begin{cases}
2\pi J(\Omega)  \text{\space\space if $\Omega>0$}, \\
0  \text{\space\space\space\space\space\space\space\space\space\space otherwise},\\
\end{cases} \\
S(\Omega)&= \mathcal{P}\int\limits_0^{\infty} dw \frac{J(w)}{\Omega-w},
\end{aligned}
\end{equation}
assuming as before an ohmic spectral density $J(w)=a w e^{-w/w_c}$.
We refer the reader to Appendix B for the detailed derivation.

Besides the condition on the environment correlation function and weak coupling, already studied in the previous example, the validity of the master equation Eq.\eqref{me_twoqubits}  is assured by a set of sufficient conditions, namely:

\begin{enumerate}
\item \textit{Full-secular approximation}.
The explicit expression of this condition depends on the values of the parameters, according to the prescription given in Eq.\eqref{general_secular}. For the sake of notation, we define the intervals
\begin{equation}
\begin{aligned}
I_1&=\Big(0, \sqrt{(\omega_2-\delta)\omega_2}\Big), \\
I_2&=\Big[\sqrt{(\omega_2-\delta)\omega_2 },\sqrt{(\omega_2+\delta)\omega_2}\Big], \\
I_3& = \Big(\sqrt{(\omega_2+\delta)\omega_2},\infty\Big),
\end{aligned}
\end{equation}
for the coupling $\lambda$.

\begin{enumerate}
\item If $\lambda\in I_1$, the condition reads
\begin{equation}
\begin{aligned}
\label{sec_cond_1}
\min\{2\omega_2-\delta-\sqrt{4\lambda^2+\delta^2}, 2\lambda \} \gg a w_c.\\
\end{aligned}
\end{equation}

\item If $\lambda\in I_2 $, two separate conditions are needed:

\begin{equation}
\label{twoqubits_sec_driving}
\begin{aligned}
\min\{2\omega_2-\delta,2\lambda \}&\gg a w_c, \\
\eta\delta &\gg  4 a w_c^2.
\end{aligned}
\end{equation}

Notice that, differently from the other ranges of parameter, here the secular approximation requires explicitly a lower bound on the 'speed' of the driving $\eta$.

\item If $\lambda\in I_3$, the condition reads
\begin{equation}
\begin{aligned}
\min\{2\sqrt{\lambda^2+(\delta/2)^2}  -2\omega_2-\delta, \\
2\omega_2-\delta\} \gg a w_c.
\end{aligned}
\end{equation}

\end{enumerate}

\begin{figure*}[ht!]
\centering
\includegraphics[width=0.9\textwidth]{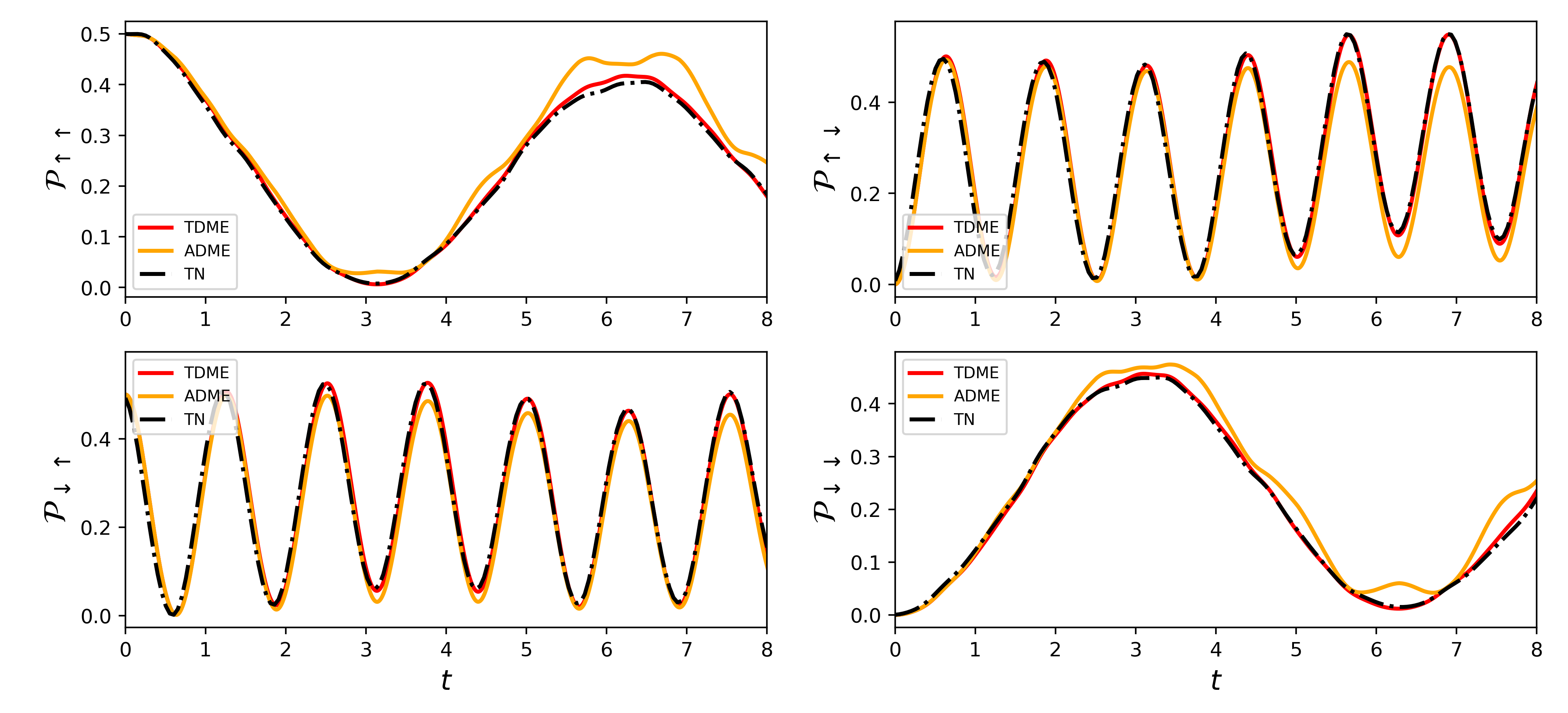}
\captionof{figure}{Populations $\mathcal{P}_{n,m}(t)=\langle n|_1\otimes \langle m|_2\rho_S(t)|n\rangle_1 \otimes |m\rangle_2 $ relative to the eigenstates $\{|n\rangle\}_{n=\uparrow,\downarrow}$ of $\sigma_x$.  We prepare the system in the coherent superposition $|\psi(0)\rangle=|1\rangle_1\otimes \frac{1}{\sqrt{2}}(|0\rangle_2 + |1\rangle_2  )$, where $|0\rangle,|1\rangle$ are the eigenstates of $\sigma_z$. The different plots show the dynamics given by the Eq.\eqref{me_twoqubits} (TDME), the adiabatic master Eq.\eqref{me_twoqubits_ad} (ADME) and the tensor network simulation (TN). The system parameters are, in arbitrary units, $\omega_2=2,\delta=2\times 10^{-1},\eta=10,\lambda=3$, while for the environment $a=5\times 10^{-3},w_c=6$. Tensor networks simulations are performed considering $N=100$ harmonic oscillators, with  $\chi=60$ the bond dimension and $d_{max}=3$ the highest local dimension of the environment. }
\label{fig: twoqubits_1}
\end{figure*}

\item \textit{Condition on the driving}. As for the secular approximation, we obtain a different condition depending on the value of the coupling $\lambda$. In general, we can summarise it as
\begin{equation}
\label{twoqubits_drivingcon}
\frac{\delta\eta}{4\lambda}\le a^{-1} \min\mathcal{A} , 
\end{equation}
where the set $\mathcal{A}$ reads
\begin{equation}
\begin{aligned}
\Big\{2\lambda, \omega_2-\frac{\delta}{2}-\sqrt{\lambda^2+\frac{\delta^2}{4}} \Big\} \text{\space if $\lambda\in I_1$ }, \\
\{2\omega_2-\delta,2\lambda \} \text{\space if $\lambda\in I_2$ }, \\
\Big\{2\omega_2-\delta,\sqrt{\lambda^2+\frac{\delta^2}{4}}-\omega_2-\frac{\delta}{2} \Big\} \text{\space if $\lambda\in I_3$ }. \\
\end{aligned}
\end{equation}

\end{enumerate}

In the following analysis, we will focus on local observables of the single qubits.
This will be done by comparing the master Eq.\eqref{me_twoqubits}, that we refer to TDME again, benchmarking it against tensor network simulations, with the same technique already illustrated in the previous section and Appendix D, and the purely unitary dynamics as given by the Von Neumann equation. In addition, we employ also a 'local' version of master equation (Eq.\eqref{local_me_twoqubits} below) in the limit of weakly interacting qubits and the adiabatic master equation \cite{Zanardi}, which can be expressed for our model in the form
\begin{equation}
\begin{aligned}
\label{me_twoqubits_ad}
\frac{d}{dt}\rho_S=&-\imath[H_S(t)+H_{LS}^{ad}(t),\rho_S] \\
&+\sum\limits_{p=a,b}\gamma(\Omega_{p}(t))\mathcal{D}[L_p^{ad}(t)]\rho_S,
\end{aligned}
\end{equation}
where
\begin{equation}
\begin{aligned}
L_a^{ad}(t)=&\sin\theta(t)\Big( |E_0(t)\rangle\langle E_2(t)|\\
&-|E_1(t)\rangle\langle E_3(t)|\Big), \\
L_b^{ad}(t)=&\cos\theta(t)\Big( |E_0(t)\rangle\langle E_1(t)|\\
&+|E_2(t)\rangle\langle E_3(t)|\Big), \\
H_{LS}^{ad}(t)=&\sum\limits_{p=a,b}\Big( S(\Omega_p(t))L_{p}^{ad \dagger}(t)L_{p}^{ad}(t)  \\
&+ S(-\Omega_p(t))L_{p}^{ad}(t)L_{p}^{ad \dagger}(t)\Big).
\end{aligned}
\end{equation}

\begin{figure*}[ht!]
\centering
\includegraphics[width=0.9\textwidth]{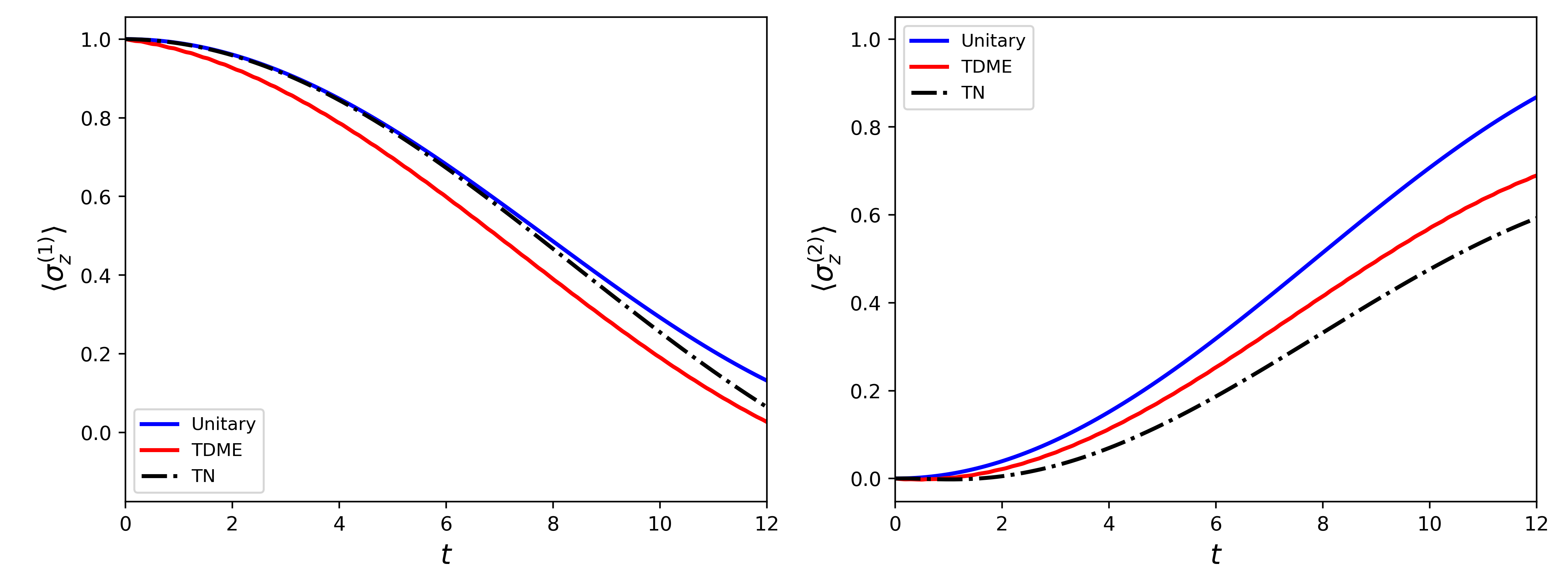}
\captionof{figure}{Longitudinal magnetisation of qubit 1 and 2. The system is initially prepared in the state $|\psi\rangle=|1\rangle_1\otimes\frac{1}{\sqrt{2}}\Big(|0\rangle_2 + |1\rangle_2\Big)$. The results shown correspond to Eq.\eqref{me_twoqubits} (TDME), the purely unitary dynamics and the tensor network simulation (TN). The system parameters are, in arbitrary units, $\omega_2=1,\delta=1,\eta=20,\lambda=10^{-1}$, while for the environment $a=5\times 10^{-3},w_c=4$. The parameters related to the tensor networks are the same as in Fig.\eqref{fig: twoqubits_1}. The condition on the driving in Eq.\eqref{twoqubits_drivingcon} is violated for this set of parameters and the TDME shows  a deviation from the tensor network simulation. } 
\label{fig: twoqubits_2}
\end{figure*}

\paragraph{Strong driving regime.}

We start by analysing the dynamics described by our master equation Eq.\eqref{me_twoqubits} in the strong driving regime, as shown in Fig.\eqref{fig: twoqubits_1}.  The system is prepared in
the coherent superposition $|\psi(0)\rangle=|1\rangle_1\otimes \frac{1}{\sqrt{2}}(|0\rangle_2 + |1\rangle_2  )$, where $|0\rangle,|1\rangle$ are the eigenstates of $\sigma_z$ and we compute the populations $\mathcal{P}_{n,m}(t)=\langle n|_1\otimes \langle m|_2\rho_S(t)|n\rangle_1 \otimes |m\rangle_2 $ relative to the eigenstates $\{|n\rangle\}_{n=\uparrow,\downarrow}$ of $\sigma_x$. In this regime, we clearly see in Fig.\eqref{fig: twoqubits_1} that the adiabatic master equation does not correctly reproduce the dynamics, while Eq.\eqref{me_twoqubits} shows a good agreement with the tensor network simulations.

Furthermore, in order to test the validity of our results, we consider in  Fig.\eqref{fig: twoqubits_2} a case of explicit violation of the condition on the driving in Eq.\eqref{twoqubits_drivingcon}, which leads to a expected deviation from the TN simulations.

\paragraph{Weakly interacting qubits.}

\begin{figure*}[ht!]
\centering
\includegraphics[width=0.9\textwidth]{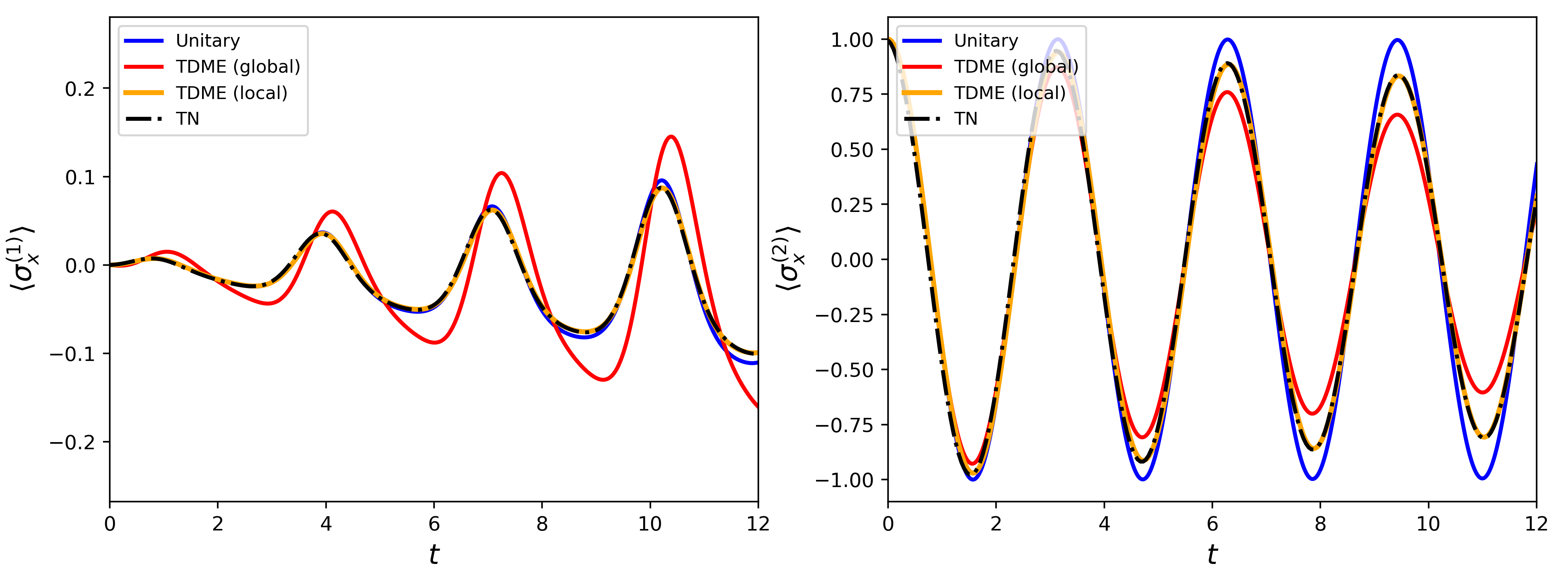}
\captionof{figure}{Transverse local magnetisation of qubit 1 and 2. The system is initially prepared in the product state $|\psi\rangle=|1\rangle_1\otimes\frac{1}{\sqrt{2}}\Big(|0\rangle_2 + |1\rangle_2\Big)$. The curves shown correspond to Eq.\eqref{me_twoqubits} (TDME global), Eq.\eqref{local_me_twoqubits} (TDME local), the purely unitary dynamics and the tensor network simulation (TN). The system parameters are, in arbitrary units, $\omega_2=2,\delta=1,\eta=2,\lambda=10^{-2}$, while for the environment $a=5\times 10^{-3},w_c=4$. The parameters considered in the tensor network simulations are the same as in Fig.\eqref{fig: twoqubits_1}. For weak qubit-qubit interaction,  qubit 2 is mainly subject to dissipation, as correctly predicted by the local master equation, while qubit 1 displays a dynamics very close the unitary. Notice that, as explained in the main text, small deviations from the unitary dynamics can still be observed for qubit 1 at later times, due to presence of a non-zero qubit-qubit interaction in the unitary part of the master equation Eq.\eqref{local_me_twoqubits}.
Due to the breakdown of the full-secular approximation in Eq.\eqref{sec_cond_1}, the global master equation is not able to correctly reproduce the dynamics.  } 
\label{fig: twoqubits_3}
\end{figure*}

In a many-body system, correlations and excitations locally created by the coupling with an environment can propagate very fast to distant regions, especially in the presence of strong interactions among the system components.
Therefore, a master equation description frequently requires  non-local jump operators, namely operators with non-zero expectation values on many-particle states. However, a different situation can manifest when the internal interactions are sufficiently weak, with respect to the typical system-environment coupling strength.

In this last scenario, one can expect for our specific example that the open dynamics of the two qubits is partially decoupled, with the dissipative part acting only on the qubit directly in contact with the environment. 

However, when $\lambda$ becomes very small the energy spectrum manifests quasi-degeneracies and the full-secular approximation in Eq.\eqref{sec_cond_1} may break down, jeopardising the validity of Eq.\eqref{me_twoqubits}.

Nevertheless, providing the hypothesis that $\lambda=O(g)$, a local master equation can be directly derived from the Nakajima-Zwanzig equation, as shown in the Appendix B. The intuitive idea is that, for sufficiently weak interaction, the effects of the environment can be effectively described in terms of eigenstates of the local Hamiltonians $H_{1/2}=\frac{\omega_{1/2}}{2}\sigma_z^{(1/2)}$.
This master equation has the form
\begin{equation}
\begin{aligned}
\label{local_me_twoqubits}
\frac{d}{dt}\rho_S=&-\imath[H_S(t)+H_{LS},\rho_S]\\
&+\gamma(\omega_2)\Big(\sigma_-^{(2)}\rho_S\sigma_+^{(2)}-\frac{1}{2}\{ \sigma_+^{(2)}\sigma_-^{(2)},\rho_S\} \Big),
\end{aligned}
\end{equation}
where $H_{LS}=\frac{1}{2}(S(\omega_2)-S(-\omega_2))\sigma_z^{(2)}$. 
Whereas another full-secular approximation is needed also here, namely $2\omega_2\gg a C$, there is no constraint on the driven frequency $\omega_1(t)$.
As expected, the dissipation acts directly on qubit 2. However, given that the unitary part still contains the interaction term $\lambda(\sigma_+^{(1)}\sigma_-^{(2)}+h.c.)$, we underline that the dynamics is not fully decoupled and for large times dissipative effects can be observed also on qubit 1.

A numerical analysis which summarises the content of this subsection is shown 
in Fig.\eqref{fig: twoqubits_3}, where we compare  Eq.\eqref{me_twoqubits} (TDME global) with Eq.\eqref{local_me_twoqubits} (TDME local), the tensor network simulations and the unitary evolution, observing good agreement with the arguments outlined above.

\section{Conclusions}

We have derived a time-dependent Markovian master equation for a system subject to driving fields, starting from the Nakajima-Zwanzig equation in an interaction picture. The procedure
is based on the proof of convergence of the Nakajima-Zwanzig equation for the system density matrix to a Markovian evolution, under a weak-coupling limit $g\rightarrow 0$ between the system and the environment and makes use of a time rescaling $t=\tau/g^2$.

In order to perform this limit consistently, we have introduced a renormalisation of the system parameters $\{ \lambda \}$ which appear in the Hamiltonian $H_S(t, \lambda)$,  to reabsorb the dependence on $g$ in $H_S$ after the time rescaling, extending the canonical procedure outlined by Davies (\cite{Davies_1, Davies_2}).

We have developed a systematic and accurate derivation, incorporating two key ingredients. First, we utilise the method of stationary phase, a proven effective tool for generating accurate bounds for the approximations involved. Secondly, we employ a lemma which demonstrates that, when subject to time rescaling and weak coupling, only adiabatic contributions to the time-evolution operator $U_S(t)$ are relevant. This allows us to include non-adiabatic contributions naturally, as we utilise the entire time evolution operator to express the equation in the Schrödinger picture. Together, these ingredients enable us to achieve a highly accurate and rigorous derivation.

The validity of the derived master equation is characterised by a series of sufficient conditions, namely weak-coupling, integrability of the environment correlation functions, a full-secular approximation and a condition on the driving which is sufficiently loose to allow us to explore strong driving regimes.

We tested the range of validity of our master equation by studying the spin-boson model with a single periodically driven two-level system. Our findings suggest that our master equation provides an accurate and feasible description for strong periodic driving fields, alternative to the standard Floquet theory based approach \cite{Floquet}. We compared our master equation against numerically exact results given by tensor network simulations and considered also the adiabatic master equation of \cite{Zanardi} as comparison. Our master equation matches very well the tensor network simulations even for very strong drivings, as given by $\lambda\sim 10$, with $\lambda$ the adiabatic parameter.

In the case of strong driving, we observed that the environment plays a more prominent role in generating coherence at lower temperatures than purely unitary evolution, which has also been noted in previous works \cite{Dann}.
We have also studied in detail the effect of the Lamb shift contribution on the dynamics. In particular, we have proved that if the initial state is diagonal in the eigenbasis of $H_S(0)$, the Lamb shift does not contribute to the evolution. On the other hand, when some initial coherence is introduced, the Lamb shift has sizable impact on the dynamics, as observed in the transverse magnetization $\langle\sigma_x(t)\rangle$.
In the last part, we conducted a rigorous analysis of the adiabatic limit.

Finally, we moved from a single qubit to two interacting qubits, where only one of them is directly in contact with the environment, while the other one is subject to a periodic modulation of its bare gap. We presented the analysis of local observables of the two qubits.
In the strong driving regime, our master equation is proven to correctly reproduce the exact dynamics, moreover we put to the test the bounds given by the condition on the driving outlined in Eq.\eqref{driving_cond}, checking that a violation of them does imply that the master equation ceases to be valid. The last analysis covers in detail the case of weakly interacting qubits. It is shown that when the qubit-qubit interaction is of the same order of magnitude of the system-environment coupling, the full-secular approximation breaks down. In this regime, we provided a local master equation that can correctly describe the system dynamics. Ultimately, a global master equation with a partial secular approximation would be able to overcome the limitation that our approach suffers in this situation and would also represent a step forward towards a thorough analysis of the 'local vs global' debate in the presence of strong drivings.
This, together with a comprehensive investigation of both transient and steady state properties of the driven dissipative dynamics described by our derived master equation, will be the topic of future work.

\section{Acknowledgements}

Giovanni Di Meglio wishes to thank Nicola Lorenzoni, Dario Cilluffo and Raphael Weber for their precious suggestions and discussions. This work was supported by the ERC Synergy grant HyperQ (Grant no. 856432) and the BMBF project CoGeQ (Grant no. 13N16101).
The authors express their gratitude to the anonymous referees for their comments, which helped to substantially improve the manuscript.

\onecolumn
\appendix

\section{Derivation of the master equation for the driven two-level system}

In this Appendix, we derive the master equation Eq.\eqref{time-dep-Markov} for a driven two-level system described by the Hamiltonian 
\begin{equation}
H(t)=H_S(t) + \sum\limits_j w_j b^{\dagger}_j b_j +\sum\limits_j \sigma_x \otimes g_j(b_j+b^{\dagger}_j),
\end{equation}
where the system Hamiltonian takes the generic form $H_S(t)=\omega(t)\sigma_z+h(t)\sigma_x$, being concretely $\omega(t)=\omega_0$, $h(t)=\Omega\sin(\omega t)$, and we can easily identify $A=\sigma_x$, $B=\frac{1}{\sqrt{a}}\sum\limits_j g_j(b_j+b^{\dagger}_j)$ in the interaction Hamiltonian, as explained in the main text. Because of the factor $g^2=a$ appearing in Eq.\eqref{time-dep-Markov} in front of the terms due to the interaction with the environment, the factor $g=\sqrt{a}$ will be absorbed in the interaction Hamiltonian such that $\sqrt{a}B\rightarrow B$, in order to ease the notation.

The spectral decomposition of the system Hamiltonian reads
$H_S(t)=\sum\limits_{n=e,g}E_n(t)|n_t\rangle\langle n_t|$, with
\begin{equation}
\label{eigenvect}
\begin{array}{ccc}
|e_t\rangle=\left(
\begin{array}{c}
\cos(\varphi/2) \\
\sin(\varphi/2) \\
\end{array}
\right),  & &
|g_t\rangle=\left(
\begin{array}{c}
-\sin(\varphi/2) \\
\cos(\varphi/2) \\ 
\end{array}
\right),
\end{array}
\end{equation}
$E_{e,g}(t)=\pm E(t)\equiv \pm\sqrt{\omega^2(t)+h^2(t)}$ and $\tan\varphi(t)=h(t)/\omega(t)$.

Let us start by evaluating the two-time correlation function of the environment,
\begin{equation}
\label{correlation_discrete}
R(x)=tr_B[\tilde{B}(x)B\rho_{th}]=\sum\limits_{j}\frac{g_{j}^2}{1-e^{-w_j/T_B} }(e^{-\imath w_j x} + e^{\imath w_j x- w_j/T_B}) .
\end{equation}
As usual, we assume a continuous spectral density for the environment, taking the limit $\sum\limits_j g_j^2\rightarrow \int\limits_0^{\infty}dw J(w)$, hence
\begin{equation}
R(x)=\int\limits_0^{\infty}dw\frac{J(w)}{1-e^{-w/T_B} }(e^{-\imath w x} + e^{\imath w x-w/T_B}),
\end{equation}
where $J(w)=a we^{-w/w_c}$ and $a,w_c>0$. The one-sided Fourier transform of the environment correlation function in Eq.\eqref{general_one_sided} reads 
\begin{equation}
\Gamma_p(t)=\int\limits_0^{\infty}dx R(x)e^{\imath x\Omega_p(t)},
\end{equation}
with the set of Bohr frequencies $\{ \Omega_p(t)\}_p=\{0,\pm 2 E(t) \}$. Using the well-know relation $\int\limits_0^{\infty} dx e^{\imath w x}=\imath\mathcal{P}(1/w)+\pi\delta(w)$, we directly obtain for each Bohr frequency
\begin{equation}
\label{general_coeff}
\begin{aligned}
\gamma_0(t)&=4\pi a T_B , \\
\gamma_+(t)&=2\pi J(2 E(t))\bar{n}_{BE}(2E(t);T_B) ,\\
\gamma_-(t)&=2\pi J(2E(t))(1+\bar{n}_{BE}(2E(t);T_B)) , \\
S_0(t)&= -  \mathcal{P}\int\limits_0^{\infty}dw \frac{J(w)}{w} , \\
S_{\pm}(t)&= \int\limits_0^{\infty}dw J(w)\Big( (1+\bar{n}_{BE}(w))\mathcal{P}\Big( \frac{1}{-w\mp 2E(t)}\Big) + \bar{n}_{BE}(w)\mathcal{P}\Big( \frac{1}{w\mp 2E(t)}\Big) \Big), \\
\end{aligned}
\end{equation}
where $\bar{n}_{BE}(w;T_B)$ is the mean occupation number given by the Bose-Einstein distribution. 
The jump operators in the Schrödinger picture are $L_p(t)=U_S(t)|m_0\rangle\langle m_t|\sigma_x|n_t\rangle\langle n_0|U_S^{\dagger}(t)$, where $p=(n,m)$, thus we obtain three jumps

\begin{equation}
\begin{aligned}
L_0(t)&=\sum\limits_{n=e,g}U_S|n_0\rangle\langle n_t|\sigma_x|n_t\rangle\langle n_0|U_S^{\dagger}, \\
L_+(t)&=U_S|e_0\rangle\langle e_t|\sigma_x|g_t\rangle\langle g_0|U_S^{\dagger},\\ 
L_-(t)&=U_S|g_0\rangle\langle g_t|\sigma_x|e_t\rangle\langle e_0|U_S^{\dagger} , \\  
\end{aligned}
\end{equation}
where
\begin{equation}
\begin{aligned}
\langle e_t|\sigma_x|e_t\rangle=\langle g_t|\sigma_x|g_t\rangle=\sin\varphi(t), \\
\langle e_t|\sigma_x|g_t\rangle=\langle g_t|\sigma_x|e_t\rangle=\cos\varphi(t). \\
\end{aligned}
\end{equation}
For this model no analytical solutions are known for the time-evolution operator, therefore we need to rely on numerical integration of the equations of motion for computing $U_S(t)$. We start from
\begin{equation}
\label{schr_eq}
\imath\frac{d}{dt}U_S(t,0)=H_S(t)U_S(t,0)
\end{equation}
and make use of the parametrisation 
\begin{equation}
\label{time-ev}
U_S=\left(
\begin{array}{cc}
\alpha & \beta \\
-\beta^{*}e^{\imath\varphi} & \alpha^{*}e^{\imath\varphi} \\
\end{array}
\right)
\end{equation}
valid for every $2\times 2$ unitary matrix, where $\det(U_S)=e^{\imath\varphi}$ and $|\alpha|^2+|\beta|^2=1$ (see for example \cite{Sakurai}). From Eq.\eqref{schr_eq} we arrive at
\begin{equation}
\begin{aligned}
\imath\frac{d}{dt}\alpha&=\omega_0\alpha-\Omega\sin(\omega t)\beta^{*}e^{\imath\varphi} ,\\
\imath\frac{d}{dt}\beta&=\omega_0\beta+\Omega\sin(\omega t)\alpha^{*} e^{\imath\varphi} ,\\
\imath\big(\frac{d}{dt}\alpha^{*}+\imath\frac{d}{dt}\varphi \alpha^{*}\big)e^{\imath\varphi}&=\Omega\sin(\omega t)\beta - \omega_0\alpha^{*}e^{\imath\varphi} ,\\
-\imath\big(\frac{d}{dt}\beta^{*}+\imath\frac{d}{dt}\varphi \beta^{*}\big)e^{\imath\varphi}&=\Omega\sin(\omega t)\alpha + \omega_0\beta^{*}e^{\imath\varphi} .\\
\end{aligned}
\end{equation}
Combining these equations we  easily get $\frac{d}{dt}\varphi=0$, which implies $\varphi=0$ from the initial condition $\alpha(0)=1,\beta(0)=0$. Hence, the solution is determined by the set of coupled differential equations,
\begin{equation}
\begin{aligned}
\imath\frac{d}{dt}\alpha &=\omega_0\alpha-\Omega\sin(\omega t)\beta^{*} ,\\
\imath\frac{d}{dt}\beta &=\omega_0\beta+\Omega\sin(\omega t)\alpha^{*}  .\\
\end{aligned}
\end{equation}
Therefore, using the explicit form of $U_S$ we arrive at the jump operators expressed in Schrödinger picture

\begin{equation}
\begin{aligned}
L_0(t)&=\sin\varphi(t)\Big( (|\alpha|^2-|\beta|^2)\sigma_z-2\alpha\beta\sigma_+ -2\alpha^*\beta^*\sigma_-  \Big), \\
L_+(t)&=\cos\varphi(t)\Big( \alpha\beta^*\sigma_z + \alpha^2\sigma_+ -\beta^{* 2}\sigma_-  \Big), \\
L_-(t)&=\cos\varphi(t)\Big( \alpha^*\beta\sigma_z + \alpha^{* 2}\sigma_- -\beta^{ 2}\sigma_+  \Big), \\
\end{aligned}
\end{equation}
and consequently the Lamb shift Hamiltonian reads
\begin{equation}
H_{LS}(t)=-\frac{S_+-S_-}{2}\cos^2\varphi(t)\Big( (|\alpha|^2-|\beta|^2)\sigma_z-2\alpha\beta\sigma_+ -2\alpha^*\beta^*\sigma_-  \Big).
\end{equation}
Putting all together we finally arrive at the time-dependent Markovian master equation in Eq.\eqref{generic_TDME}.

\section{Derivation of the master equation for the interacting two-level systems}

In this Appendix, the master equation Eq.\eqref{me_twoqubits} is derived. The starting point is the full Hamiltonian 
\begin{equation}
H(t)=H_S(t) + \sum\limits_j w_j b^{\dagger}_j b_j +\sum\limits_j \sigma_x^{(2)} \otimes g_j(b_j+b^{\dagger}_j),
\end{equation}
where now 
\begin{equation}
H_S(t)= \frac{\omega_1(t)}{2}\sigma_z^{(1)}+\frac{\omega_2}{2}\sigma_z^{(2)}+\lambda(\sigma_+^{(1)}\sigma_-^{(2)}+h.c),
\end{equation}
with $\omega_1(t)=\omega_2+\delta\sin(\eta t)$. As explained in the main text, the condition $\omega_2>\delta$ is assumed.
The environment has the same structure as the one previously analysed in Appendix A, however here we consider the case $T_B=0$, while keeping an ohmic spectral density $J(w)=a w e^{-w/w_c}$. Let us start from the environment correlation function,  in the continuum limit we have
\begin{equation}
R(x)=tr[\tilde{B}(x)B\rho_{th}]=\int\limits_0^{\infty} dw J(w) e^{-\imath w x},
\end{equation}
therefore, utilising the relation $\Gamma(\Omega)=\int\limits_0^{\infty} dx R(x) e^{\imath\Omega x}=\frac{1}{2}\gamma(\Omega)+\imath S(\Omega)$, we come up with
\begin{equation}
\label{twoqubit_corr_app}
\begin{aligned}
\gamma(\Omega)&=\begin{cases}
2\pi J(\Omega)  \text{\space\space if $\Omega>0$}, \\
0  \text{\space\space\space\space\space\space\space\space\space\space otherwise},\\
\end{cases} \\
S(\Omega)&= \mathcal{P}\int\limits_0^{\infty} dw \frac{J(w)}{\Omega-w},
\end{aligned}
\end{equation}
where $\Omega$ will correspond to the instantaneous Bohr frequencies. Notice that,  at zero temperature the environment correlation function can be easily computed and it is given by $R(t)= \frac{ a w_c^2}{(1-\imath t w_c)^2} $.

In order to compute the jump operators, the eigendecomposition of $H_S(t)$ is needed. In particular, the instantaneous eigenvalues are
\begin{equation}
\begin{array}{ccc}
E_0(t)=-\omega_+(t), & & E_2(t)=\sqrt{\lambda^2+\omega_-^2(t)}, \\
E_1(t)=-\sqrt{\lambda^2+\omega_-^2(t)}, & & E_3(t)=\omega_+(t), \\
\end{array}
\end{equation}
where $\omega_{\pm}(t)=\frac{1}{2}(\omega_1(t)\pm\omega_2)$, whereas
\begin{equation}
\begin{aligned}
|E_0(t)\rangle &=|0,0\rangle, \\ 
|E_1(t)\rangle &=-\sin\theta(t)|1,0\rangle + \cos\theta(t)|0,1\rangle, \\ 
|E_2(t)\rangle &=\;\;\;\cos\theta(t)|1,0\rangle + \sin\theta(t)|0,1\rangle, \\ 
|E_3(t)\rangle &=|1,1\rangle , \\ 
\end{aligned}
\end{equation}
are the eigenvectors, where $\tan(2\theta(t))=\lambda/\omega_-(t)$
and  we made use of the eigenbasis $\{|1\rangle,|0\rangle \}$ of $\sigma_z$ in $\mathbb{C}^2$. The ordering $E_{j+1}>E_j$ is valid if we assume the condition $\omega_1(t)\omega_2 >\lambda^2, \forall t$, or equivalently $(\omega_2-\delta)\omega_2>\lambda^2$, so that no crossover among energy levels occurs. In general, for positive coupling we can individuate three different intervals
\begin{enumerate}
\item $0<\lambda< \sqrt{(\omega_2-\delta)\omega_2}$. In this case, the condition $\lambda^2<\omega_1(t)\omega_2$ is satisfied for all $t$, the spectrum is smooth and there are no crossovers.
\item $\sqrt{(\omega_2-\delta)\omega_2}\le\lambda\le \sqrt{(\omega_2+\delta)\omega_2} $. There are multiple times $t^*$ such that $\lambda^2=\omega_1(t^*)\omega_2$ and level crossings take place.
\item $\lambda>\sqrt{(\omega_2+\delta)\omega_2} $. The spectrum is smooth as the condition $\lambda^2>\omega_1(t)\omega_2$ is fulfilled.
\end{enumerate}
The jump operators in the Schrödinger pictures are in the form
\begin{equation}
\{ L_{nm}(t)=U_S(t)|E_m(0)\rangle\langle E_m(t)|\sigma_x^{(2)}|E_n(t)\rangle\langle E_n(0)|U_S^{\dagger}(t)\}
\end{equation} 
and correspond to the instantaneous Bohr frequencies $\{E_n(t)-E_m(t)\}$. We start by computing the non-zero matrix elements $\langle E_m(t)|\sigma_x^{(2)}|E_n(t)\rangle$, which enter the master equation,
\begin{equation}
\begin{array}{ccc}
\langle E_1(t)|\sigma_x^{(2)}|E_0(t)\rangle=\cos\theta(t), & &\langle E_3(t)|\sigma_x^{(2)}|E_1(t)\rangle=-\sin\theta(t),\\
\langle E_2(t)|\sigma_x^{(2)}|E_0(t)\rangle=\sin\theta(t), & &\langle E_3(t)|\sigma_x^{(2)}|E_2(t)\rangle= \cos\theta(t) .\\
\end{array}
\end{equation}
The remaining vectors $|\psi_n(t)\rangle\equiv U_S(t)|E_n(0)\rangle$ for any $n$ can be directly obtained solving the Schrödinger equation for $|\psi(t)\rangle=(\psi_{11},\psi_{10},\psi_{01},\psi_{00})^T \in \mathbb{C}^4$ and imposing the initial condition $|\psi(0)\rangle= |E_n(0)\rangle$. From $\imath\frac{d}{dt}|\psi(t)\rangle=H_S(t)|\psi(t)\rangle$ we obtain the following set of equations
\begin{equation}
\label{twoqub_param}
\begin{aligned}
\imath\frac{d}{dt}\psi_{11}&=\omega_+(t)\psi_{11}, \\
\imath\frac{d}{dt}\psi_{00}&=-\omega_+(t)\psi_{00}, \\
\imath\frac{d}{dt}\psi_{10}&=\omega_-(t)\psi_{10} + \lambda\psi_{01}, \\
\imath\frac{d}{dt}\psi_{01}&=-\omega_-(t)\psi_{01} + \lambda\psi_{10} \\
\end{aligned}
\end{equation}
and indicate with $|\psi_n(t)\rangle$ the solution of Eq.\eqref{twoqub_param} with initial condition given by $|E_n(0)\rangle$. Clearly, the last equation needs to be solved numerically.
Therefore, we obtain the jump operators $\{L_{a}(t), L_{b}(t), L_{a}^{\dagger}(t), L_{b}^{\dagger}(t)\}$, where
\begin{equation}
\begin{aligned}
L_{a}(t)&= L_{20}+L_{31}=\sin\theta(t)\Big(|\psi_0(t)\rangle \langle\psi_2(t)|  -|\psi_1(t)\rangle \langle\psi_3(t)|  \Big), \\
L_{b}(t)&= L_{32}+L_{10}=\cos\theta(t)\Big(|\psi_0(t)\rangle \langle\psi_1(t)|  +|\psi_2(t)\rangle \langle\psi_3(t)|  \Big), \\
\end{aligned}
\end{equation}
correspondent to the Bohr frequencies $\{\Omega_a,\Omega_b,-\Omega_a,-\Omega_b \}$, with
\begin{equation}
\begin{aligned}
\label{set_bohr_twoqubits}
\Omega_{a}\equiv\Omega_{31}=\Omega_{20}=\omega_+(t) + \sqrt{\lambda^2+ \omega_-^2(t)},  \\
\Omega_{b}\equiv \Omega_{10}=\Omega_{32}=\omega_+(t) - \sqrt{\lambda^2+ \omega_-^2(t)}. \\
\end{aligned}
\end{equation}
The Lamb shift contribution reads
\begin{equation}
H_{LS}(t)=\sum\limits_{p=a,b}\Big( S(\Omega_p(t))L_{p}^{\dagger}(t)L_{p}(t)+ S(-\Omega_p(t))L_{p}(t)L_{p}^{\dagger}(t)\Big).
\end{equation}
Finally, we can write down the master equation in the Schrödinger picture
\begin{equation}
\begin{aligned}
\frac{d}{dt}\rho_S=&-\imath[H_S(t)+H_{LS}(t),\rho_S]+\gamma(\Omega_{a}(t))\Big( L_{a}(t)\rho_S L_{a}^{\dagger}(t)-\frac{1}{2}\{L_{a}^{\dagger}(t)L_{a}(t),\rho_S \} \Big) \\
&+ \gamma(\Omega_{b}(t))\Big( L_{b}(t)\rho_S L_{b}^{\dagger}(t)-\frac{1}{2}\{L_{b}^{\dagger}(t)L_{b}(t),\rho_S \} \Big).
\end{aligned}
\end{equation}
Beside the already discussed conditions on the environment correlation function and weak coupling, the validity of the master equation is also determined by the following:

\begin{enumerate}
\item \textit{Full-secular approximation}. 
The explicit expression of this condition depends on the values of the parameters, in particular the coupling $\lambda$. We need to distinguish three different scenarios. According to the prescription given in Eq.\eqref{general_secular}, we consider the differences between instantaneous Bohr frequencies in Eq.\eqref{set_bohr_twoqubits}, i.e. the set
\begin{equation}
\label{eq_bohr_diff}
\mathcal{F}\equiv \{\pm 2\sqrt{\lambda^2+\omega_-^2(t)},\pm  2\omega_+(t), \pm 2\Big(\omega_+(t)+\sqrt{\lambda^2+\omega_-^2(t)}\Big),\pm 2\Big(\omega_+(t)-\sqrt{\lambda^2+\omega_-^2(t)}\Big) \}.
\end{equation}
\begin{enumerate}
\item  $0<\lambda<\sqrt{(\omega_2-\delta)\omega_2}$. In this case, there are no crossovers among energy levels, in particular the set in Eq.\eqref{eq_bohr_diff} shows no critical points, i.e. $\frac{d}{dt}\varphi_{p q}(t)\neq 0, \forall t$. Hence we can use the first condition, i.e. 
\begin{equation}
\min\limits_{p,q;t} |\Omega_{p}(t)-\Omega_{q}(t)|\gg a C,
\end{equation}
where $C=\int\limits_0^{\infty} dt |R(t)|=\frac{\pi}{2} w_c$ can be computed analytically in this case (notice that here we have extracted the factor $\sqrt{a}\equiv g$ from the interaction term).
The resulting condition, after the minimisation over the set and time $t$ is
\begin{equation}
\min\{2\lambda, 2\omega_2-\delta-2\sqrt{\lambda^2+(\delta/2)^2} \} \gg a C\, .
\end{equation}

\item $\sqrt{(\omega_2-\delta)\omega_2 }\le \lambda\le \sqrt{(\omega_2+\delta)\omega_2} $. In this case, multiple crossovers take place at different times when $\omega_+(t)=\sqrt{\lambda^2+\omega_-^2(t)}$. We split the set $\mathcal{F}$ between phases with and without critical points, say  $\mathcal{F}=\mathcal{F}_c \cup \mathcal{F}_{nc}$, where
\begin{equation}
\begin{aligned}
\mathcal{F}_{nc}&= \{\pm 2\sqrt{\lambda^2+\omega_-^2(t)},\pm  2\omega_+(t), \pm 2\Big(\omega_+(t)+\sqrt{\lambda^2+\omega_-^2(t)}\Big)\}, \\
\mathcal{F}_{c}&= \{\pm 2\Big(\omega_+(t)-\sqrt{\lambda^2+\omega_-^2(t)}\Big)  \}. 
\end{aligned}
\end{equation}
For the first set, the usual condition is valid, i.e. $\min\limits_{p,q;t} |\Omega_{p}(t)-\Omega_{q}(t)|\gg a C$, which translates into
\begin{equation}
\min\{2\omega_2-\delta,2\lambda \}\gg aC,
\end{equation}
whereas for the second set we need to fulfil $\min\limits_{p,q;t_0} |\frac{d\Omega_p}{dt}(t_0)-\frac{d\Omega_q}{dt}(t_0)|\gg a C^2$, where $t_0$ indicates the set of critical points, hence we arrive at
\begin{equation}
\eta\delta\gg 2a C^2.
\end{equation}

\item  $\sqrt{(\omega_2+\delta)\omega_2}\le \lambda$. Here the spectrum is again smooth and the minimisation over the full set of Bohr frequencies leads to
\begin{equation}
\min\{2\omega_2-\delta, 2\sqrt{\lambda^2+(\delta/2)^2}  -2\omega_2-\delta\} \gg a C\, .
\end{equation}

\end{enumerate}

\item \textit{Condition on the driving}. Following the prescription in Eq.\eqref{driving_cond}, as before we need to distinguish the three ranges for $\lambda$.

\begin{enumerate}
\item $0<\lambda<\sqrt{(\omega_2-\delta)\omega_2}$. Because of the smoothness of the spectrum, no critical points appear and the first condition in Eq.\eqref{driving_cond} can be simply used.
One can easily find 
\begin{equation}
\begin{aligned}
\max\limits_{n\neq m; t}|\langle E_n(t)|\frac{d}{dt}|E_m(t)\rangle|&=\max\limits_t \Big|\frac{d}{dt}\theta(t)\Big|\le \frac{\delta\eta}{4\lambda}, \\
\min\limits_{n\neq m; t}| E_n(t)-E_m(t)|&=\min\{2\lambda, \omega_2-\frac{\delta}{2}-\sqrt{\lambda^2+\frac{\delta^2}{4}} \}.
\end{aligned}
\end{equation}
Hence,  sufficient condition can be stated as
\begin{equation}
\frac{\delta\eta}{4\lambda}\le a^{-1} \min\{2\lambda, \omega_2-\frac{\delta}{2}-\sqrt{\lambda^2+\frac{\delta^2}{4}} \}.
\end{equation}

\item $\sqrt{(\omega_2-\delta)\omega_2 }\le \lambda\le \sqrt{(\omega_2+\delta)\omega_2} $. Both conditions in Eq.\eqref{driving_cond} have to be addressed. To this end, we split the set of $\Delta_{n m}$ such as $\mathcal{F}=\mathcal{F}_{nc}+\mathcal{F}_c$, where 
\begin{equation}
\begin{aligned}
\mathcal{F}_{nc}&=\{\pm \Delta_{2 0}, \pm \Delta_{2 1}, \pm \Delta_{3 0}, \pm\Delta_{3 1} \}, \\
\mathcal{F}_c&=\{ \pm\Delta_{3 2},\pm\Delta_{1 0} \} .\\
\end{aligned}
\end{equation}
For the second set, we see that $\alpha_{n m}(t)=0$, hence the inequality is trivially satisfied, while for the first set we obtain
\begin{equation}
\begin{aligned}
\max\limits_{n\neq m; t}|\langle E_n(t)|\frac{d}{dt}|E_m(t)\rangle|&=\max\limits_t \Big|\frac{d}{dt}\theta(t)\Big|\le \frac{\delta\eta}{4\lambda}, \\
\min\limits_{n\neq m; t}|E_n(t)-E_m(t)|&=\min\{2\lambda, 2\omega_2-\delta \},
\end{aligned}
\end{equation}
hence the condition reads 
\begin{equation}
\frac{\delta\eta}{4\lambda}\le a^{-1} \min\{2\omega_2-\delta,2\lambda \},
\end{equation}

\item $\sqrt{(\omega_2+\delta)\omega_2}\le \lambda$. The spectrum is smooth again and 
\begin{equation}
\begin{aligned}
\max\limits_{n\neq m; t}|\langle E_n(t)|\frac{d}{dt}|E_m(t)\rangle|&=\max\limits_t \Big|\frac{d}{dt}\theta(t)\Big|\le \frac{\delta\eta}{4\lambda}, \\
\min\limits_{n\neq m;t}| E_n(t)-E_m(t)|&=\min\{2\omega_2-\delta,\sqrt{\lambda^2+\frac{\delta^2}{4}}-\omega_2-\frac{\delta}{2} \}.
\end{aligned}
\end{equation}
Hence, the sufficient condition can be stated as
\begin{equation}
\frac{\delta\eta}{4\lambda}\le a^{-1} \min\{2\omega_2-\delta,\sqrt{\lambda^2+\frac{\delta^2}{4}}-\omega_2-\frac{\delta}{2} \}.
\end{equation}

\end{enumerate}

\end{enumerate}

\paragraph{Derivation of the local master equation in the case of weak coupling between qubits.}

We show here how to derive a local master equation for the interacting qubits. The starting point is the derivation outlined in Theorem \ref{th_main}.
In our case, we simply consider a further condition when studying the limit $g\rightarrow 0$ of the Nakajima-Zwanzig kernel in Eq.\eqref{kg_0}, namely that $\lambda=O(g)$. This allows us to perform in the first place the expansion $U_S=U_S^{(1)}\otimes U_S^{(2)}+ O(g)$, where $U_S^{(1)}(t)=e^{-\imath\int\limits_0^t ds\omega_1(s)/2 \sigma_z^{(1)}}$, $U_S^{(2)}(t)=e^{-\imath\frac{\omega_2}{2} t \sigma_z^{(2)}}$, obtaining
\begin{equation}
\begin{aligned}
\tilde{A}(t)&=(U_S^{(1)}\otimes U_S^{(2)})^{\dagger}\sigma_x^{(2)}(U_S^{(1)}\otimes U_S^{(2)})+ O(g) \\
&=e^{\imath \frac{\omega_2}{2} t \sigma_z^{(2)}}\sigma_x^{(2)} e^{-\imath \frac{\omega_2}{2} t \sigma_z^{(2)}}+ O(g)\\
&= e^{-\imath\omega_2 t}\sigma_- +e^{+\imath\omega_2 t}\sigma_+ + O(g) .
\end{aligned}
\end{equation}
At this point, the derivation can be easily carried on by considering the time rescaling and the limit $g\rightarrow 0$, as already illustrated. The resulting master equation reads
\begin{equation}
\frac{d}{dt}\rho_S=-\imath[H_S(t)+H_{LS},\rho_S]+\gamma(\omega_2)\Big(\sigma_-^{(2)}\rho_S\sigma_+^{(2)}-\frac{1}{2}\{ \sigma_+^{(2)}\sigma_-^{(2)},\rho_S\} \Big),
\end{equation}
where $H_{LS}=\frac{1}{2}(S(\omega_2)-S(-\omega_2))\sigma_z^{(2)}$.
Underlying this equation, a full-secular approximation is also needed and reads $2\omega_2\gg a C$. On the other hand, there is no constraint on $\omega_1(t)$.

\section{Environmental correlation functions}

In this Appendix, we study the boundedness condition for the correlation function
\begin{equation}
\label{correl_inf}
R(t)=\int\limits_0^{\infty}dwJ(w)\Big[e^{-\imath w t}(1+\bar{n}_{BE}(w)) + e^{\imath w t} \bar{n}_{BE}(w)\Big],
\end{equation}
with $J(w)=a w e^{-w/w_c}$, $\bar{n}_{BE}(w)=(e^{\beta w}-1)^{-1}$ and $\beta\neq 0$.
\begin{lemma}
The function $R:\mathbb{R}^+\rightarrow \mathbb{C}$ in Eq.\eqref{correl_inf} is integrable in $\mathbb{R}^+$.
\end{lemma}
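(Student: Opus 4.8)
The goal is to establish $\int_0^\infty dt\,|R(t)|<\infty$. First I would split \eqref{correl_inf} into its real and imaginary parts. Using $(1+\bar n_{BE})e^{-\imath wt}+\bar n_{BE}\,e^{\imath wt}=\coth(\beta w/2)\cos(wt)-\imath\sin(wt)$, since $1+2\bar n_{BE}(w)=\coth(\beta w/2)$, I obtain $R(t)=R_R(t)+\imath R_I(t)$ with
\begin{equation}
R_R(t)=\int_0^\infty dw\,J(w)\coth(\beta w/2)\cos(wt),\qquad R_I(t)=-\int_0^\infty dw\,J(w)\sin(wt).
\end{equation}
Each integral converges absolutely at $t=0$, so $R$ is bounded near the origin and integrability on $\mathbb{R}^+$ reduces to showing that both components decay faster than $t^{-1}$ as $t\to\infty$. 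The imaginary part is elementary: with $J(w)=awe^{-w/w_c}$ one has $\int_0^\infty dw\,w e^{-w/w_c}e^{\imath wt}=(1/w_c-\imath t)^{-2}$, hence $R_I(t)=-a\,\mathrm{Im}\,(1/w_c-\imath t)^{-2}=-2a(t/w_c)(1/w_c^2+t^2)^{-2}$, which decays like $t^{-3}$ and is therefore integrable.

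The real part is the crux, and everything hinges on the behaviour at the origin of the integrand $h(w):=J(w)\coth(\beta w/2)=aw\,e^{-w/w_c}\coth(\beta w/2)$. Although $\coth(\beta w/2)$ has a simple pole at $w=0$, it is cancelled by the Ohmic factor $w$: from $\coth(\beta w/2)=2/(\beta w)+\beta w/6+\cdots$ the product $w\coth(\beta w/2)$ is even and analytic in a strip about the real axis (its nearest singularity being at $w=2\pi\imath/\beta$), with $h(0)=2a/\beta$ finite and $h'(0)$ real. Consequently $h\in C^\infty([0,\infty))$, and thanks to the exponential cutoff $h$ together with all its derivatives is absolutely integrable, so $M_j:=\int_0^\infty|h^{(j)}(w)|\,dw<\infty$ for every $j$. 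I would then write $R_R(t)=\mathrm{Re}\int_0^\infty h(w)e^{\imath wt}dw$ and integrate by parts. The first step produces the boundary term $\imath h(0)/t$, which is purely imaginary and drops out of the real part, while the remaining integral, bounded by a second integration by parts through $|\int_0^\infty h'e^{\imath wt}dw|\le(|h'(0)|+M_2)/t$, yields
\begin{equation}
|R_R(t)|\le\frac{|h'(0)|+M_2}{t^2}\qquad\text{for all }t>0.
\end{equation}
This is precisely the endpoint (no-stationary-point) estimate of Theorem~\eqref{th_phase} applied with the phase $\phi(w)=w$, $M=t$, at the lower endpoint $w=0$, the upper endpoint contributing nothing because $h$ and its derivatives vanish at infinity.

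Combining this with the crude uniform bound $|R_R(t)|\le M_0$ gives $|R_R(t)|\le\min\big(M_0,(|h'(0)|+M_2)/t^2\big)$, whose integral over $\mathbb{R}^+$ is finite. Since both $R_R$ and $R_I$ are integrable, so is $R$, completing the proof. I expect the only delicate point to be the verification that the pole cancellation makes $h$ genuinely smooth at $w=0$, so that the boundary terms in the repeated integration by parts are well defined; once this regularity is secured, the decay estimates and the final integrability bound are routine.
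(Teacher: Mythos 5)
Your proof is correct, and it takes a genuinely different route from the paper. The paper recognizes $R(t)$ as a combination of trigamma functions, $R(t)=\frac{a}{\beta^2}\big(\psi^{(1)}(\frac{1}{\beta w_c}+\imath\frac{t}{\beta})+\psi^{(1)}(\frac{1}{\beta w_c}-\imath\frac{t}{\beta}+1)\big)$, then gets local integrability from analyticity of $\psi^{(1)}$ and the $O(t^{-2})$ tail from the recurrence $\psi^{(1)}(z+1)=\psi^{(1)}(z)-z^{-2}$ together with the large-$z$ expansion $\psi^{(1)}(z)\approx \frac{1}{z}+\frac{1}{2z^2}+\cdots$; the decisive cancellation there is that the two arguments are complex conjugates, so the individually $O(1/t)$ leading terms $1/z$ sum to $2\,\mathrm{Re}(1/z)=O(t^{-2})$. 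You obtain the same cancellation in elementary terms: after splitting $R=R_R+\imath R_I$ via $1+2\bar n_{BE}=\coth(\beta w/2)$, you compute $R_I$ exactly (it decays like $t^{-3}$), and for $R_R$ you observe that the $O(1/t)$ boundary term $\imath h(0)/t$ from the first integration by parts is purely imaginary and hence drops out, leaving the $O(t^{-2})$ bound from the second integration by parts. Both arguments ultimately rest on the same analytic fact — the $1/w$ pole of the Bose factor is cancelled by the Ohmic $J(w)\propto w$, so the integrand $h(w)=J(w)\coth(\beta w/2)$ is smooth at $w=0$ — which is implicit in the paper's trigamma representation and explicit in your expansion of $w\coth(\beta w/2)$. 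What each buys: the paper's version is shorter once the special-function identities are granted and gives an exact closed form for $R(t)$; yours is self-contained, reuses the paper's own stationary-phase machinery (Theorem 2.2, the no-stationary-point endpoint estimate with $\phi(w)=w$, $M=t$), and generalizes immediately to any spectral density for which $J(w)\coth(\beta w/2)$ is smooth at the origin with integrable derivatives, rather than relying on the specific Ohmic-exponential form admitting a trigamma representation.
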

\begin{proof}
First, we notice that $R(t)$ can be written in terms of the trigamma function $\psi^{(1)}(z)=\int\limits_0^{\infty} dx x e^{-zx}/(1-e^{-x}) $ for $z\in\mathbb{C}$, $\Re (z)>0$ (see \cite{Handbook}) as
\begin{equation}
R(t)=\frac{a}{\beta^2}\Big(\psi^{(1)}\Big(\frac{1}{\beta w_c}+\imath\frac{t}{\beta}\Big) + \psi^{(1)}\Big(\frac{1}{\beta w_c}-\imath\frac{t}{\beta}+1\Big)\Big).
\end{equation}
Because $\psi^{(1)}(z)$ is analytic in any $K=\{ z=(x,y) \in\mathbb{C}| x>0, y\in I \}$, $R(t)$ is locally integrable in any compact interval $I\subset \mathbb{R}^+$.

Now we need to study the asymptotic behaviour. In order to do so,
it is sufficient to show that $|R(t)|=O(t^{-2})$ for $t\rightarrow \infty$.
To this end, using the recurrence formula $\psi^{(1)}(z+1)=\psi^{(1)}(z)-\frac{1}{z^2}$ we obtain
\begin{equation}
R(t)=\frac{a}{\beta^2}\Big(\psi^{(1)}\Big(\frac{1}{\beta w_c}+\imath\frac{t}{\beta}\Big) + \psi^{(1)}\Big(\frac{1}{\beta w_c}-\imath\frac{t}{\beta}\Big)-\Big(\frac{1}{\beta w_c}-\imath\frac{t}{\beta} \Big)^{-2}\Big).
\end{equation}
The last contribution is clearly integrable as $\Big|\frac{1}{\beta w_c}-\imath\frac{t}{\beta} \Big|^{-2}=O(t^{-2})$ for $t\rightarrow \infty$, whereas for the first two one can easily see that also
\begin{equation}
\psi^{(1)}\Big(\frac{1}{\beta w_c}+\imath\frac{t}{\beta}\Big) + \psi^{(1)}\Big(\frac{1}{\beta w_c}-\imath\frac{t}{\beta}\Big)=O(t^{-2}),
\end{equation}
using the asymptotic expansion $\psi^{(1)}(z)\approx\frac{1}{z}+\frac{1}{2z^2}+... $ for $z\rightarrow\infty$. 
\end{proof}

\section{Tensor network simulations}

\begin{figure}[t]
\centering
\includegraphics[width=10cm]{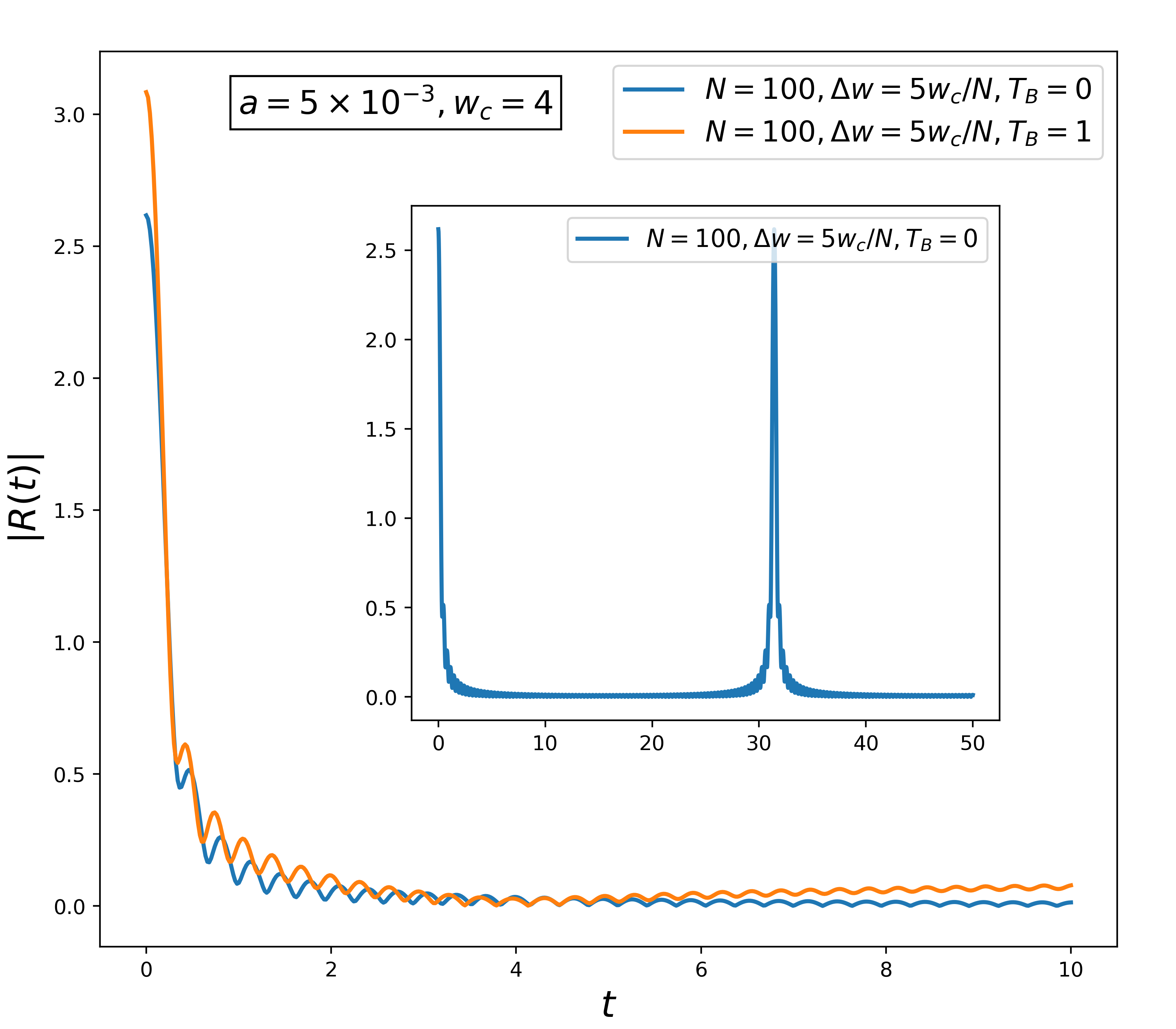}
\caption{Absolute value of the two-time correlation function $R(t)=tr_B[\tilde{B}(t)B\rho_{th}]$ from Eq.\eqref{correlation_discrete}. The spectral density parameters $a=5\times 10^{-3}, w_c=4$ are fixed, $N$ is the number of oscillators, $T_B$ their temperature, $\Delta w=w_{max}/N$ with $w_{max}$ the highest frequency of the environment modes. The main panel shows the rapid decay of the correlation function for a sufficiently small discretisation of the spectrum, while in the internal panel we can see the periodicity due to the finiteness of the environment, which   manifests for larger times.}
\label{fig:Correlation}
\end{figure}

In order to analyse the range of validity of our result, 
the master Eq.\eqref{time-dep-Markov} is benchmarked against numerically exact results given by tensor network simulations (\cite{TN_1, TN_2, TN_3}). In this section, we introduce the method employed for the numerical simulations. We model the total system  as a system S coupling to a set of $N$ harmonic oscillators in a star configuration, with total Hamiltonian in natural units ($\hbar=1$)
\begin{equation}
\label{general_spinbos}
H(t)=H_S(t)+\sum\limits_j^N w_j b_j^{\dagger}b_j + \sum\limits_j^N \sigma_x\otimes B_j,
\end{equation} 
with $B_j=g_j(b_j + h.c.)$, $H_S(t)$ generic  and
the bosonic creation/annihilation operators satisfy the usual canonical commutation relations $[b_i,b^{\dagger}_j]=\delta_{ij}$, $[b_i,b_j]=0$ , while $w_j>0$ without loss of generality. We are interested in studying the evolution of the density matrix $\rho(t)$, according to the Von Neumann equation $\frac{d}{dt}\rho(t)=-\imath[H(t),\rho(t)]$.
First, we consider some features concerning the environment.
\begin{enumerate}
\item \textit{Finiteness of the environment}. In order to achieve a full Markovian dynamics the environment must have
infinite degrees of freedom. In practice, given that we are considering a collection of $N$ harmonic oscillators, we can avoid the recurrence due to the finiteness of the environment bounding the time evolution with $t < t_{max}$, where $t_{max}$ has to be chosen in such a way that no periodicity of the correlation function $R(t)=tr_B[\tilde{B}(t)B\rho_{th}]$ is observed in the considered time scale. Fig.\eqref{fig:Correlation} shows an example of this behaviour.  For our simulations we consider $N=100$ fixed.
\item \textit{Choice for the coupling constants}.
The coefficients $g_j$ are determined by their relation with the spectral density. In the continuum limit,  $\sum\limits_j g_j^2 \equiv \int\limits_0^{\infty} dw J(w)$ with $J(w)=awe^{-w/w_c}$ ohmic spectral density used throughout this work. Let us introduce an upper limit $w_{max}$ for the integral, because of the finite number of the environment degrees of freedom, and replace it with a discrete sum as
\begin{equation}
\int\limits_0^{\infty} dw J(w)\approx\sum\limits_j a w_j e^{-w_j/w_c}\Delta w ,
\end{equation}
where $w_j=j\Delta w$, $\Delta w$ is the integration step and $j=1,...,N$ with $w_{max}=\Delta w N$, hence we obtain $g_j^2\approx a w_j e^{-w_j/w_c}\Delta w $. 
This approximation is affected by two sources of error. First, we have the error due to the truncation at $w_{max}$ 
\begin{equation}
\epsilon_1=\Big|\int\limits_0^{\infty}dw J(w)-\int\limits_0^{w_{max}}dw J(w)\Big|= a w_c^2\Big(\frac{w_{max}}{w_c}+1 \Big) e^{-w_{max}/w_c}.
\end{equation}
As long as $w_{max}\gtrsim w_c$, this error can be easily taken under control for sufficiently small $a$. The second one is determined by the discretisation of the integral, in particular using the general result for approximating integrals with the Riemann right sum
\begin{equation}
\epsilon_2=\Big|\int\limits_0^{w_{max}}dw J(w)-\sum\limits_{j=1}^N \Delta w J(w_j)\Big|\le \frac{w_{max}^2}{2N}\max\limits_{w\in[0,w_{\max}]}|\frac{d}{dw} J(w)|\equiv \frac{a w_{max}^2}{2N}, 
\end{equation}
for $w_{max}\gtrsim w_c$. If $\Delta w=w_{max}/N$ is too large, then from the physical point of view the system does not 'see' the environment as being made up of a continuum of modes and the dynamics ceases to be Markovian since the beginning of the evolution.

The integration step $\Delta w$ is determined from $w_{max}$, once $N$ and $w_c$ have been chosen, while the weak coupling condition is guaranteed by $a\sim 10^{-3}$ for example, as already explained in the previous sections.

\item \textit{Truncation of the environment Hilbert space}. The local Hilbert space dimension $d_j$ of the environment is chosen by looking at the thermal mean occupation value $\langle b_j^{\dagger}b_j \rangle_{th}$ as reference, therefore the low frequency oscillators (i.e. the ones with $j$ close to zero) have generally higher dimension, while the high frequency ones can be efficiently modelled with $d_j=2$ or $3$. 

\end{enumerate}

\begin{figure}[htb]
\centering
\includegraphics[width=10cm]{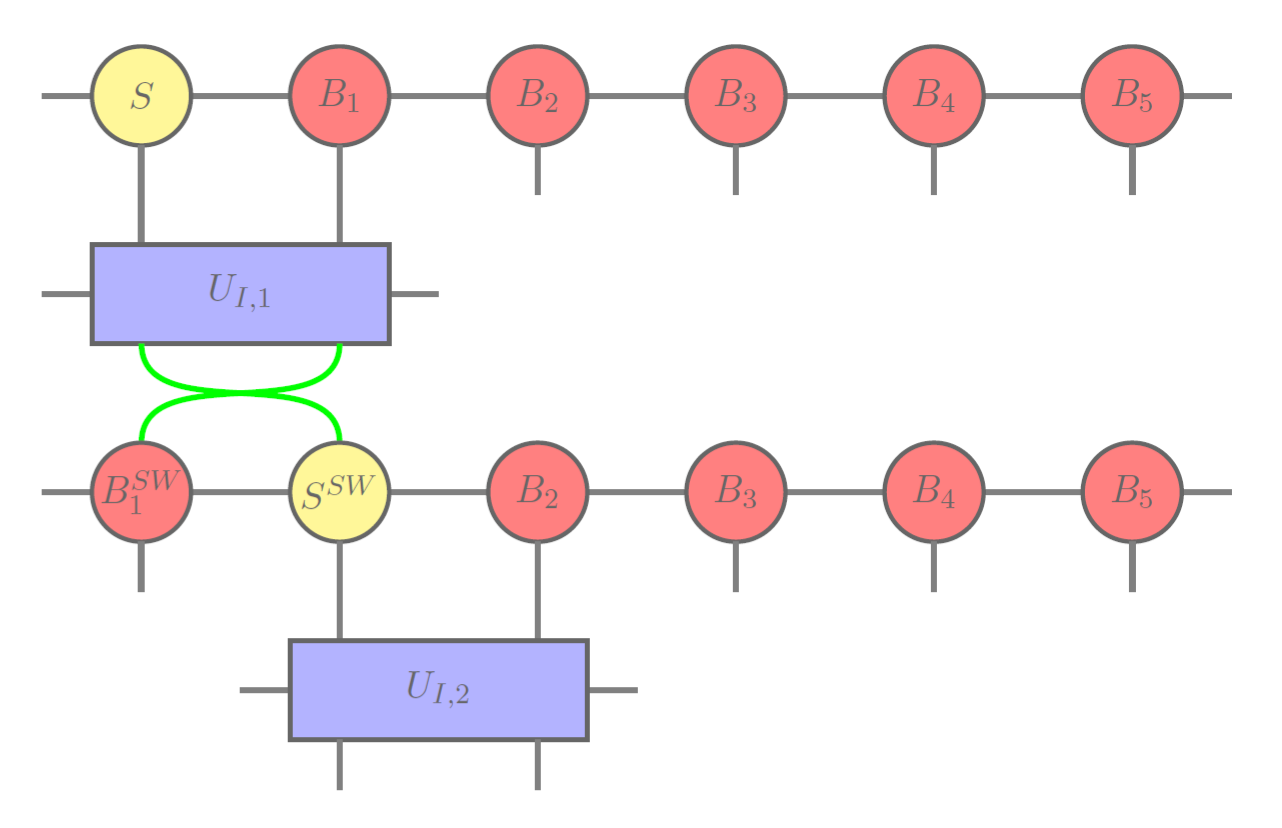}
\caption{Scheme of the application of unitary operators $U_{I,j}=e^{-\imath H_{I,j}\frac{\delta t}{2}} $ as described in the main text. The MPS is represented by the chain above, S in the initial site is the system while the remaining are $N=5$ harmonic oscillators. After the application of the first unitary gate  $U_{I,1}$ the physical legs are swapped (green curves). This way, after each step the system is moved to the right along the chain and the interaction is kept local. When the system arrives at the end of the chain, we repeat the procedure moving to the left and applying the inverse swaps, until we return to the initial configuration.}
\label{fig: TN_scheme}
\end{figure}

We study the dynamics by means of Time-Evolving Block Decimation on the purified density matrix, as described by the following procedure:
\begin{enumerate}
\item
The total time-evolution operator between $t$ and $t+\delta t$ is written in the typical form of a second-order TEBD (see \cite{TN_4,TN_5}):
\begin{equation}
U(t+\delta t,t)\approx \exp\Big\{-\imath\int\limits_{t+\delta t/2}^{t+\delta t} ds H_S(s)\Big\} e^{-\imath H_B \frac{\delta t}{2} }e^{-\imath H_I\delta t} e^{-\imath H_B\frac{\delta t}{2}}\exp\Big\{-\imath\int\limits_t^{t+\delta t/2} ds H_S(s)\Big\}
\end{equation}
where we can naturally factorise the terms made by internally commuting contributions as $e^{-\imath\frac{\delta t}{2} H_B}=\Bigg[ \prod\limits_{j=1}^N e^{-\imath\frac{\delta t}{2} H_{B j}}\Bigg]$. Note that we have also removed the time-ordering from $\mathcal{T}\exp\Big\{-\imath\int\limits_t^{t+\delta t/2} ds H_S(s)\Big\}$: as shown in \cite{TN_5}, this produces a further error bounded by  a contribution of order $\delta t ^2||H_S(t)||^2$. For our simulations we have used $\delta t\approx 10^{-2}$.

\item The interacting contribution is further split into $e^{-\imath H_I \delta t}=e^{-\imath  H_{I 1} \frac{\delta t}{2}}...e^{-\imath  H_{I N}\frac{\delta t}{2}}e^{-\imath H_{I N}\frac{\delta t}{2}}...e^{-\imath H_{I 1}\frac{\delta t}{2}}$ without producing additional errors.
This decomposition allows us to efficiently treat long-range interactions by means of swap gates after each contraction with the unitary operator $U_{I,j}=e^{-\imath H_{I j}\frac{\delta t}{2}}$, as described in Fig.\eqref{fig: TN_scheme} and more in detail in \cite{TN_7}.

\item As anticipated, we can consider a local purification $|\psi\rangle$ of the density matrix doubling each local Hilbert space by $\mathcal{H}_n\rightarrow \mathcal{H}_n\otimes \mathcal{H}_n^{aux}$ with an auxiliary Hilbert space isomorphic to the physical one. In this manner, we can safely keep under control the positivity of the density matrix after each compression.
Hence, the MPS representation of our state takes the generic form
\begin{equation}
|\psi\rangle=\sum\limits_{\sigma,\sigma^{'}}^{d_s}\sum\limits_{n_1,n_1^{'}}^{d_1}... \sum\limits_{n_N,n_N^{'}}^{d_N}S^{\sigma \sigma^{'}} B_1^{n_1 n_1^{'}}...B_N^{n_N n_N^{'}}\big[|\sigma\rangle\otimes|\sigma^{'}\rangle\big]\otimes ... \otimes\big[|n_N\rangle\otimes|n_N^{'}\rangle\big],
\end{equation}
where $d_s=2$ for a single qubit, $d_j$ the local $j-th$ bath Hilbert space dimension and naturally $\rho=tr_{aux}[|\psi\rangle\langle\psi|]$.
Let us show that by simply considering $h_n\rightarrow h_n\otimes 1_{d_n}^{aux}$, where $h_n$ is a generic spin or bath operator acting only on the local  Hilbert space $\mathcal{H}_n$, the dynamics generated is exactly the same.
For the free contributions to the Hamiltonian we have
\begin{equation}
\begin{cases}
U_S\rightarrow \mathcal{U}_S=\exp\Big\{-\imath\int\limits_t^{t+\delta t} ds H_S(s)\Big\}\otimes 1_{d_s}^{aux} , \\
U_{B j}\rightarrow \mathcal{U}_{B j}=e^{-\imath\frac{\delta t}{2} H_{B j}}\otimes 1_{d_j}^{aux} , \\
\end{cases}
\end{equation}
then clearly $|\psi\rangle\rightarrow |\psi^{'}\rangle=\mathcal{U}|\psi\rangle$  corresponds to $\rho\rightarrow \rho^{'}=U\rho U^{\dagger}$.
The same consideration can be applied to the interacting contributions. First, notice that for $H_{I j}=\sigma_x\otimes B_j$ is $U_{I j}=1_{d_s}\otimes \cos(B_j\delta t)-\imath\sigma_x\otimes \sin(B_j\delta t)$, hence if  $H_{I j} \rightarrow  \sigma_x\otimes 1_{d_s}^{aux}\otimes B_j\otimes 1_{d_j}^{aux}$ then $U_S\rightarrow \mathcal{U}_{I j}=1_{d_s}\otimes 1_{d_s}^{aux}\otimes \cos(B_j\delta t)\otimes 1_{d_j}^{aux}-\imath\sigma_x\otimes 1_{d_s}^{aux}\otimes \sin(B_j\delta t)\otimes 1_{d_j}^{aux}$, from which we easily prove
\begin{equation}
\begin{aligned}
\rho^{'}&=tr_{aux}\Big[\mathcal{U}_{I j}|\psi\rangle\langle\psi|\mathcal{U}_{I j}^{\dagger}\Big],\\
&=\sum\limits_{n,k}\langle \phi_n^{(S)}|\langle \phi_k^{(B_j)}|\mathcal{U}_{I j}|\psi\rangle\langle\psi|\mathcal{U}_{I j}^{\dagger}|\phi_k^{(B_j)}\rangle |\phi_n^{(S)}\rangle ,\\
&=U_{I j}\sum\limits_{n,k}\langle \phi_n^{(S)}|\langle \phi_k^{(B_j)}|\psi\rangle\langle\psi|\phi_k^{(B_j)}\rangle |\phi_n^{(S)}\rangle U_{I j}^{\dagger},\\
&=U_{I j}tr_{aux}[|\psi\rangle\langle\psi|]U_{I j}^{\dagger} ,
\end{aligned}
\end{equation}
having introduced two orthonormal bases $\{|\phi_n^{(S)}\rangle \}_n,\{|\phi_n^{(B_j)}\rangle \}_n$ for, respectively, the auxiliary Hilbert spaces of $\mathcal{H}_S,\mathcal{H}_j$ and using the explicit form of $\mathcal{U}_{I j}$.
\item The initial condition is simply given by the vectorisation of the square root of the initial density matrix. For example, in the case of initial preparation of a single qubit in the ground state,
\begin{equation}
|\psi(0)\rangle=[|0\rangle\otimes |0\rangle]\otimes \Big[ \sum\limits_k \sqrt{p_k^{(1)}}|\phi_k^{(1)}\rangle\otimes|\phi_k^{(1)}\rangle\Big]\otimes...\otimes\Big[ \sum\limits_k \sqrt{p_k^{(N)}}|\phi_k^{(N)}\rangle\otimes|\phi_k^{(N)}\rangle\Big],
\end{equation}
with the thermal state for the $j$-th bath in the form $\rho_{th}^{(j)}=\sum\limits_k p_k^{(j)}|\phi_k^{(j)}\rangle\langle\phi_k^{(j)}|$.
\end{enumerate}

\bibliographystyle{quantum}
\bibliography{mybiblio}

\end{document}